\theoremstyle{plain}
\newtheorem{theorem}{Theorem}
\newtheorem{lemma}{Lemma}
\newtheorem{proposition}{Proposition}
\newtheorem{corollary}{Corollary}
\newtheorem*{assumption*}{assumption}
\theoremstyle{remark}
\newtheorem{remark}{Remark}
\newtheorem{definition}{Definition}
\newtheorem{assumption}{Assumption}
\DeclareMathOperator*{\argmin}{arg\,min}
\def\P{P}
\def\Q{Q}
\def\Hb{G}
\def\Hc{F}
\def\R{\mathbb{R}}
\def\Re{\overline{\mathbb{R}}}
\def\Ms{\mathcal{M}}
\def\Mf{\mathcal{M}^{+}}
\def\Mp{\mathcal{P}}
\def\Dom{\operatorname{Dom}}
\def\mone{x}
\def\mtwo{z}
\def\monet{x_t}
\def\mtwot{z_t}
\def\mthree{v}
\def\mfour{y}
\def\t{\theta}
\def\tr{\theta}
\def\T{\Theta}
\def\dt{\mathrm{d}\t}
\def\d{d}
\def\rll{\delta_0}
\def\KL{\operatorname{KL}}
\def\journal@name{}
\begin{document}

\begin{frontmatter}
\title{Hamiltonian Dynamics of Bayesian Inference\\Formalised by Arc Hamiltonian Systems}
\runtitle{Hamiltonian Dynamics of Bayesian Inference}

\begin{aug}
\author{\fnms{Takuo}~\snm{Matsubara}}
\address{School of Mathematics, The University of Edinburgh, \href{mailto:takuo.matsubara@ed.ac.uk}{takuo.matsubara@ed.ac.uk}}
\end{aug}

\begin{abstract}
	This paper advances theoretical understanding of infinite-dimensional geometrical properties associated with Bayesian inference.
    First, we introduce a novel class of infinite-dimensional Hamiltonian systems for saddle Hamiltonian functions whose domains are metric spaces.
    A flow of this system is generated by a Hamiltonian arc field, an analogue of Hamiltonian vector fields formulated based on (i) the first variation of Hamiltonian functions and (ii) the notion of arc fields that extends vector fields to metric spaces.
	We establish that this system obeys the conservation of energy.
    We derive a condition for the existence of the flow, which reduces to local Lipschitz continuity of the first variation under sufficient regularity.
    Second, we present a system of a Hamiltonian function, called the minimum free energy, whose domain is a metric space of negative log-likelihoods and probability measures.
    The difference of the posterior and the prior of Bayesian inference is characterised as the first variation of the minimum free energy.
	Our result shows that a transition from the prior to the posterior defines an arc field on a space of probability measures, which forms a Hamiltonian arc field together with another corresponding arc field on a space of negative log-likelihoods.
	This reveals the underlying invariance of the free energy behind the arc field.
\end{abstract}

%

\end{frontmatter}



\section{Introduction} \label{sec:introduction}

In Bayesian inference, statisticians elicit a prior distribution $P$ over a parameter space $\Theta$ of a model that reflects their belief in an appropriate model to describe a phenomenon of interest.
Observing data realised from the phenomenon, they update the prior distribution $P$ to the posterior distribution $P_*$ by Bayes'~rule.
Denote the negative log-likelihood by $f: \Theta \to \R$, making the associated model and data implicit\footnote{If the model $p(x \mid \theta)$ and data $\{ x_i \}_{i=1}^{n}$ are explicit, we have $f(\theta) = - \sum_{i=1}^{n} \log p(x_i \mid \theta)$. Since we assume no specific form of the model and input, we denote the negative log-likelihood simply by $f$.}.
Most generally, the posterior $P_*$ is a measure absolutely continuous to $P$ that has the Radon–Nikodym derivative
\begin{align}
	\frac{d P_*}{d P}(\t) = \frac{ \exp(- f(\t)) }{ \int_\Theta \exp(- f(\t)) d P(\t) } . \label{eq:posterior}
\end{align}
If the prior $P$ admits a probability density $p$, the posterior $P_*$ can be expressed as a probability density $p_*(\theta) \propto \exp( - f(\theta) ) p(\theta)$ as usual.
The measure $P_*$ of the form \eqref{eq:posterior} is also known as a \emph{Gibbs measure} for a \emph{potential} $f$ with a reference measure $P$ \citep{Georgii2011}.

The posterior, or the Gibbs measure, has a variational characterisation whose origin may be traced back to \cite{Gibbs1902}.
It is a unique solution of the following variational problem:
\begin{align}
	P_* =  \argmin_{Q \in \Mp} \mathcal{F}(Q) \quad \text{for} \quad \mathcal{F}(Q) = \int_{ \Theta} f(\tr) \d Q(\tr) + \KL(Q, P) \label{eq:fp_free_energy}
\end{align}
where $\mathcal{P}$ denotes a set of all probability measures on $\Theta$ and $\KL(Q, P)$ denotes the relative entropy of $Q$ from $P$.
The functional $\mathcal{F}$ is called the \emph{free energy}\footnote{More precisely, \cite{Jordan1997} called $\mathcal{F}$ the `generalised' free energy in a sense that a reference measure $P$ is used, where the (Helmholtz) free energy is recovered when $P$ is the Lebesgue measure.} in light of the thermodynamical significance.
This characterisation has been long known as the \emph{Gibbs variational principle} \citep{Ellis2006} mainly in the context of statistical mechanics \citep{Ruelle1967,Ruelle1969} and large deviation theory \citep{Dembo2009,Dupuis2011}.
Later, it was repeatedly rediscovered and has played a pivotal role in several recent advances of Bayesian inference, such as variational Bayes methods \citep{Attias1999,Blei2017}, PAC-Bayes analysis \citep{Catoni2007,Guedj2019}, and generalised Bayesian inference \citep{Bissiri2016}.
This paper establishes (i) a further variational characterisation of the posterior $P_*$ and (ii) its Hamiltonian dynamical interpretation.

Hamiltonian systems are a fundamental description of dynamical phenomena that obey the law of conservation of energy \citep{Goldstein2002}.
A flow of a Hamiltonian system is generated by the Hamiltonian vector field determined by the gradient of a Hamiltonian function.
Under sufficient regularity, the value of the Hamiltonian function remains invariant along the flow \citep{Abraham1987}.
While Hamiltonian systems are well-established for finite-dimensional topological spaces \citep{Arnold1978}, this work deals with saddle (i.e.~concave-convex) Hamiltonian functions whose domains are infinite-dimensional metric spaces.
A chief challenge in formulating Hamiltonian systems in such a setting is to have (a) an appropriate notion that acts as a gradient in an infinite-dimensional domain and (b) the well-defined evolution of a system in a metric topology.
To date, Hamiltonian systems have been extended to Banach spaces---or symplectic manifolds built on smooth Banach manifolds \citep{Lang1985} more generally---by works of \cite{Chernoff1974,Marsden1983}.
A formulation based on a norm topology is however not suitable for our main application involving dynamics of measures, as illustrated in \Cref{sec:background}.
This motivates the development of a novel class of Hamiltonian systems, termed \emph{arc Hamiltonian systems}, using (a) the \emph{first variation} of Hamiltonian functions and (b) the notion of \emph{arc fields} \citep{Calcaterra2000} that extends vector fields to metric spaces.
A flow of an arc Hamiltonian system is generated by the \emph{Hamiltonian arc field} determined by the first variation of a Hamiltonian function.
We provide a condition for the existence of the flow over $[0, \infty)$, and prove that this system obeys the conservation of energy.

We use arc Hamiltonian systems to study an infinite-dimensional geometrical property associated with Bayesian inference.
Denote $f^c(\t) := f(\t) - \int_\T f(\tr) \d P(\tr)$ for simpler presentation.
We shall see that the difference of the posterior $P_*$ and the prior $P$ is characterised as the first variation of the \emph{minimum free energy} $H(f, P)$ with respect to $f$:
\begin{align}
	P_* - P = \frac{\partial}{\partial f} H(f, P) \quad \text{for} \quad H(f, P) = - \log\left( \int_{ \Theta } \exp\left( - f^c(\tr) \right) \d P(\tr) \right) \label{eq:Gibbs_formula}
\end{align}
where the first variation here, formally defined in \Cref{sec:methodology}, is similar to one often considered in the context of gradient flows \citep{Santambrogio2015}.
We shall show that the first variation $(\partial / \partial f) H(f, P)$, coupled with the negative of the other first variation $- (\partial / \partial P) H(f, P)$ with respect to $P$, defines an arc Hamiltonian field on a metric space $(M, d)$ of negative log-likelihoods $f$ and probability measures~$P$.
It generates a flow $t \mapsto (f_t, P_t)$ of the system that follows the first-order approximation similar to Hamilton's equations in vector spaces:
\begin{align}
	(f_{t+s}, P_{t+s}) \approx (f_t, P_t) + s \left( - \frac{\partial}{\partial P} H(f_t, P_t), \frac{\partial}{\partial f} H(f_t, P_t) \right) . \label{eq:first_order_approximation}
\end{align}
Our formulation capitalises on a property that the right-hand side stays within the domain $M$ of the energy $H(f, P)$ for all $s \in [0, 1]$, due to the form of the first variations.
In such cases, the first-order approximation \eqref{eq:first_order_approximation} can determines a flow in a metric topology.

The first-order approximation \eqref{eq:first_order_approximation} of $P_{t+s}$ driven by the first variation $(\partial / \partial f) H(f_t, P_t) = P_* - P_t$ is a transition from $P_t$ to $P_*$ over $s \in [0, 1]$ in the mixture geodesic $(1 - s) P_t + s P_*$, where the posterior $P_*$ is defined for $(f_t, P_t)$.
By the belief updating mechanism of Bayesian inference, we can assign this transition $(1 - s) P + s P_*$ over $s \in [0, 1]$ to every probability measure $P$.
Our result confers the geometrical significance of such an assignment at every $P$ as a Hamiltonian arc field, by introducing another transition $f - s (\partial / \partial P) H(f, P)$ driven by the other first variation $(\partial / \partial P) H(f, P)$.
In particular, this reveals an underlying invariance of the minimum free energy $H(f, P)$ behind the arc field induced by the belief updating mechanism.
See \Cref{fig:illustration} for illustration of this dynamics.

Certainly, there can exist various geodesics to define a transition from $P$ to $P_*$.
Finally, it is intriguing to highlight a contrast with a Wasserstein gradient flow of the free energy $\mathcal{F}$ \citep{Jordan1997}. 
The Wasserstein dynamics describes the evolution of a probability measure $Q_t$ over time $t$ that minimises the free energy $\mathcal{F}(Q_t) \to \mathcal{F}(P_*)$ for a fixed potential $f$ with reference measure $P$.
We have the same Wasserstein dynamics for a class of free energies $\mathcal{F}$ of potentials $f - C$ up to a constant $C$\footnote{This is because the minimiser $P_*$ is invariant to any constant term of the potential $f$ due to the normalisation that cancels out the constant $C$ as follows: $\frac{d P_*}{d P}(\t) = \frac{ \exp(- f(\t) + C ) }{ \int_\Theta \exp(- f(\t) + C) d P(\t) } = \cancel{\frac{\exp(C)}{\exp(C)}} \frac{ \exp(- f(\t) ) }{ \int_\Theta \exp(- f(\t)) d P(\t) }$.}.
We suppose that the free energy $\mathcal{F}$ is defined with a `mean-zero' potential $f^c(\t) =  f(\t) - C$ with $C = \int_\T f(\tr) \d P(\tr)$.
The minimum free energy $H(f, P)$ equals to the free energy $\mathcal{F}(P_*)$ at the minimiser $P_*$:
\begin{align*}
	H(f, P) = - \log\left( \int_{ \Theta } \exp( - f^c(\tr) ) \d P(\tr) \right) = \int_{ \Theta} f^c(\tr) \d P_*(\tr) + \KL(P_*, P) = \mathcal{F}(P_*) 
\end{align*}
where this equality is known as the Gibbs or Donsker-Varadhan variational formula \citep{Dupuis2011}.
In contrast to the Wasserstein dynamics that minimises the free energy $\mathcal{F}(Q_t) \to \mathcal{F}(P_*)$, our result reveals the existence of the dynamics of the potential $f_t$ and reference measure $P_t$ that keeps the minimised free energy $H(f_t, P_t) = \mathcal{F}(P_*)$ constant.

The rest of the paper is structured as follows.
\Cref{sec:background} introduces basic notations and useful preliminaries for the development of arc Hamiltonian systems.
In \Cref{sec:methodology}, we establish the framework of arc Hamiltonian systems, providing the theory on the existence and the energy-conserving invariance of their flow.
The general framework of arc Hamiltonian systems may be of independent interest in many fields, such as convex analysis, information theory, and statistical physics.
In \Cref{sec:application}, we present the system of the minimum free energy formalised by arc Hamiltonian systems.
We draw our conclusion in \Cref{sec:conclusion}. 

\begin{figure}
	\centering
	\includegraphics[width=\textwidth]{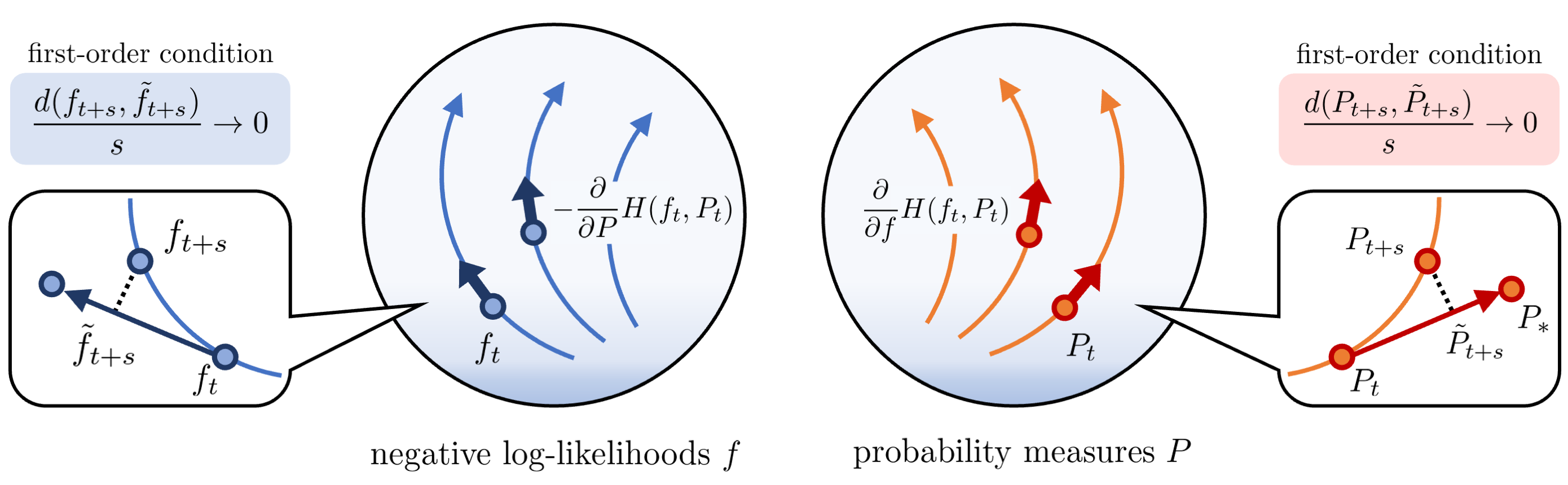}
	\caption{Illustration of the arc Hamiltonian system of the minimum free energy. Any generated flow $t \mapsto (f_t, P_t)$ of the system over $t \in [0, \infty)$ conserves the minimum free energy $H(f_t, P_t)$.} \label{fig:illustration}
\end{figure}


\section{Background} \label{sec:background}

We provide useful preliminaries to develop arc Hamiltonian systems.
\Cref{sec:notation} introduces a set of basic notations and terminologies.
\Cref{sec:hamiltonian_system} briefly recaps a standard Hamiltonian system in the Euclidean space $\R^d$, illustrating a few limitations of an infinite-dimensional extension based on a norm topology for our application.
Finally, \Cref{sec:arcfields} recaps the notion of arc fields established by \cite{Calcaterra2000}, which we capitalise on to formulate the evolution of an arc Hamiltonian system in a metric topology.

\subsection{Notations and Terminologies} \label{sec:notation}

We introduce basic notations and terminologies.

\paragraph*{Functional-Analytic}
By a ``curve", we mean a map from a time interval to a metric space, where it can be equivalently thought of as a time-dependent variable in a metric space.
We use subscripts to indicate the time argument of a curve; for example $t \mapsto \mone_t$ denotes a curve.
A curve is said continuous if it is a continuous map.
We denote by $\lim_{s \to 0^+}$ the limit with respect to a scalar $s \in \R$ approaching to $0$ from the right.
For any metric space $(M, d)$, we denote by $B_r(\mone)$ an open ball in $(M, d)$ around a point $\mone \in M$ with a radius $r$.
For any map $\Phi: M \times [0, 1] \to M$ defined for a set $M$ and the time interval $[0, 1]$, we use subscripts to indicate the time argument of $\Phi$, that is, $\Phi_s(x)$ denotes the value at $(x, s) \in M \times [0, 1]$.

\paragraph*{Measure-Theoretic}
Let $\T$ be a separable completely metrizable space (i.e.~a Polish space).
By a ``measure", we mean a non-negative Borel measure on $\T$. 
By a ``signed measure", we mean a signed Borel measure on $\T$.
Any signed measure $P$ admits the Jordan decomposition $P = P^+ - P^-$ by two uniquely associated (non-negative) measures $P^+$ and $P^-$. 
For each signed measure $P$, we define a measure $| P |$---often referred to as the variation of $P$--- by $| P | := P^+ + P^-$.
This means that $\int_{O} \d | P |(\t) = \int_{O} \d P^+(\t) + \int_{O} \d P^-(\t)$ for any Borel set $O \subseteq \T$.
A signed measure $P$ is said finite if and only if the measure $| P |$ is finite.
If a measure $P$ admits the Radon-Nikodym derivative $h$ with respect to a measure $Q$, we indicate the relationship by $d P(\t) = h(\t) d Q(\t)$ for better presentation.
Denote by $\Ms$, $\Mf$, and $\Mp$ a set of all finite signed measures, all finite measures, and all probability measures.

\subsection{Hamiltonian Systems} \label{sec:hamiltonian_system}

Hamiltonian systems have an extensive history of the development; we refer readers to \cite{Goldstein2002} for a detailed introduction of classical mechanics and to \cite{Arnold1978} for the rigorous mathematical foundation.
Hamiltonian systems enable us to describe the evolution of the objects through the first-order evolution equation, by introducing a variable called the momentum.
For illustration, consider an object moving in $\R^d$ whose location at time $t$ is denoted $\monet \in \R^d$. 
The momentum, denoted $\mtwot \in \R^d$, can be associated with a physical quantity calculated by the product of the mass and velocity at time $t$.
A Hamiltonian function $H: \R^d \times \R^d \to \R$ corresponds to the total energy that the system has at each state $(\mone, \mtwo)$.
How the pair of the location and momentum $(\monet, \mtwot)$ evolves over time $t$ is determined through \emph{Hamilton's equation}:
\begin{align}
	\frac{d}{d t} ( \monet, \mtwot ) = ( - \nabla_{2} H(\monet, \mtwot), \nabla_{1} H(\monet, \mtwot) ) \label{eq:hamiltonian_eq_motion}
\end{align}
where $\nabla_{1}$ and $\nabla_{2}$ denote the gradient operator with respect to the first and second argument.
Hamilton's equation \eqref{eq:hamiltonian_eq_motion} is a differential equation $(d / dt) ( \monet, \mtwot ) = F(\monet, \mtwot)$ driven by the \emph{Hamiltonian vector field} $F(\mone, \mtwo) = ( - \nabla_{2} H(\mone, \mtwo), \nabla_{1} H(\mone, \mtwo) )$, which can be solved as an initial value problem.
Under sufficient regularity, it generates a unique solution flow $t \mapsto ( \monet, \mtwot )$ from any initial state $(\mone_0, \mtwo_0)$ in the domain $\R^d \times \R^d$.
In particular, any flow $t \mapsto ( \monet, \mtwot )$ generated by the Hamiltonian vector field conserves the energy, meaning that the value of the Hamiltonian function $H(\monet, \mtwot)$ remains constant along the flow.
\Cref{fig:pendulum} illustrates a Hamiltonian system of a simple pendulum; it can be observed that the values of the Hamiltonian function remain constant along every generated flow.

\begin{figure}[t]
	\hfill
	\includegraphics[width=150pt]{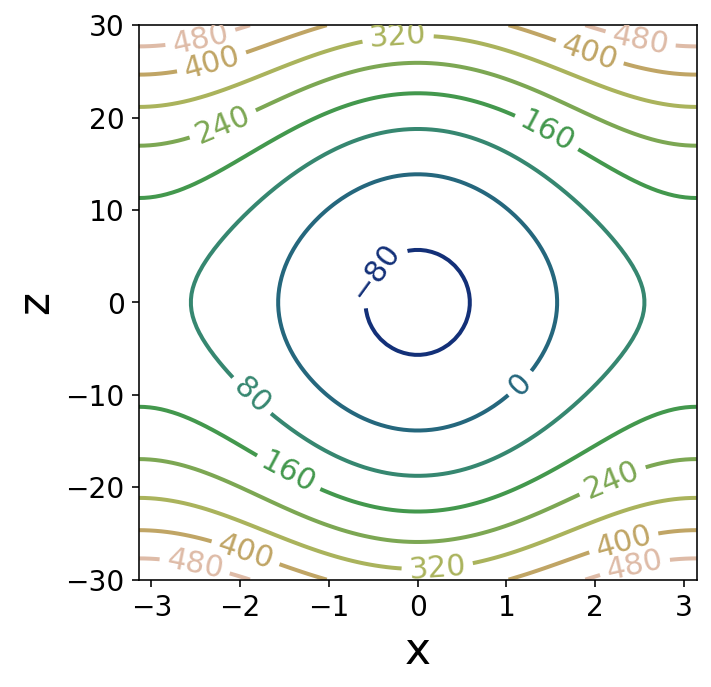}
	\hfill
	\includegraphics[width=150pt]{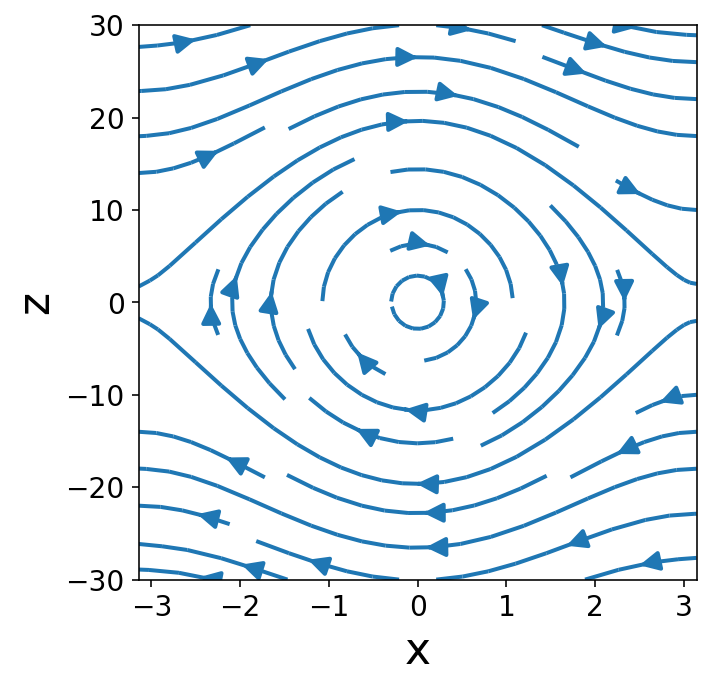}
	\hfill
	\hfill
	\caption{The Hamiltonian system of a simple pendulum whose rod length and mass are both $1$. (Left) The Hamiltonian function $H(x, z) = z^2 / 2 - g^2 \cos(x)$ for each angle $x \in [- \pi, \pi]$ and momentum $z \in \R$ of the pendulum, where $g$ is the standard acceleration of gravity. (Right) A flow by Hamilton's equation $(d / d t) (x_t, z_t) = (z_t, - g^2 \sin(x_t))$ from each initial state.} \label{fig:pendulum}
\end{figure}

\cite{Chernoff1974}---and later \cite{Marsden1983} for application to mechanics of elastic materials---established an infinite-dimensional extension of Hamiltonian systems to Banach spaces and, more generally, to symplectic manifolds built on smooth Banach manifolds.
Here, we do not consider the case of Banach manifolds.
As \cite{Kriegl1997} and \cite{Alexander2022} pointed out, there are known limitations of Banach manifolds.
For example, \cite{Eells1970} proved that an infinite-dimensional smooth Banach manifold is smoothly embedded as an open set in a Banach space used to define the atlas\footnote{Although an atlas of a Banach manifold can be defined using multiple Banach spaces in principle, the Banach spaces turn out to be all isomorphic in most cases to guarantee Fr\'{e}chet-differentiability. This means that it suffices to use one Banach space to define an atlas of a Banach manifold \citep[see][p.22]{Lang1985}.}.
Analysis on smooth Banach manifolds then reduces to that on Banach spaces.

Consider a setting where an object of interest $(\monet, \mtwot)$ belongs to a Banach space $E_1 \times E_2$.
In the extended Hamiltonian system, the evolution of the system is governed by an evolution equation $(d / dt) ( \monet, \mtwot ) = F(\monet, \mtwot)$ driven by a map $F: E_1 \times E_2 \to E_1 \times E_2$ which depends on the gradient of the Hamiltonian function $H(\mone, \mtwo)$ analogous to \eqref{eq:hamiltonian_eq_motion} but taken in a sense of the Fr\'{e}chet derivative under appropriate conditions of $E_1$ and $E_2$.
By the Cauchy-Lipschitz theorem \citep[e.g.][]{Brezis2011}, the evolution equation admits a unique solution from any initial state if the map $F$ is Lipschitz continuous in $E_1 \times E_2$.
However, this formulation based on a norm topology is not suitable for dynamics involving the set of measures $\Mf$.
In a Banach space of signed measures $\Ms$ under e.g.~the total variation norm $\| \cdot \|_\Ms$ \citep{Cohn2013}, it is easy to prove that the subset $\Mf$ has no interior point.
That is, if we use the norm topology on $\Ms$, no open set can be defined inside $\Mf$ to use important topological notions---such as the Fr\'{e}chet derivative or local Lipschitz continuity---of functions defined only on $\Mf$.
One trivial approach to define open sets inside $\Mf$ is to use a metric topology on $\Mf$ instead of the norm topology introduced to the vector space $\Ms$.
For example, the set $\Mf$ will be a metric space $(\Mf, d)$ under the total variation metric $d$ on $\Mf$ defined by restricting the norm distance $\| P - Q \|_\Ms$ to $\Mf \times \Mf$.
This motivates us to introduce a metric on the domain of interest and define the evolution within the metric space.
To achieve this, we capitalise on the notion of arc fields \citep{Calcaterra2000} recapped next.

\subsection{Arc Fields} \label{sec:arcfields}

The notion of arc fields \citep{Calcaterra2000}, that extends vector fields to metric spaces, is used to formulate an analogue of Hamiltonian vector fields and Hamilton's equation in metric spaces.
The underlying idea behind arc fields has a close connection to mutational equations \citep{Aubin1999,Lorenz2010} whose common application includes the evolution of set-valued objects and objects that belongs to complicated regions of vector spaces \citep{Aubin1993}.
To illustrate the idea, consider a vector field $F: E \to E$ in a Banach space $( E, \| \cdot \|_E )$ and a solution curve $t \mapsto \mone_t$ of the corresponding evolution equation $(d / dt) x_t = F(x_t)$.
At each $x \in E$, define a curve $\Phi_s(\mone) = x + s F(x)$ over time $s \in [0, 1]$.
Informally, the evolution equation can be thought of as the first-order approximation of the solution curve $\mone_{t + s} \approx \Phi_s(\mone_t)$ at each $t$ over small $s$, specified by the curve $s \mapsto \Phi_s(\mone_t)$:
\begin{align*}
	\lim_{s \to 0} \left \| \frac{x_{t+s} - x_t}{s} - F(x_t) \right\|_E = \lim_{s \to 0} \frac{ \| x_{t+s} - \Phi_s(\mone_t) \|_E }{| s |} = 0 . 
\end{align*}
A key observation is that the vector-space structure of the set $E$---where the curve $t \mapsto x_t$ belongs---is not essential in this formulation.
In the above condition, consider that the vector space $E$ is replaced with any set, such as $\Mf$, and the norm $\| x_{t+s} - \Phi_s(\mone_t) \|_E$ is replaced with a metric $d(x_{t+s}, \Phi_s(\mone_t))$ on the replaced set $E$.
The above condition can be still well-defined if we have a premise that the curve $s \mapsto \Phi_s(\mone_t)$ stays within the set $E$ over all $s$ in some interval.
This observation leads to the extension of vector fields to metric spaces.

\begin{figure}[t]
	\centering
	\includegraphics[height=125pt]{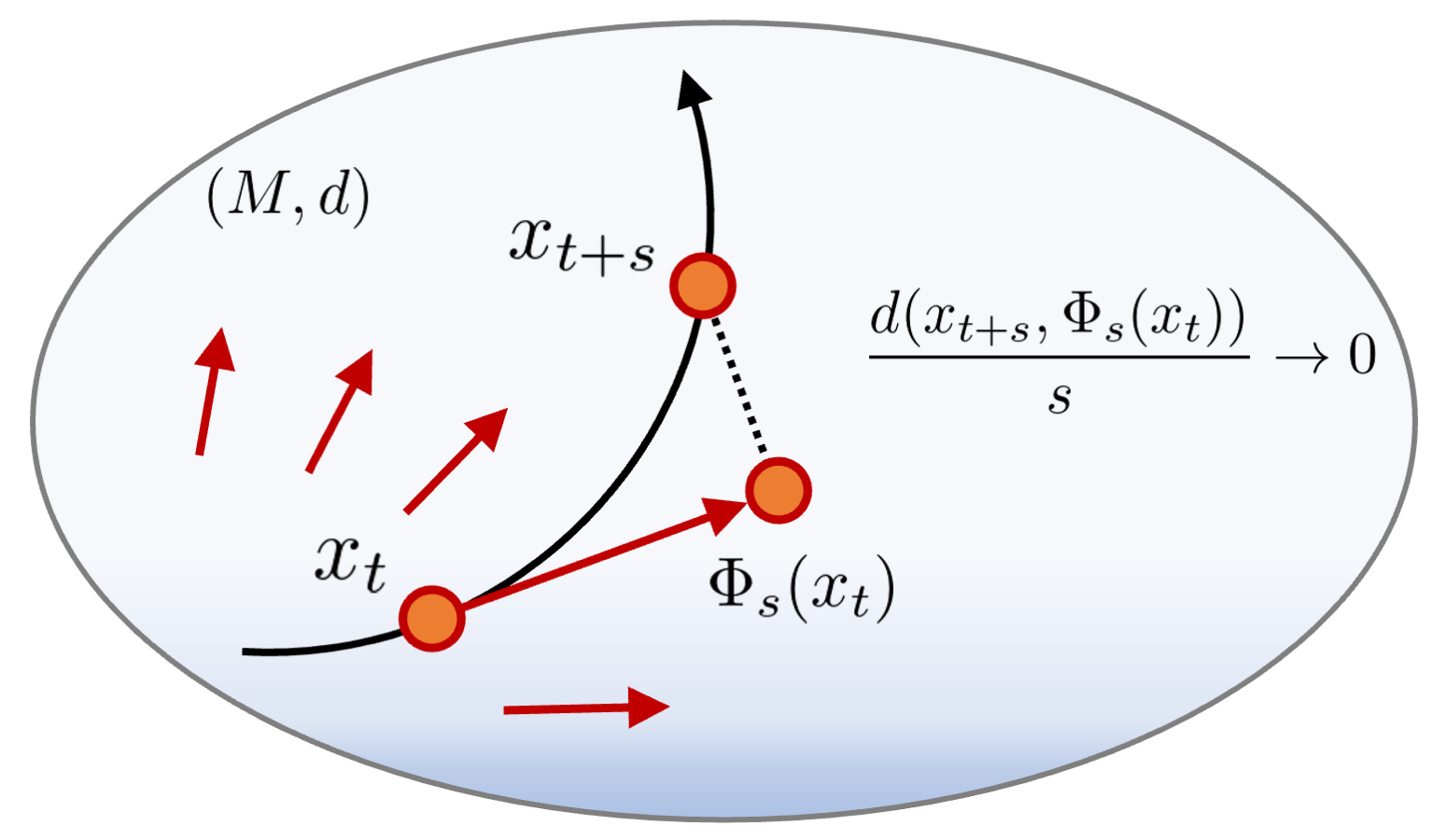}
	\caption{Illustration of the generation of a flow $t \mapsto \mone_t$ in a metric space $(M, d)$ by an arc field $\Phi: M \times [0, 1] \to M$. A local approximation of the flow at time $t$ is specified by the curve $s \mapsto \Phi_s(\mone_t)$ assigned to the point $\mone_t$ by the arc field, under the condition that the approximation error $d(\mone_{t+s}, \Phi_s(\mone_t))$ converges faster than the linear order $s$ at each $\mone_t$.} \label{fig:arcfields}
\end{figure}

Given a metric space $(M, d)$, consider a map $\Phi: M \times [0, 1] \to M$ that satisfies $\Phi_0(x) = x$ for all $x \in M$, where we recall that $\Phi_s(x)$ denotes the value of $\Phi$ at $(x, s) \in M \times [0, 1]$.
To each point $\mone \in M$, this map $\Phi$ is considered to assign a curve $s \mapsto \Phi_s(x)$ over $s \in [0, 1]$ in $(M, d)$ that starts from $\mone$.
\cite{Calcaterra2000} was concerned with the existence of a curve $t \mapsto x_t$ in $(M, d)$ that satisfies the following first-order approximation at each $t$:
\begin{align}
	\lim_{s \to 0^+} \frac{ d( x_{t+s} , \Phi_s( x_t ) ) }{ s } = 0 . \label{eq:emutationeq}
\end{align}
To study the existence, \cite{Calcaterra2000} focused on a class of sufficiently regular maps $\Phi: M \times [0, 1] \to M$, called \emph{arc fields}, that satisfies a continuity condition in \eqref{eq:arc_condition} below.
A curve $t \mapsto x_t$ in $(M, d)$ from some initial point $x_0$ that satisfies the condition \eqref{eq:emutationeq} of an arc field $\Phi$ over some interval $[0, T)$ is then called a \emph{solution} of the arc field $\Phi$.

\begin{definition}[Arc Field] \label{def:cb_definition_21}
	An arc field on a metric space $(M, d)$ is a map $\Phi: M \times [0, 1] \to M$ that satisfies that $\Phi_0(x) = x$ for all $x \in M$ and that
	\begin{align}
		\rho( \mone ) := \sup_{ t, s \in [0, 1] \text{~s.t.~} t \ne s } \frac{ d( \Phi_t(\mone), \Phi_s(\mone) ) }{ | t - s | } < \infty \quad \text{and} \quad \bar{\rho}(\mone, r) := \sup_{ \mthree \in B_r(\mone) } \rho( \mthree ) < \infty \label{eq:arc_condition}
	\end{align}
	at each $\mone \in M$ for $r > 0$ sufficiently small.
\end{definition}

\noindent
See \Cref{fig:arcfields} for illustration of a solution of an arc field $\Phi$.
See \Cref{sec:appendix_0} for a summary of the theoretical results of \cite{Calcaterra2000} relevant to the development of arc Hamiltonian systems.
A flow of an arc Hamiltonian system is formulated as a unique continuous solution of the Hamiltonian arc field $\Phi$ associated with the system.


\section{Arc Hamiltonian Systems} \label{sec:methodology}

Extending Hamiltonian systems to infinite-dimensional metric spaces may open up a new horizon for formulating the dynamics of intriguing phenomena.
A focus in this work is on saddle Hamiltonian functions whose domains are equipped with metrics.
This section will establish the general framework of arc Hamiltonian systems.
The first variation (c.f.~\Cref{sec:preliminary}) of a Hamiltonian function determines an arc field, called \emph{Hamiltonian arc field} (c.f.~\Cref{sec:metric_hamiltonian_system}), that generates a flow of the system in the metric space.
It will be shown that the generated flow obeys the conservation of energy.

In \Cref{sec:preliminary}, we present the definition of the first variation based on a convex-analytic formulation.
In \Cref{sec:metric_hamiltonian_system}, we provide the rigorous construction of arc Hamiltonian systems. 
\Cref{sec:existence_flow} shows the existence of a flow generated by a Hamiltonian arc field, followed by \Cref{sec:energy_conservation} establishing the conservation of energy by the generated flow.
Under sufficient regularity, a condition of the existence of the energy-conserving flow reduces to local Lipschitz continuity of the first variation of a Hamiltonian function.
Before moving on to the first subsection, we introduce convex-analytic preliminaries below.

\paragraph*{Convex-Analytic Preliminaries}
Define the extended real number $\Re := \R \cup \{ -\infty, \infty \}$.
It is standard convention in convex analysis to consider extended-valued functions from a vector space $E$ to $\Re$.
We denote by $\Dom(f)$ a subset of $E$ where an extended-valued function $f: E \to \Re$ takes a finite value in $\R$.
An extended-valued function $f: E \to \Re$ is said \emph{convex} if (i) $\Dom(f)$ is a non-empty convex set and (ii) $f( (1 - \lambda) \mone + \lambda \mthree ) \le (1 - \lambda) f( \mone ) + \lambda f(\mthree)$ holds for any $\lambda \in [0,1]$ and $\mone, \mthree \in \Dom(f)$. 
An extended-valued function $f: E \to \Re$ is said \emph{concave} if (i) $-f$ is convex.
Any function $f: D \to \R$ defined on a subset $D$ of $E$ can be identified as an extended-valued function $f: E \to \Re$ by setting $f(\mone) := \infty$ for all $\mone \in E \setminus D$.

\subsection{First Variations} \label{sec:preliminary}

The gradient of functions in a general topological space is characterised as an element of the topological dual.
For example, the topological dual of $\R^d$ is identical to $\R^d$ itself and hence the gradient of functions in $\R^d$ belongs to $\R^d$.
First, we recap definition of \emph{dual pair} \citep[Section~5.14]{Aliprantis2006} that plays a fundamental role in defining a gradient in infinite-dimensional convex analysis.

\begin{definition}[Dual Pair] \label{def:dual_pair}
	A \emph{dual pair}, denoted $\langle E_1, E_2 \rangle$, is a pair of vector spaces $E_1$ and $E_2$ together with a bilinear functional $\langle \cdot, \cdot \rangle: E_1 \times E_2 \to \R$ that satisfies
	\begin{itemize}
		\item if $\langle \mone, \mtwo \rangle = 0$ for each $\mtwo \in E_2$, then $\mone = 0$;
		\item if $\langle \mone, \mtwo \rangle = 0$ for each $\mone \in E_1$, then $\mtwo = 0$.
	\end{itemize}	
	The bilinear functional $\langle \cdot, \cdot \rangle$ is called the duality of the dual pair $\langle E_1, E_2 \rangle$.
\end{definition}

\noindent
A dual pair $\langle E_1, E_2 \rangle$ equips each $E_1$ and $E_2$ with topologies---called the weak and weak$^*$ topologies, respectively---that turn $E_1$ and $E_2$ into the topological dual of each other. 
A large class of vector spaces can be coupled as the mutual topological dual through an appropriate dual pair; see \citep[Section~5.14]{Aliprantis2006} for examples of common dual pairs.
A dual pair we consider in \Cref{sec:application} is one for functions $f$ and measures $P$ in $\Theta$ whose duality is given by $\langle f, P \rangle = \int_\T f(\t) d P(\t)$ \citep[e.g.][]{Muller1997}.

In the rest of \Cref{sec:methodology}, let $\langle E_1, E_2 \rangle$ be an arbitrary dual pair.
It suffices to consider convex or concave functions $f: E_1 \to \Re$ to define a gradient we use for saddle Hamiltonian function.
The directional derivative $D f_\mone(\mthree)$ of $f$ at $\mone$ in direction $\mthree \in E_1$ is defined by
\begin{align}
	D f_\mone(\mthree) := \lim_{\lambda \to 0^+} \frac{ f(\mone + \lambda \mthree) - f(\mone) }{ \lambda } . \label{eq:direc_deriv}
\end{align}
One of the most common notions of gradients in infinite-dimensional convex analysis is the G\^{a}teaux gradient\footnote{The G\^{a}teaux gradient may be called by a different terminology in other literatures. For example, \cite{Rockafellar1974} simply called it the \emph{gradient} and \cite{Ekeland1999} called it the \emph{G\^{a}teaux-differential}.}.
If there exists a unique element $\partial f(x) \in E_2$ s.t.~the limit of \eqref{eq:direc_deriv} satisfies
\begin{align}
	D f_\mone(\mthree) = \langle \mthree, \partial f(x) \rangle \quad \text{for all} \quad \mthree \in E_1 , \label{eq:gateaux_deriv}
\end{align}
the unique element $\partial f(x) \in E_2$ is called the G\^{a}teaux gradient of $f$ at $\mone$.
The linear functional $D f_\mone: E_1 \to \R$ is called the G\^{a}teaux derivative of $f$ at $\mone$. 
A gradient we use is a weaker version of the G\^{a}teaux gradient that requires the equality \eqref{eq:gateaux_deriv} to hold only for directions $\mthree$ s.t.~$\mone + \lambda \mthree$ stays within $\Dom(f)$ over all $\lambda \in [0, 1]$.
This ensures that the value of $f(\mone + \lambda \mthree)$ in the directional derivative \eqref{eq:direc_deriv} is finite for all small $\lambda \ge 0$ to take the limit in $\lambda \to 0^+$.
Such a version is often used and called \emph{first variation} in the context of Wasserstein gradient flows \citep[e.g.][Chapter~7.2]{Santambrogio2015} for functions whose domain is a set of probability measures $\Mp$ in the vector space of the signed measures $\Ms$.
What first variation refers to may differ in each field.
Now, we introduce our definition of first variation.

\begin{definition}[First Variation] \label{def:gateux_derivative}
Given a convex or concave function $f: E_1 \to \Re$, let $C(\mone)$ denote a set $\{ \mthree \in E_1 \mid \mone + \mthree \in \Dom(f) \}$ at each $\mone \in \Dom(f)$.
If there exists a unique element $\partial f(x) \in E_2$ s.t.~the directional derivative of $f$ at $\mone \in \Dom(f)$ in \eqref{eq:direc_deriv} satisfies that
\begin{align}
	D f_\mone(\mthree) = \langle \mthree, \partial f(x) \rangle \quad \text{for all} \quad \mthree \in C(\mone) , \label{eq:c_weak_deriv}
\end{align}
we call the unique element $\partial f(x) \in E_2$ the \emph{first variation} of $f$ at $\mone$.
We refer the function $D f_\mone: C(\mone) \to \R$ to as the \emph{variational derivative} of $f$ at $\mone$.
\end{definition}

Note that, if $\mthree \in C(\mone)$, we have $\mone + \lambda \mthree \in \Dom(f)$ for all $\lambda \in [0, 1]$ because $\mone + \lambda \mthree$ is expressed as a convex combination $(1 - \lambda) \mone + \lambda ( \mone + \mthree )$ in the convex set $\Dom(f)$.
Note also that the set $C(\mone)$ is never empty; given any $u \in \Dom(f)$, a point $u - \mone$ is a trivial element of $C(\mone)$.
If $C(\mone) = E_1$, the first variation coincides with the G\^{a}teaux gradient by definition.

\begin{remark}
	The first variation are immediately defined for convex or concave functions in the other space $E_2$ by interchanging the symbols $E_1$ and $E_2$ throughout \Cref{def:gateux_derivative} and replacing $D f_\mone(\mthree) = \langle \mthree, \partial f(x) \rangle$ with $D f_\mone(\mthree) = \langle \partial f(x), \mthree \rangle$ in \eqref{eq:c_weak_deriv}.
\end{remark}

\subsection{Definition of Arc Hamiltonian Systems} \label{sec:metric_hamiltonian_system}

Hamiltonian systems describe the dynamics of a paired variable $(\monet, \mtwot)$ in some space.
We consider a setting where the variable $(\monet, \mtwot)$ belongs to a subset $M$ in the product space $E_1 \times E_2$.
A Hamiltonian function $H: M \to \R$ on $M$ is regarded as an extended-valued function $H: E_1 \times E_2 \to \Re$ in convex-analytic manner.

The framework of arc Hamiltonian systems is developed for saddle Hamiltonian functions.
An extended-valued function $H: E_1 \times E_2 \to \Re$ is said \emph{saddle} if, at each $(\mone, \mtwo) \in \Dom(H)$,
\begin{itemize}
	\item for fixed $\mtwo$, the function $H(\cdot, \mtwo): E_1 \to \Re$ is concave;
	\item for fixed $\mone$, the function $H(\mone, \cdot): E_2 \to \Re$ is convex.
\end{itemize}
Denote by $\partial_1 H(\mone, \mtwo) \in E_2$ the first variation of the concave function $H(\cdot, \mtwo): E_1 \to \Re$ at $\mone$.
Denote by $\partial_2 H(\mone, \mtwo) \in E_1$ the first variation of the convex function $H(\mone, \cdot): E_2 \to \Re$ at $\mtwo$.
It is convenient to introduce a condition, \emph{compatibility}, of a saddle function $H$ to be used in arc Hamiltonian systems.
In what follows, the duplet $\partial H(\mone, \mtwo) := (- \partial_2 H(\mone, \mtwo), \partial_1 H(\mone, \mtwo))$ is referred to as the \emph{symplectic variation}\footnote{It is possible to define the symplectic variation as $\partial H(\mone, \mtwo) = (\partial_2 H(\mone, \mtwo), - \partial_1 H(\mone, \mtwo))$ alternatively.} of a saddle function $H$.

\begin{definition}[Compatibility] \label{def:compatible_H}
	A saddle function $H: E_1 \times E_2 \to \Re$ is said \emph{compatible} if, at each $(\mone, \mtwo) \in \Dom(H)$, the symplectic variation $\partial H(\mone, \mtwo) = (- \partial_2 H(\mone, \mtwo), \partial_1 H(\mone, \mtwo))$ exists and satisfies $(\mone, \mtwo) + s \, ( - \partial_2 H(\mone, \mtwo), \partial_1 H(\mone, \mtwo)) \in \Dom(H)$ for all $s \in [0, 1]$.
\end{definition}

We then equip the domain $\Dom(H)$ with a metric $d$ to define the Hamiltonian arc field associated with the compatible Hamiltonian function $H$.

\begin{definition}[Hamiltonian Arc Field]
	Given a compatible function $H: E_1 \times E_2 \to \Re$ and a metric $d$ on the domain $M := \Dom(H)$, a map $\Phi: M \times [0, 1] \to M$ defined by
	\begin{align*}
		\Phi_s(\mone, \mtwo) := (\mone, \mtwo) + s ( - \partial_2 H(\mone, \mtwo), \partial_1 H(\mone, \mtwo))
	\end{align*}
	is called the \emph{Hamiltonian arc field} on $(M, d)$ if it is an arc field on $(M, d)$.
\end{definition}

The Hamiltonian arc field $\Phi$ acts as an analogue of a Hamiltonian vector field in the metric space $(M, d)$.
We will see in \Cref{sec:energy_conservation} that a solution of the Hamiltonian arc field $\Phi$---i.e., a curve that satisfies the condition \eqref{eq:emutationeq} of $\Phi$---is an energy-conserving flow in $(M, d)$.
We refer the condition \eqref{eq:emutationeq} of the Hamiltonian arc field $\Phi$ to as the \emph{arc Hamilton equation}.

\begin{definition}[Arc Hamiltonian System] \label{def:metric_hamiltonian_system}
Let $H: E_1 \times E_2 \to \Re$ be compatible and let $d$ be a metric on the domain $M := \Dom(H)$.
We call the tuple $(H, M, d)$ an \emph{arc Hamiltonian system} if the Hamiltonian arc field $\Phi$ on $(M, d)$ exists.
Given an initial state $(\mone, \mtwo) \in M$, if there exists a unique continuous solution $t \mapsto (\monet, \mtwot)$ that satisfies the arc Hamilton equation
\begin{align}
	\lim_{s \to 0^+} \frac{ d( (\mone_{t+s}, \mtwo_{t+s}), \Phi_s( \mone_t, \mtwo_t ) ) }{ s } = 0 \label{eq:H_mutation_equation}
\end{align}
from $(\mone_0, \mtwo_0) = (\mone, \mtwo)$ in some interval $[0, T)$, the solution is called a \emph{flow} of the system from $(\mone, \mtwo)$.
The flow is said \emph{global} if $T = \infty$.
\end{definition}

One of our main contributions is this rigorous formulation of the novel class of Hamiltonian systems in infinite-dimensional metric spaces.
We briefly discuss a numerical discretisation scheme of the arc Hamilton equation \eqref{eq:H_mutation_equation}.
One of the most straightforward approaches is the first-order discretisation scheme.
Given a small $\delta > 0$, a flow generated by the system is approximated by the following recursive formula starting from $t = 0$:
\begin{align}
	(\mone_{t+\delta}, \mtwo_{t+\delta}) \gets (\mone_t, \mtwo_t) + \delta \, ( - \partial_2 H(\mone_t, \mtwo_t), \partial_1 H(\mone_t, \mtwo_t) ) . \label{eq:discretisation}
\end{align}
Next, we turn our attention to the condition under which there exists a global flow generated by a Hamiltonian arc field from any point in the domain $(M, d)$.

\subsection{Existence of Flows} \label{sec:existence_flow}

By definition, a flow of an arc Hamiltonian system is a unique continuous solution of the Hamilton arc field $\Phi$ from a given initial state.
\cite{Calcaterra2000} established a condition of a general arc field $\Phi$, under which a unique solution of the arc field $\Phi$ exists from any initial state.
First, we adapt their result to arc Hamilton fields to provide general conditions of the existence of global flows.
Second, we consider a convenient setting where a metric $d$ on $M$ has an extension to a norm-induced metric $d_*$ on $E_1 \times E_2$. 
In this setting, we can reduce the general condition provided first to local Lipschitz continuity of the map $(\mone, \mtwo) \mapsto \partial H(\mone, \mtwo)$ from $(M, d)$ to $(E_1 \times E_2, d_*)$.

In the rest of \Cref{sec:methodology}, let $(M, d)$ be a complete metric space s.t.~$M \subseteq E_1 \times E_2$.
Let $H: E_1 \times E_2 \to \Re$ be a compatible function s.t.~$\Dom(H) = M$.
Let $\Phi: M \times [0, 1] \to M$ be a map defined by $\Phi_s(\mone, \mtwo) := (\mone, \mtwo) + s \partial H(\mone, \mtwo)$ at each $(\mone, \mtwo) \in M$ and $s \in [0, 1]$. 

Our first theorem shows the existence of global flows of the system.
We adapt the result in \cite{Calcaterra2000} by relaxing their original condition; see \Cref{sec:appendix_0} for a summary of their result and the relaxation of the original condition.
Recall from \Cref{sec:notation} that $B_r(\mone, \mtwo)$ denotes an open ball in $(M, d)$.

\begin{assumption} \label{asmp:flow_existence_1}
	For some $\Lambda: M \times M \times [0, 1] \to \R$ and $\Omega: M \times [0, 1] \times [0, 1] \to [0, \infty)$, at each $(\mone_0, \mtwo_0) \in M$ there exist a radius $r_0 > 0$ and a length $\epsilon_0 > 0$ s.t.
	\begin{enumerate}
		\item $d(\Phi_t(\mone, \mtwo), \Phi_t(\mthree, \mfour) ) \le d((\mone, \mtwo), (\mthree, \mfour)) (1 + t \, \Lambda((\mone, \mtwo), (\mthree, \mfour), t))$, and $\Lambda((\mone, \mtwo), (\mthree, \mfour), t)$ is upper bounded, for all $(\mone, \mtwo), (\mthree, \mfour) \in B_{r_0}(\mone_0, \mtwo_0)$ and all $t \in [0, \epsilon_0)$;
		\item $d(\Phi_{t+s}(\mone, \mtwo), \Phi_s(\Phi_t(\mone, \mtwo))) \le t \, s \, \Omega((\mone, \mtwo), t, s))$, and $\Omega((\mone, \mtwo), t, s))$ is upper bounded, for all $(\mone, \mtwo) \in B_{r_0}(\mone_0, \mtwo_0)$ and all $t,s \in [0, \epsilon_0)$.
	\end{enumerate}
\end{assumption}

\begin{assumption} \label{asmp:flow_existence_2}
	For the following function
	\begin{align*}
		\rho( \mone, \mtwo ) := \sup_{ t, s \in [0, 1] \text{~s.t.~} t \ne s } \frac{ d( \Phi_t(\mone, \mtwo), \Phi_s(\mone, \mtwo) ) }{ | t - s | } ,
	\end{align*}
	at each $(\mone_0, \mtwo_0) \in M$ there exist a radius $r_0 > 0$ and two constants $C_1, C_2$ s.t.~we have $\rho(\mone, \mtwo) \le C_1 d((\mone, \mtwo), (\mone_0, \mtwo_0)) + C_2$ for all $(\mone, \mtwo) \in B_{r_0}(\mone_0, \mtwo_0)$.
\end{assumption}

\begin{theorem}[Existence of Global Flow] \label{thm:existence_of_global_flow}
	Suppose Assumptions \ref{asmp:flow_existence_1}--\ref{asmp:flow_existence_2}. 
	The tuple $(H, M, d)$ is an arc Hamiltonian system.
	Given any initial state $(\mone, \mtwo)$ in $M$, there exists a global flow $t \mapsto (\mone_t, \mtwo_t)$ of the system from $(\mone_0, \mtwo_0) = (\mone, \mtwo)$ in the infinite interval $[0, \infty)$.
\end{theorem}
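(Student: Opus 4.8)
The plan is to reduce \Cref{thm:existence_of_global_flow} to the abstract existence theory of \cite{Calcaterra2000} summarised in \Cref{sec:appendix_0}, using the Hamiltonian arc field $\Phi_s(\mone,\mtwo) = (\mone,\mtwo) + s\,\partial H(\mone,\mtwo)$ as the arc field on the complete metric space $(M,d)$. The argument proceeds in two stages: first I would verify that $\Phi$ is indeed an arc field in the sense of \Cref{def:cb_definition_21}, so that $(H,M,d)$ qualifies as an arc Hamiltonian system; then I would check that Assumptions~\ref{asmp:flow_existence_1} and~\ref{asmp:flow_existence_2} are exactly (the relaxed form of) the hypotheses under which \cite{Calcaterra2000}'s local-existence-plus-uniqueness machinery yields a solution, and finally upgrade the local solution to a global one on $[0,\infty)$.

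For the first stage, I would observe that $\Phi_0(\mone,\mtwo) = (\mone,\mtwo)$ holds by construction, and that the local Lipschitz-in-time bound $\rho(\mone,\mtwo) < \infty$ together with local boundedness $\bar\rho(\mone,\mtwo,r) < \infty$ follows from \Cref{asmp:flow_existence_2}: the linear bound $\rho(\mone,\mtwo) \le C_1 d((\mone,\mtwo),(\mone_0,\mtwo_0)) + C_2$ on $B_{r_0}(\mone_0,\mtwo_0)$ gives finiteness of $\rho$ pointwise and, since $d((\mone,\mtwo),(\mone_0,\mtwo_0)) \le r_0$ on that ball, a uniform bound $\bar\rho(\mone_0,\mtwo_0,r) \le C_1 r + C_2$ for $r < r_0$. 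Here it is essential that compatibility of $H$ (\Cref{def:compatible_H}) guarantees $\Phi_s(\mone,\mtwo) \in M$ for all $s \in [0,1]$, so the map is well-defined into $M$ and the metric quantities make sense. Thus $\Phi$ is an arc field and $(H,M,d)$ is an arc Hamiltonian system.

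For the second stage, the two numbered conditions in \Cref{asmp:flow_existence_1} are precisely the local "forward Lipschitz" estimate and the local "one-sided semigroup defect" estimate that \cite{Calcaterra2000} require (in the relaxed form noted in \Cref{sec:appendix_0}, where the bounds are only asked locally on balls rather than with globally uniform constants): condition~1 controls how $\Phi$ spreads neighbouring points, giving Gronwall-type control, and condition~2 controls the failure of $\Phi$ to be a flow, which is what drives the Euler-type polygonal construction of an approximate solution to converge. Applying their theorem on the ball $B_{r_0}(\mone_0,\mtwo_0)$ produces a unique continuous solution of the arc Hamilton equation \eqref{eq:H_mutation_equation} on some maximal interval $[0,T_{\max})$. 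To extend to $[0,\infty)$, I would run the standard continuation argument: if $T_{\max} < \infty$, the Lipschitz bound from condition~1 (via the Gronwall estimate controlling $d((\mone_t,\mtwo_t),(\mone_0,\mtwo_0))$ in terms of $\rho$, which is itself linearly bounded by \Cref{asmp:flow_existence_2}) prevents the solution from leaving every bounded set in finite time, so by completeness of $(M,d)$ the limit $\lim_{t \to T_{\max}^-}(\mone_t,\mtwo_t)$ exists in $M$ and the local theorem can be reapplied from that point, contradicting maximality. Hence $T_{\max} = \infty$.

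The main obstacle I anticipate is not any single estimate but the bookkeeping needed to pass from \cite{Calcaterra2000}'s hypotheses — which in their original statement may impose bounds with constants uniform over the whole space, or phrase the second estimate slightly differently — to the purely local Assumptions~\ref{asmp:flow_existence_1}--\ref{asmp:flow_existence_2} stated here; this is exactly the "relaxation of the original condition" the text promises to handle in \Cref{sec:appendix_0}, and the delicate point is verifying that the local constants $\Lambda, \Omega, C_1, C_2$ together with completeness of $(M,d)$ still suffice to close both the local existence/uniqueness step and the global continuation step without a globally uniform modulus.
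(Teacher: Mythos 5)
Your proposal matches the paper's approach: the paper's proof of \Cref{thm:existence_of_global_flow} is a one-paragraph reduction to \Cref{thm:cb_theorem_44_relaxed} (the relaxed Calcaterra--Bleecker Theorem~4.4, proved in the supplement), after noting that Assumptions~\ref{asmp:flow_existence_1}--\ref{asmp:flow_existence_2} are exactly restatements of that theorem's hypotheses; the steps you outline --- checking that $\Phi$ is an arc field via the $\rho$-bound and compatibility, invoking the CB local existence-and-uniqueness machinery, then continuing the solution to $[0,\infty)$ --- are precisely the content of that auxiliary theorem's proof. One caution on the continuation step: in a general complete metric space, remaining in a bounded set as $t \to T_{\max}^-$ does not by itself yield a limit; the appendix proof derives a Lipschitz-in-time bound on the solution curve (from the linear bound on $\rho$ in \Cref{asmp:flow_existence_2}), which is what gives the Cauchy property and hence, via completeness, the limit point from which the local theorem is reapplied --- so you should phrase that inference in terms of the curve being Cauchy rather than merely bounded.
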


\begin{proof}[Proof of \Cref{thm:existence_of_global_flow}]
	See \Cref{proof:existence_of_global_flow}.
\end{proof}

We define $E := E_1 \times E_2$ to simplify the notation hereafter in \Cref{sec:methodology}.
The condition in Assumptions \ref{asmp:flow_existence_1}--\ref{asmp:flow_existence_2} is stated in a highly general form.
Nonetheless, our second theorem shows that the condition reduces to local Lipschitz continuity of the map $(\mone, \mtwo) \mapsto \partial H(\mone, \mtwo)$ when the metric $d$ on $M$ admits an extension to a norm-induced metric $d_*$ on $E$.
Clearly, a metric $d$, that is defined by restricting some norm $\| \cdot \|_E$ on $E$ to $M$, immediately admits an extension to a norm-induced metric of $\| \cdot \|_E$ on $E$.
Such metrics are common in spaces of measures; for example see \citep{Muller1997} for integral probability metrics.

\begin{assumption} \label{asmp:flow_existence_3}
	The following conditions hold:
	\begin{enumerate}
		\item there exists a norm-induced metric $d_*$ on $E$, to which the metric $d$ on $M$ can be extended;
		\item the map $(\mone, \mtwo) \mapsto \partial H(\mone, \mtwo)$ is locally Lipschitz continuous from $(M, d)$ to $(E, d_*)$, that is, at each $(\mone_0, \mtwo_0) \in M$ there exist a radius $\rll > 0$ and a constant $K_0 > 0$ s.t.
		\begin{align}
			d_*( \partial H(\mone, \mtwo), \partial H(\mthree, \mfour) ) \le K_0 \, d( (\mone, \mtwo), (\mthree, \mfour) ) \label{eq:local_lipschitz}
		\end{align}
		holds for all $(\mone, \mtwo), (\mthree, \mfour) \in B_{\rll}(\mone_0, \mtwo_0)$.
	\end{enumerate}
\end{assumption}

\begin{theorem} \label{lem:existence_of_flow_simplified}
	\Cref{asmp:flow_existence_3} implies Assumptions \ref{asmp:flow_existence_1}--\ref{asmp:flow_existence_2}.
\end{theorem}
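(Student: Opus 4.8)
The plan is to exploit the affine form of the Hamiltonian arc field, $\Phi_s(x, z) = (x, z) + s\,\partial H(x, z)$, together with the hypothesis that $d$ extends to a norm-induced metric, so that the statement collapses to three one-line identities plus the triangle inequality. Write $\|\cdot\|$ for the norm on $E$ with $d_*(a, b) = \|a - b\|$; since $d$ is the restriction of $d_*$ to $M$, we have $d(a, b) = \|a - b\|$ for all $a, b \in M$, and by compatibility every point $\Phi_t(x, z)$ and $\Phi_s(\Phi_t(x, z))$ occurring below lies in $M$, so each $d$ below is this norm. Substituting the definition of $\Phi$ gives the identities: (i) $\Phi_t(x, z) - \Phi_t(v, y) = \bigl[(x, z) - (v, y)\bigr] + t\bigl[\partial H(x, z) - \partial H(v, y)\bigr]$; (ii) $\Phi_t(x, z) - \Phi_s(x, z) = (t - s)\,\partial H(x, z)$; and (iii) $\Phi_{t+s}(x, z) - \Phi_s(\Phi_t(x, z)) = s\bigl[\partial H(x, z) - \partial H(\Phi_t(x, z))\bigr]$.

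Fix $(x_0, z_0) \in M$ and let $\rll > 0$, $K_0 > 0$ be the radius and constant supplied by \Cref{asmp:flow_existence_3} at $(x_0, z_0)$. On $B_{\rll}(x_0, z_0)$ the Lipschitz bound \eqref{eq:local_lipschitz} yields $\|\partial H(x, z)\| \le \|\partial H(x_0, z_0)\| + K_0\, d((x, z), (x_0, z_0))$, a finite quantity (since $\partial H(x_0, z_0) \in E$) bounded above by $L_0 := \|\partial H(x_0, z_0)\| + K_0 \rll$. Identity (ii) gives $\rho(x, z) = \|\partial H(x, z)\|$, so \Cref{asmp:flow_existence_2} holds with $r_0 = \rll$, $C_1 = K_0$, $C_2 = \|\partial H(x_0, z_0)\|$. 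Identity (i), the triangle inequality, and the Lipschitz bound give, for all $(x, z), (v, y) \in B_{\rll}(x_0, z_0)$ and $t \in [0, 1]$,
\[
	d(\Phi_t(x, z), \Phi_t(v, y)) \le d((x, z), (v, y)) + t K_0\, d((x, z), (v, y)) = d((x, z), (v, y))\bigl(1 + t K_0\bigr) ,
\]
so \Cref{asmp:flow_existence_1}(1) holds with $r_0 = \rll$, $\epsilon_0 = 1$, and $\Lambda \equiv K_0$, which is trivially upper bounded.

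The one delicate step is \Cref{asmp:flow_existence_1}(2): identity (iii) requires applying the Lipschitz estimate at the pair $(x, z)$ and $\Phi_t(x, z)$, so we must arrange that $\Phi_t(x, z)$ remains inside the Lipschitz ball $B_{\rll}(x_0, z_0)$. To this end, shrink the radius to some $r_0 < \rll$ and set $\epsilon_0 := \min\bigl\{ \tfrac{1}{2},\ (\rll - r_0)/L_0 \bigr\}$; then for $(x, z) \in B_{r_0}(x_0, z_0)$ and $t \in [0, \epsilon_0)$ one has $d(\Phi_t(x, z), (x_0, z_0)) \le t\|\partial H(x, z)\| + d((x, z), (x_0, z_0)) < \epsilon_0 L_0 + r_0 \le \rll$, and $t + s < 1$ so $\Phi_{t+s}$ is defined. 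Identity (iii) with the Lipschitz bound and $d((x, z), \Phi_t(x, z)) = t\|\partial H(x, z)\|$ then gives
\[
	d(\Phi_{t+s}(x, z), \Phi_s(\Phi_t(x, z))) = s\,\|\partial H(x, z) - \partial H(\Phi_t(x, z))\| \le s K_0\, d((x, z), \Phi_t(x, z)) = t s\, K_0 \|\partial H(x, z)\| ,
\]
so \Cref{asmp:flow_existence_1}(2) holds with $\Omega((x, z), t, s) := K_0 \|\partial H(x, z)\|$, bounded above by $K_0 L_0$ on $B_{r_0}(x_0, z_0) \times [0, \epsilon_0) \times [0, \epsilon_0)$. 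I expect this bookkeeping — choosing $r_0$ and $\epsilon_0$ small enough that the flow does not leave the Lipschitz ball and that all time arguments stay in $[0, 1]$ — to be the only real obstacle; everything else is the three substitutions and the triangle inequality.
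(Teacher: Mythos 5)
Your proof is correct and follows essentially the same route as the paper's: both exploit the affine form of the Hamiltonian arc field to reduce the three inequalities of Assumptions \ref{asmp:flow_existence_1}--\ref{asmp:flow_existence_2} to the triangle inequality plus the local Lipschitz bound on $\partial H$, and both require the same bookkeeping step of shrinking $r_0$ and $\epsilon_0$ so that $\Phi_t(x,z)$ stays inside the Lipschitz ball for item (2). The only cosmetic differences are that the paper defines $\Lambda$ as the ratio $d_*(\partial H(x,z),\partial H(v,y))/d((x,z),(v,y))$ rather than the constant $K_0$, and handles the containment of $\Phi_t(x,z)$ in $B_{\rll}(x_0,z_0)$ via a limit argument with $r_0\le\rll/2$ rather than your explicit $\epsilon_0=\min\{1/2,(\rll-r_0)/L_0\}$ (the latter is actually slightly more transparent about uniformity over the ball); neither difference changes the substance.
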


\begin{proof}[Proof of \Cref{lem:existence_of_flow_simplified}]
	See \Cref{proof:existence_of_flow_simplified}.
\end{proof}

\begin{remark}
	While the metric space $(M, d)$ is supposed to be complete, the metric space $(E, d_*)$ does not need to be complete.
	Consider the space of measures $\Mf$ for example. 
	There are many common metrics on $\Mf$ that can be extended to norm-induced metrics on the vector space of signed measures $\Ms$ \citep[e.g.][]{Muller1997}.
	However, the resulting norm space may no longer be complete; see \citep[p.192]{Bogachev2006} for the case of the Kantorovich–Rubinstein metric.
	\Cref{asmp:flow_existence_3} is applicable without completeness of $(E, d_*)$.
\end{remark}

By Theorems \ref{thm:existence_of_global_flow}--\ref{lem:existence_of_flow_simplified}, local Lipschitz continuity of the map $(\mone, \mtwo) \mapsto \partial H(\mone, \mtwo)$ implies the existence of a global flow of the system from any initial state.
It gives a metric-space analogue of the Cauchy-Lipschitz theorem \citep{Brezis2011}.
The analogy was discussed in \cite{Calcaterra2000} for an evolution equation in a Banach space defined by a Lipschitz continuous vector field.
Our result demonstrates the analogy for a metric space $(M, d)$ using only local Lipschitz continuity of the map $(\mone, \mtwo) \mapsto \partial H(\mone, \mtwo)$.

\subsection{Conservation of Energy} \label{sec:energy_conservation}

An important property of regular Hamiltonian systems is that they obey the law of conservation of energy, meaning that the value of a Hamiltonian function remains constant along any flow of the system.
Now we shall establish that arc Hamiltonian systems are endowed with this invariance to obey the law of conservation of energy.
This can be shown under the following continuity conditions of $H$.

\begin{assumption} \label{asmp:conservation_of_energy_1}
	The following conditions hold:
	\begin{enumerate}
		\item the function $H$ is locally Lipschitz continuous in $(M, d)$, that is, at each $(\mone_0, \mtwo_0) \in M$ there exists a radius $\eta_0 > 0$ and a constant $J_0 > 0$ s.t.
		\begin{align}
			| H(\mone, \mtwo) - H(\mthree, \mfour) | \le J_0 \, d( (\mone, \mtwo), (\mthree, \mfour) ) \label{eq:H_local_lipschitz}
		\end{align}
		holds for all $(\mone, \mtwo), (\mthree, \mfour) \in B_{\eta_0}(\mone_0, \mtwo_0)$.
		\item convergence $(\mone_n, \mtwo_n) \to (\mone, \mtwo)$ in $(M, d)$ implies that (i) $\langle \partial_1 H(\mone_n, \mtwo_n), \mfour \rangle \to \langle \partial_1 H(\mone, \mtwo), \mfour \rangle$ for each $\mfour \in E_2$ and (ii) $\langle \mthree, \partial_2 H(\mone_n, \mtwo_n) \rangle \to \langle \mthree, \partial_2 H(\mone, \mtwo) \rangle$ for each $\mthree \in E_1$.
	\end{enumerate}
\end{assumption}

\begin{theorem}[Conservation of Energy] \label{prop:conservation_of_energy}
	Suppose the tuple $(H, M, d)$ is an arc Hamiltonian system.
	Under \Cref{asmp:conservation_of_energy_1}, if a flow $t \mapsto ( \monet, \mtwot )$ of the system exists in some interval $[0, T)$, the value of the Hamiltonian function $H( \monet, \mtwot )$ along the flow is constant in $[0, T)$.
\end{theorem}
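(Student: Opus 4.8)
The plan is to show that the real-valued function $t \mapsto H(\monet, \mtwot)$ has vanishing right derivative at every $t \in [0, T)$, and then invoke continuity plus a standard argument to conclude it is constant. Fix $t$ and abbreviate $(\mone, \mtwo) := (\monet, \mtwot)$ and $(\mone_s, \mtwo_s) := (\mone_{t+s}, \mtwo_{t+s})$. The arc Hamilton equation \eqref{eq:H_mutation_equation} says $d((\mone_s, \mtwo_s), \Phi_s(\mone, \mtwo)) = o(s)$ as $s \to 0^+$. By the local Lipschitz continuity of $H$ in \Cref{asmp:conservation_of_energy_1}(1), for all small $s$ both $(\mone_s, \mtwo_s)$ and $\Phi_s(\mone, \mtwo)$ lie in the ball $B_{\eta_0}(\mone, \mtwo)$ (the flow is continuous, and $\Phi_s(\mone,\mtwo) \to (\mone,\mtwo)$ since $\Phi$ is an arc field), so
\begin{align*}
	| H(\mone_s, \mtwo_s) - H(\Phi_s(\mone, \mtwo)) | \le J_0 \, d((\mone_s, \mtwo_s), \Phi_s(\mone, \mtwo)) = o(s) .
\end{align*}
Hence it suffices to compute $\lim_{s \to 0^+} \frac{1}{s}\bigl( H(\Phi_s(\mone, \mtwo)) - H(\mone, \mtwo) \bigr)$ and show it is zero; the full difference quotient for the flow then has the same limit.

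Next I would evaluate $H$ along the explicit curve $\Phi_s(\mone, \mtwo) = (\mone - s \partial_2 H(\mone,\mtwo),\ \mtwo + s \partial_1 H(\mone,\mtwo))$ by splitting the increment into one move in the first variable and one in the second, using compatibility (\Cref{def:compatible_H}) to guarantee the intermediate point stays in $\Dom(H) = M$:
\begin{align*}
	H(\Phi_s(\mone, \mtwo)) - H(\mone, \mtwo)
	&= \bigl[ H(\mone - s \partial_2 H, \mtwo + s \partial_1 H) - H(\mone, \mtwo + s \partial_1 H) \bigr] \\
	&\quad + \bigl[ H(\mone, \mtwo + s \partial_1 H) - H(\mone, \mtwo) \bigr] .
\end{align*}
For the second bracket, $H(\mone, \cdot)$ is convex and $\partial_1 H(\mone,\mtwo) \in C(\mtwo)$ by compatibility, so dividing by $s$ and letting $s \to 0^+$ gives the directional derivative, which by \Cref{def:gateux_derivative} equals $\langle \partial_2 H(\mone, \mtwo), \partial_1 H(\mone, \mtwo) \rangle$. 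For the first bracket, the inner point $\mtwo + s \partial_1 H$ varies with $s$, so I would first replace it by $\mtwo$: write the first bracket as $\bigl[ H(\mone - s \partial_2 H, \mtwo) - H(\mone, \mtwo) \bigr] + \bigl[ \big( H(\mone - s\partial_2 H, \mtwo + s \partial_1 H) - H(\mone, \mtwo + s\partial_1 H)\big) - \big(H(\mone - s\partial_2 H, \mtwo) - H(\mone, \mtwo)\big) \bigr]$. The first of these, after dividing by $s$, tends to the directional derivative of the concave map $H(\cdot, \mtwo)$ at $\mone$ in direction $-\partial_2 H(\mone,\mtwo) \in C(\mone)$, namely $-\langle \partial_2 H(\mone,\mtwo), \partial_1 H(\mone,\mtwo)\rangle$ — note the sign flips the first variation's pairing and cancels the second bracket's limit. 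The remaining correction term must be shown to be $o(s)$: expand it as $\int$-free difference and bound it using that $H(\cdot, \mtwo + s\partial_1 H)$ and $H(\cdot, \mtwo)$ have the same variational derivative in the limit, invoking the weak$^*$-continuity of $\partial_2 H$ and $\partial_1 H$ from \Cref{asmp:conservation_of_energy_1}(2). Combining, the two leading terms cancel and the right derivative of $H$ along the flow is $0$.

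Finally, to pass from "the right (Dini) derivative vanishes everywhere on $[0,T)$" to "the function is constant", I would note that $t \mapsto H(\monet, \mtwot)$ is continuous (composition of the continuous flow with $H$, which is locally Lipschitz hence continuous) and apply the standard real-analysis fact that a continuous function on an interval whose right Dini derivative is identically zero is constant.

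\textbf{Main obstacle.} The delicate point is the two-variable increment: along $\Phi_s$ both coordinates move simultaneously by order $s$, so naive one-variable directional-derivative arguments do not directly apply. The crux is showing the mixed correction term $\big( H(\mone - s\partial_2 H, \mtwo + s\partial_1 H) - H(\mone, \mtwo + s\partial_1 H)\big) - \big(H(\mone - s\partial_2 H, \mtwo) - H(\mone, \mtwo)\big)$ is $o(s)$. This is where \Cref{asmp:conservation_of_energy_1}(2) — continuity of the first variations under convergence in $(M,d)$, tested against fixed elements of $E_1$ and $E_2$ — is exactly the hypothesis needed, used together with the convexity/concavity of $H$ in each slot (which makes the relevant difference quotients monotone in $s$, hence controllable uniformly). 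I expect most of the work to go into making this $o(s)$ estimate rigorous without any differentiability beyond first variations.
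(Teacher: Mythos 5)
Your overall skeleton matches the paper's: reduce to showing the right Dini derivative of $t\mapsto H(\monet,\mtwot)$ vanishes, use local Lipschitz continuity of $H$ plus the arc Hamilton equation to trade the flow point for the arc-field point $\Phi_s(\monet,\mtwot)$, handle the remaining quotient via convexity and \Cref{asmp:conservation_of_energy_1}(2), and finish with the standard Dini-derivative-zero-implies-constant fact. However, your treatment of the key quotient $\lim_{s\to 0^+}\bigl(H(\Phi_s(\mone,\mtwo))-H(\mone,\mtwo)\bigr)/s$ is where the proposal diverges from the paper and where it has a genuine gap. You decompose into two successive one-variable moves and then attempt to \emph{compute the limit} of each piece as a directional derivative. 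For the piece $A(s)=H(\mone-s\partial_2 H,\ \mtwo+s\partial_1 H)-H(\mone,\ \mtwo+s\partial_1 H)$ this cannot be read as a directional-derivative limit, because the ``background'' second argument $\mtwo+s\partial_1 H$ moves with $s$. To fix this you freeze the second argument and are left with the mixed correction term, which you correctly flag as the crux but leave unresolved -- you only say \Cref{asmp:conservation_of_energy_1}(2) ``should'' suffice. As written, the proof is not complete.

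The paper avoids the mixed correction entirely via \Cref{lem:saddle_bound}, which uses concavity/convexity to give \emph{two-sided subgradient inequalities} on the whole increment $H(\mthree,\mfour)-H(\mone,\mtwo)$, with the first variations evaluated at the two endpoints. Applied with $(\mthree,\mfour)=\Phi_s(\mone,\mtwo)$ this yields linear-in-$s$ upper and lower bounds whose limits both vanish by \Cref{asmp:conservation_of_energy_1}(2), so the squeeze theorem closes the argument with no second-order correction to estimate. You can repair your route in the same spirit: apply the concave subgradient inequality to $A(s)$ \emph{without} freezing the moving second argument, obtaining $s\langle -\partial_2 H(\mone,\mtwo),\,\partial_1 H(\mone_s^*,\mtwo_s^*)\rangle \le A(s) \le s\langle -\partial_2 H(\mone,\mtwo),\,\partial_1 H(\mone,\mtwo_s^*)\rangle$, and similarly sandwich $B(s)$; then \Cref{asmp:conservation_of_energy_1}(2) makes all four bounds converge and the sum vanishes. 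Finally, a concrete error: you claim compatibility (\Cref{def:compatible_H}) guarantees the mixed intermediate point $(\mone,\ \mtwo+s\partial_1 H)$ stays in $\Dom(H)$. It does not -- compatibility only asserts the \emph{diagonal} move $(\mone,\mtwo)+s(-\partial_2 H,\partial_1 H)$ remains in $\Dom(H)$. The existence of first variations at mixed points such as $(\mone,\mtwo_s^*)$ is a separate requirement (implicitly present in the hypotheses of \Cref{lem:saddle_bound} via ``first variations exist everywhere in $\Dom(H)$'') and should not be attributed to compatibility.
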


\begin{proof}[Proof of \Cref{prop:conservation_of_energy}]
	See \Cref{proof:conservation_of_energy}.
\end{proof}

\noindent
Note that \Cref{prop:conservation_of_energy} holds for any length $T > 0$ including $T = \infty$.
Combining \Cref{thm:existence_of_global_flow} with \Cref{prop:conservation_of_energy} establishes the fact that the arc Hamiltonian system $(H, M, d)$ generates energy-conserving flows in $[0, \infty)$ everywhere in the metric space $(M, d)$.

Finally, we revisit the convenient setting of \Cref{asmp:flow_existence_3}.
In this setting, \Cref{asmp:conservation_of_energy_1} reduces to local Lipschitz continuity of the map $(\mone, \mtwo) \to \partial H(\mone, \mtwo)$ if we impose an additional regularity on the metric $d_*$ on $E$ as in \Cref{asmp:conservation_of_energy_3}.
The following theorem combines Theorems \ref{thm:existence_of_global_flow}--\ref{prop:conservation_of_energy} under this setting.
It provides a sufficient condition for the existence of energy-conserving flows, which is easy to verify for our application presented in the next section.

\begin{assumption} \label{asmp:conservation_of_energy_3}
	There exists a norm $\| \cdot \|_{E_1}$ on $E_1$ and a norm $\| \cdot \|_{E_2}$ on $E_2$ that satisfy $\langle \mone, \mtwo \rangle \le \| \mone \|_{E_1} \| \mtwo \|_{E_2}$ for any $(\mone, \mtwo) \in E_1 \times E_2$.
	The norm-induced metric $d_*$ on $E_1 \times E_2$ in \Cref{asmp:flow_existence_3} is defined as $d_*((\mone, \mtwo), (\mthree, \mfour)) = \| \mone - \mtwo \|_{E_1} + \| \mthree - \mfour \|_{E_2}$.
\end{assumption}

\begin{theorem} \label{thm:existence_conservation}
	Suppose Assumptions \ref{asmp:flow_existence_3} and \ref{asmp:conservation_of_energy_3}.
	The tuple $(H, M, d)$ is an arc Hamiltonian system.
	Given any initial state $(\mone, \mtwo) \in M$, there exists a global flow $t \mapsto (\mone_t, \mtwo_t)$ of the system from $(\mone_0, \mtwo_0) = (\mone, \mtwo)$ in the indefinite interval $[0, \infty)$, along which the value of the Hamiltonian function $H(\mone_t, \mtwo_t)$ is constant everywhere in $[0, \infty)$.
\end{theorem}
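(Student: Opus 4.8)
The plan is to derive Theorem~\ref{thm:existence_conservation} purely as a corollary of the four preceding results, so the proof will be structural rather than computational. First I would observe that Assumptions~\ref{asmp:flow_existence_3} and~\ref{asmp:conservation_of_energy_3} together give an explicit, norm-induced product metric $d_*((x,z),(v,y)) = \|x-v\|_{E_1} + \|z-y\|_{E_2}$ on $E = E_1 \times E_2$ extending $d$, and local Lipschitz continuity of the symplectic variation map $(x,z) \mapsto \partial H(x,z)$ from $(M,d)$ to $(E,d_*)$. By \Cref{lem:existence_of_flow_simplified}, \Cref{asmp:flow_existence_3} already implies Assumptions~\ref{asmp:flow_existence_1}--\ref{asmp:flow_existence_2}, and then \Cref{thm:existence_of_global_flow} yields that $(H,M,d)$ is an arc Hamiltonian system with a global flow $t \mapsto (x_t,z_t)$ from every initial state $(x,z) \in M$ on $[0,\infty)$. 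So the only substantive work is to check that the hypotheses also force \Cref{asmp:conservation_of_energy_1}, after which \Cref{prop:conservation_of_energy} (applied with $T = \infty$) finishes the argument.

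The heart of the proof, then, is deducing \Cref{asmp:conservation_of_energy_1} from Assumptions~\ref{asmp:flow_existence_3}--\ref{asmp:conservation_of_energy_3}. For part (2) of \Cref{asmp:conservation_of_energy_1} — the required continuity of the two pairings along convergent sequences — I would argue as follows. If $(x_n,z_n) \to (x,z)$ in $(M,d)$, then by local Lipschitz continuity of $\partial H$ (\Cref{asmp:flow_existence_3}(2)) we have $d_*(\partial H(x_n,z_n), \partial H(x,z)) \to 0$; unpacking $\partial H = (-\partial_2 H, \partial_1 H)$ and the definition of $d_*$ in \Cref{asmp:conservation_of_energy_3}, this gives $\|\partial_2 H(x_n,z_n) - \partial_2 H(x,z)\|_{E_1} \to 0$ and $\|\partial_1 H(x_n,z_n) - \partial_1 H(x,z)\|_{E_2} \to 0$. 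Then for fixed $y \in E_2$ the bilinearity of the duality and the bound $\langle \cdot,\cdot\rangle \le \|\cdot\|_{E_1}\|\cdot\|_{E_2}$ give
\begin{align*}
	|\langle \partial_1 H(x_n,z_n), y\rangle - \langle \partial_1 H(x,z), y\rangle| = |\langle \partial_1 H(x_n,z_n) - \partial_1 H(x,z), y\rangle| \le \|\partial_1 H(x_n,z_n) - \partial_1 H(x,z)\|_{E_2}\,\|y\|_{E_1} \to 0,
\end{align*}
and symmetrically for the other pairing with fixed $x \in E_1$ (noting $\langle x, \cdot\rangle$ is controlled by $\|x\|_{E_1}\|\cdot\|_{E_2}$). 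One small care point: \Cref{asmp:conservation_of_energy_3} writes the bound as $\langle x,z\rangle \le \|x\|_{E_1}\|z\|_{E_2}$ without absolute values, so I would either invoke it with $\pm x$ (using bilinearity) to get $|\langle x,z\rangle| \le \|x\|_{E_1}\|z\|_{E_2}$, or simply note that the one-sided bound already suffices to sandwich the difference between two vanishing quantities.

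For part (1) of \Cref{asmp:conservation_of_energy_1} — local Lipschitz continuity of $H$ itself in $(M,d)$ — I would use the fundamental theorem of calculus along the segment joining two nearby points, exploiting that $H$ restricted to the relevant directions has the variational derivative given by the first variations. Fix $(x_0,z_0) \in M$ and a ball $B_{r}(x_0,z_0)$ on which $\partial H$ is $K_0$-Lipschitz; for $(x,z),(v,y)$ in a slightly smaller ball, write $g(\lambda) = H((1-\lambda)(x,z) + \lambda(v,y))$, observe the segment stays in $\Dom(H) = M$ by convexity of $M$, and compute $g'(\lambda)$ via the directional derivatives of $H$ in each coordinate, which equal $\langle v - x, \partial_2 H\rangle$ and $\langle \partial_1 H, y - z\rangle$ at the relevant point (here I would invoke the concave/convex structure so the one-sided directional derivatives are genuine derivatives along the chord, and the boundedness of $\|\partial H\|$ near $(x_0,z_0)$, which follows from its local Lipschitz continuity plus finiteness at the center). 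Bounding $|g'(\lambda)| \le (\|\partial_2 H\| + \|\partial_1 H\|)(\|v-x\|_{E_1}+\|y-z\|_{E_2})$ uniformly and integrating over $\lambda \in [0,1]$ yields $|H(x,z) - H(v,y)| \le J_0\, d_*((x,z),(v,y)) = J_0\, d((x,z),(v,y))$ with $J_0$ the local sup of $\|\partial H\|$, which is \eqref{eq:H_local_lipschitz}. The main obstacle I anticipate is precisely this step: making the chord-differentiation rigorous requires care that the one-sided directional derivatives of the saddle function coincide with the two-sided derivative of $g$ along the segment and that this map is absolutely continuous, rather than just differentiable pointwise — concavity/convexity of $H$ in each slot is what rescues this (a finite concave or convex function on an interval is locally Lipschitz and differentiable a.e.), but the product structure needs to be handled coordinatewise. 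Once \Cref{asmp:conservation_of_energy_1} is in hand, I would conclude by citing \Cref{thm:existence_of_global_flow}, \Cref{lem:existence_of_flow_simplified}, and \Cref{prop:conservation_of_energy} in sequence to assemble the statement.
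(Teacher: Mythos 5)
Your structural plan and the verification of \Cref{asmp:conservation_of_energy_1} item (2) are exactly what the paper does: invoke \Cref{lem:existence_of_flow_simplified} to pass to Assumptions~\ref{asmp:flow_existence_1}--\ref{asmp:flow_existence_2}, use \Cref{thm:existence_of_global_flow} for the global flow, verify \Cref{asmp:conservation_of_energy_1}, and close with \Cref{prop:conservation_of_energy}. For item~(2) the argument via the norm bound $\langle \cdot,\cdot\rangle \le \|\cdot\|_{E_1}\|\cdot\|_{E_2}$, unpacking $d_*$ componentwise and using local Lipschitz continuity of $\partial H$, matches the paper line for line.

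The gap is in your verification of item (1). You propose to set $g(\lambda) = H\big((1-\lambda)(x,z) + \lambda(v,y)\big)$ and integrate $g'(\lambda)$ along the chord. But this chord moves simultaneously in both coordinates, and the saddle structure gives you neither convexity nor concavity of $g$ along such a diagonal direction, nor joint (G\^ateaux, let alone Fr\'echet) differentiability of $H$. The per-slot convexity/concavity that you invoke to ``rescue'' the argument only controls one-sided directional derivatives within each slot separately; it does not produce a differentiable, or even absolutely continuous, $g$ for which the fundamental theorem of calculus applies, and --- as the paper itself notes in \Cref{sec:conclusion} --- weak derivatives of this type do not satisfy the chain rule. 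You flag this as ``the main obstacle'', and it is one your route does not actually surmount as written. The paper avoids it entirely via \Cref{lem:saddle_bound}: because $H(\cdot,\mfour)$ is concave and $H(\mone,\cdot)$ is convex, the subgradient-type inequalities \eqref{eq:Gateaux_subgradient}--\eqref{eq:Gateaux_subgradient_2}, applied twice with the roles of the two points interchanged and then added, give the two-sided sandwich
\begin{align*}
	\langle \mthree - \mone, \partial_1 H(\mthree, \mfour)\rangle + \langle \partial_2 H(\mone,\mtwo), \mfour - \mtwo\rangle
	\;\le\; H(\mthree,\mfour) - H(\mone,\mtwo)
	\;\le\; \langle \mthree - \mone, \partial_1 H(\mone, \mfour)\rangle + \langle \partial_2 H(\mone,\mfour), \mfour - \mtwo\rangle
\end{align*}
with no differentiability or chain-rule input at all. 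Both sides are then dominated by $U_0\,d\big((\mone,\mtwo),(\mthree,\mfour)\big)$ using the norm inequality and the local bound on $\|\partial H\|$ from Lipschitz continuity. One further care point your sketch misses: the right-hand side involves the cross-point $(\mone,\mfour)$, and you must check $(\mone,\mfour)$ lies in the ball where the local Lipschitz estimate holds --- the paper does this by taking $(\mone,\mtwo),(\mthree,\mfour)$ in $B_{\rll/2}(\mone_0,\mtwo_0)$, whence $d\big((\mone,\mfour),(\mone_0,\mtwo_0)\big) < \rll$ by the triangle inequality in the product norm. Replacing your chord integration by \Cref{lem:saddle_bound} and this ball-shrinking step closes the gap.
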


\begin{proof}[Proof of \Cref{thm:existence_conservation}]
	See \Cref{proof:existence_conservation}.
\end{proof}

\noindent
This completes the construction and theoretical development of arc Hamiltonian systems.


\section{Hamiltonian Dynamics of Bayesian Inference} \label{sec:application}

We shall present an arc Hamiltonian system of the minimum free energy $H$ defined in a metric space $(M, d)$ of negative log-likelihoods $f$ and probability measures $P$ on $\Theta$.
We derive the symplectic variation $\partial H(f, P) = ( - \partial_2 H(f, P) , \partial_1 H(f, P) )$ of the minimum free energy $H$.
It determines the Hamiltonian arc field on $(M, d)$ that generates flows of negative log-likelihoods $f_t$ and probability measures $P_t$.
We reveal the underlying invariance to conserve the free energy.

Recall that the main components to construct an arc Hamiltonian system $(H, M, d)$ are:
\begin{enumerate}
	\item a dual pair $\langle E_1, E_2 \rangle$ that contains a domain $M$ of interest;
	\item a domain $M$ and metric $d$ to constitute a complete metric space $(M, d)$;
	\item a compatible Hamiltonian function $H$ over the domain $M$.
\end{enumerate}
In \Cref{sec:domain}, we define a dual pair $\langle E_1, E_2 \rangle$ used throughout this section.
In \Cref{sec:metric_spcae_Bayes}, we detail a metric space $(M, d)$ of negative log-likelihoods $f$ and probability measures $P$, in which the minimum free energy $H$ is defined.
In \Cref{sec:cgf}, we show the compatibility of the minimum free energy $H$ and the generation of the energy-conserving flows in the metric space $(M, d)$.
Finally, \Cref{sec:illustration} provides a visual illustration of the generated flow.

\begin{remark}
Recall from \Cref{sec:notation} that the only assumption of the space $\Theta$, on which functions $f$ and measures $P$ are defined, is to be separable and complete metrizable.
Hence, all the results in this section hold regardless of whether $\Theta$ is finite or infinite dimensional.
Note that we assume no specific form of probability models of negative log-likelihoods $f$, where they are simply viewed as functions $f: \Theta \to \R$ on $\Theta$.
Functions $f$ to be contained in the domain $M$ will be specified by their tail-growth condition.
Hence, the `posterior' $P_*$ of the form \eqref{eq:posterior} in our result includes general Gibbs measures or pseudo-posteriors.
\end{remark}

\subsection{Dual Pair} \label{sec:domain}

First, we define a dual pair $\langle E_1, E_2 \rangle$ consisting of a vector space $E_1$ of functions $f: \T \to \R$ and a vector space $E_2$ of signed measures $P$ on $\T$.
A common condition to construct such a dual pair is the \emph{growth-rate condition} of functions $f$ and the \emph{moment condition} of signed measures $\P$.
The following spaces $E_1$ and $E_2$ are often referred to as a weighted function space and a weighted measure space \citep{Kolokoltsov2019}, where a given weight function $w: \Theta \to \R$ specifies the maximum growth-rate of functions.

\begin{definition}[Weighted Spaces] \label{def:weighted_space}
	Given a continuous function $w: \T \to \R$ that is lower bounded as $w(\t) \ge 1$ over $\T$,
	\begin{itemize}
		\item let $E_1$ be a vector space of all measurable functions $f: \Theta \to \R$ s.t.~$\sup_{\t \in \Theta} \frac{ | f(\t) | }{ w(\t) } < \infty$;
		\item let $E_2$ be a vector space of all signed measures $P \in \mathcal{M}$ s.t.~$\int_{\t \in \T} w(\t) \d | P |(\dt) < \infty$.
	\end{itemize}
	Define a bilinear functional $\langle \cdot, \cdot \rangle: E_1 \times E_2 \to \R$ by the integral
	\begin{align}
		\langle f, P \rangle := \int_\Theta f(\t) \d P(\t) , \label{label:duality_fp}
	\end{align}
	under which the pair of the vector spaces $E_1$ and $E_2$ constitutes a dual pair $\langle E_1, E_2 \rangle$.
\end{definition}

\noindent
The proof that the vector spaces $E_1$ and $E_2$ form a dual pair under the duality \eqref{label:duality_fp} can be found in \cite[Lemma~2.1]{Muller1997}.
For example, if setting $w(\theta) = 1 + \| \theta \|^p$ for some $p \ge 0$ when $\Theta = \R^d$, the space $E_1$ represents all measurable functions up to the $p$-th order polynomial growth; the space $E_2$ represents all signed measures endowed with the $p$-th moment condition.
A choice of the weight function $w$ is arbitrary for all the results in this work.

\subsection{Domain of Minimum Free Energy} \label{sec:metric_spcae_Bayes}

In the rest of \Cref{sec:application}, let $\langle E_1, E_2 \rangle$ be the dual pair in \Cref{def:weighted_space}.
The minimum free energy is defined on a metric space $(M, d)$ of functions $f$ in $E_1$ and measures in $E_2$.
We specify the condition of the domain $M$ as follows.

\begin{assumption} \label{asmp:domain_Bayes}
	Let $M$ be a set of all $(f, P) \in E_1 \times E_2$ s.t.
	\begin{itemize}
		\item $f$ is a continuous function that satisfies $f(\t) \ge \log w(\t)$;
		\item $P$ is a probability measure.
	\end{itemize}
\end{assumption}

\noindent
Since $\log w(\t) \ge 0$ due to $w(\t) \ge 1$ c.f.~\Cref{def:weighted_space}, the above condition on $f$ can be understood as (i) $f(\t)$ grows faster than $\log w(\t)$ and (ii) $f(\t)$ is lower bounded by $0$.
These two conditions are not restrictive for most negative log-likelihoods:
\begin{itemize}
	\item The growth of $\log w(\t)$ in the condition (i) is slower than \emph{any small} $q$-th order polynomial ($0 < q \ll 1$) if the weight function $w(\t)$ is of any $p$-th polynomial order ($0 < p < \infty$)\footnote{For illustration, consider only $\theta > 0$ in $\Theta = \R$. If $w(\t) = (1 + \t)^p$, we have $\log w(\t) = p \log(1 + \t)$. It is easy to verify---e.g., by L'H\^{o}pital's rule---that $\log(1 + \t)$ grows slower than any $q$-th order polynomial.}.
	\item It it natural to assume that a negative log-likelihood has some lower bound $U \in \R$ because it outght to be minimisable if the maximum likelihood estimator exists.
	We suppose $U = 0$ in the condition (ii) without loss of generality. For a negative log-likelihood $l(\t)$ whose value may fall below $U = 0$, we can consider a version $f(\t) = l(\t) + C$ shifted by a large constant $C$.
	The minimum free energy is invariant to any constant term of $f(\t)$ because it is defined with a `mean-zero' transform $f(\t) - \int_{\T} f(\t) \d \P(\t)$ that cancels the constant $C$:
	\begin{align*}
		f(\t) - \int_{\T} f(\t) \d P(\t) = ( l(\t) + C ) - \int_{\T} ( l(\t) + C ) \d P(\t) = l(\t) - \int_{\T} l(\t) \d P(\t) .
	\end{align*} 
\end{itemize}
For example, consider a normal location model $\mathcal{N}(\theta, 1)$ in $\Theta = \R$, whose negative log-likelihood is given by $f(\theta) = (x - \theta)^2 / 2 + \log( \sqrt{ 2 \pi } )$ for any datum $x \in \R$.
This negative log-likelihood $f$ is contained in $M$ if we set $w(\t) = 1 + \| \t \|^p$ for any $p \ge 2$.

We specify a metric $d$ on the domain $M$ that determines convergence of a flow evolving in the domain $M$.
We use the product metric of the weighted uniform metric $\varrho$ for the functions $f$ and the weight total variation metric $\gamma$ for the measures $P$ as below.

\begin{assumption} \label{asmp:metric_Bayes}
	For any $f, g \in E_1$ and any $P, Q \in E_2$, define
	\begin{align}
		\varrho(f, g) := \sup_{ \t \in \Theta } \frac{ | f(\t) - g(\t) | }{ w(\t) } \quad \text{and} \quad \gamma(P, Q) := \int_{ \Theta} w(\t) \d | \P - \Q |(\t) . \label{eq:wu_tv_metrics}
	\end{align}
	Let $d$ be a metric on the set $M$ in \Cref{asmp:domain_Bayes} s.t.~$d((f, P), (g, Q)) := \varrho(f, g) + \gamma(P, Q)$.
\end{assumption}

\noindent
The weighted uniform metric $\varrho$ is a natural choice that allows functions $f$ to grow up to the rate of the weight function $w(\t)$.
The weighted total variation metric $\gamma$ is topologically stronger than most of the other common metrics for measures $P$.
It controls the standard total variation metric and the Kantorovich–Rubinstein metric.
It also controls the Wasserstein metric given an appropriate weight function $w$ \citep{Villani2009}.
If a flow of an arc Hamilton system exists under some metric $d$, the flow exists under any other metric weaker than $d$.
This is because the convergence in \eqref{eq:H_mutation_equation} under the metric $d$ immediately implies the convergence under any other weaker metric.
Therefore, the existence of a flow under \Cref{asmp:metric_Bayes} is sufficient for the existence in any other weaker metrics.

Finally, we show that the metric space $(M, d)$ is complete.

\begin{proposition} \label{prop:M_completeness}
	Suppose Assumptions \ref{asmp:domain_Bayes} and \ref{asmp:metric_Bayes}.
	The metric space $(M, d)$ is complete.
\end{proposition}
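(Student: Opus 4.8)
The plan is to prove completeness of $(M, d)$ by showing that $M$ is a closed subset of the complete metric space $(E_1 \times E_2, d)$ that arises from combining the weighted uniform metric $\varrho$ on $E_1$ with the weighted total variation metric $\gamma$ on $E_2$. Since the product metric $d = \varrho + \gamma$ makes $E_1 \times E_2$ complete as soon as each factor is complete, and since a closed subset of a complete metric space is complete, it suffices to establish two facts: (i) $(E_1, \varrho)$ and $(E_2, \gamma)$ are each complete, and (ii) the defining constraints of $M$ in \Cref{asmp:domain_Bayes} — continuity of $f$ with $f(\t) \ge \log w(\t)$, and $P$ being a probability measure — are preserved under $d$-limits.

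First I would verify completeness of the two factor spaces. For $(E_1, \varrho)$: a $\varrho$-Cauchy sequence $(f_n)$ is a sequence for which $f_n / w$ is uniformly Cauchy, hence converges uniformly to some bounded function $g$; setting $f := w g$ gives $f \in E_1$ with $\varrho(f_n, f) \to 0$. For $(E_2, \gamma)$: the map $P \mapsto w \cdot P$ (the signed measure with density $w$ against $|P|$, more carefully the pushforward-type reweighting $\d(\iota P)(\t) = w(\t)\,\d P(\t)$) is a linear isometry from $(E_2, \gamma)$ onto a subspace of the space of all finite signed measures equipped with the total variation norm, which is a Banach space \citep{Cohn2013}; one then checks that the image is closed (the weight $w \ge 1$ guarantees that total-variation limits of reweighted measures are themselves reweightings of elements of $E_2$), giving completeness. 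Alternatively, one argues directly that a $\gamma$-Cauchy sequence is $\|\cdot\|_{\Ms}$-Cauchy since $w \ge 1$, extracts a total-variation limit $P$, and then shows $\int w \, \d|P_n - P| \to 0$ using that the densities $\d P_n / \d|P_n|$ transported against $w$ form a Cauchy sequence in the $w$-weighted $L^1$ sense.

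Next I would show $M$ is closed. Let $(f_n, P_n) \in M$ with $d((f_n,P_n),(f,P)) \to 0$ for some $(f,P) \in E_1 \times E_2$. Since $\gamma$ dominates the total variation metric (because $w \ge 1$), $P_n \to P$ in total variation, so $P(\Theta) = \lim P_n(\Theta) = 1$ and $P \ge 0$, i.e. $P \in \Mp$. For the function component, $\varrho(f_n, f) \to 0$ means $f_n \to f$ uniformly on $\Theta$ after dividing by $w$; since each $f_n$ is continuous and $w$ is continuous and bounded below by $1$, uniform convergence of $f_n/w$ to $f/w$ yields continuity of $f/w$, hence of $f = w \cdot (f/w)$. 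Finally, pointwise convergence $f_n(\t) \to f(\t)$ (a consequence of $\varrho$-convergence, since $|f_n(\t) - f(\t)| \le w(\t)\,\varrho(f_n,f) \to 0$ for each fixed $\t$) together with $f_n(\t) \ge \log w(\t)$ gives $f(\t) \ge \log w(\t)$. Thus $(f, P) \in M$, so $M$ is closed and therefore complete.

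The main obstacle is step (i), specifically the completeness of $(E_2, \gamma)$ — one must be careful that the $w$-weighting does not destroy closedness, and that a $\gamma$-Cauchy sequence does not "escape mass to infinity" in a way that produces a limiting measure failing the moment condition $\int w \, \d|P| < \infty$. The bound $w \ge 1$ is exactly what rules this out: it both makes $\gamma$ stronger than the plain total variation metric (so a total-variation limit exists) and forces the limiting measure to inherit finiteness of the $w$-integral via Fatou's lemma applied to $\int w \, \d|P_n - P_m|$. Once that is handled, the remaining verifications are routine limit-interchange arguments.
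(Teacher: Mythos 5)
Your proposal is correct and follows essentially the same strategy as the paper: embed $M$ into a complete product metric space built from the weighted uniform metric $\varrho$ and the weighted total-variation metric $\gamma$, then show $M$ is closed by verifying that the defining constraints (nonnegativity-by-$\log w$ and continuity of $f$, $P$ being a probability measure) pass to limits, using exactly the same observations ($w \ge 1$ dominates the plain total variation metric, and $\varrho$-convergence implies pointwise convergence). The only superficial difference is that the paper takes the ambient space to be $B_1 \times B_2$ with $B_1$ the continuous functions (citing Kolokoltsov for its completeness, so continuity of the limit is automatic), whereas you embed into $E_1 \times E_2$ and therefore also check continuity of the limit directly and sketch completeness of the factor spaces from scratch — both of which you do correctly.
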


\begin{proof}[Proof of \Cref{prop:M_completeness}]
	See \Cref{sec:appendix_b1}.
\end{proof}

\noindent
This completes specification of the domain $(M, d)$ of the minimum free energy.

\subsection{System of Minimum Free Energy} \label{sec:cgf}

We shall derive the first variation of the minimum free energy.
We then show the generation of the energy-conserving flows by the corresponding Hamiltonian arc field.
First, we provide a formal definition of the minimum free energy.

\begin{assumption} \label{asmp:Hamiltonian_Bayes}
	For the set $M$ in \Cref{asmp:domain_Bayes}, define $H: E_1 \times E_2 \to \Re$ by
	\begin{align}
		H(f, \P) := - \log\left( \int_\Theta \exp\left( - \left( f(\t) - \int_\T f(\t) \d \P(\t) \right) \right) \d \P(\t) \right)  \label{eq:Hamiltonian_Bayes}
	\end{align}
	for each $(f, \P) \in M$ and by $H(f, \P) := \infty$ for all $(f, \P) \not \in M$.
\end{assumption}

\noindent
All the values of $H$ outside the domain $M$ are set to $\infty$ in convex-analytic manner.
It takes a finite value in $\R$ inside the domain $M$, where the integral inside the logarithm function is positive and finite because $f$ is non-negative and $P$ is a probability measure.
The next proposition confirms that the minimum free energy $H$ is saddle.

\begin{proposition} \label{thm:Hamiltonian_Bayes_saddle}
	Suppose \Cref{asmp:Hamiltonian_Bayes}.
	The function $H$ is saddle.
\end{proposition}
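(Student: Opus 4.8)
The plan is to rewrite $H$ in a form that exposes its saddle structure directly. Because the mean $\int_\Theta f\,\d P$ appearing in \eqref{eq:Hamiltonian_Bayes} does not depend on the outer integration variable, it factors out of the exponential, so that for every $(f,P)\in M$ we have
\begin{align*}
	H(f,P) = -\int_\Theta f(\t)\,\d P(\t) \;-\; \log\left( \int_\Theta \exp\left(-f(\t)\right)\,\d P(\t) \right).
\end{align*}
As noted after \Cref{asmp:Hamiltonian_Bayes}, both terms are finite on $M$: the first because $|f(\t)|\le C\,w(\t)$ for some constant $C$ (as $f\in E_1$) and $\int_\Theta w\,\d P<\infty$ (as $P\in E_2$); the second because $0<\exp(-f(\t))\le 1$ on $\Theta$ (since $f\ge\log w\ge 0$) and $P$ is a probability measure, so that integral lies in $(0,1]$.

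First I would prove concavity in $f$. Fix any $P$ with $(f_0,P)\in M$ for some $f_0$. The effective domain $\Dom\left(H(\cdot,P)\right)=\{\, f\in E_1 : f\text{ continuous},\ f\ge\log w \,\}$ is non-empty and convex, since a convex combination of continuous functions bounded below by $\log w$ is again such a function. On this set, $f\mapsto -\int_\Theta f\,\d P$ is linear, while $f\mapsto \log\int_\Theta\exp(-f)\,\d P$ is convex: for $f_1,f_2$ in the domain and $\lambda\in[0,1]$, H\"older's inequality with conjugate exponents $1/(1-\lambda)$ and $1/\lambda$, applied to $\exp\left(-((1-\lambda)f_1+\lambda f_2)\right)=\left(\exp(-f_1)\right)^{1-\lambda}\left(\exp(-f_2)\right)^{\lambda}$, gives $\int_\Theta\exp\left(-((1-\lambda)f_1+\lambda f_2)\right)\d P\le\left(\int_\Theta\exp(-f_1)\,\d P\right)^{1-\lambda}\left(\int_\Theta\exp(-f_2)\,\d P\right)^{\lambda}$, and taking logarithms yields the convexity. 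Hence $-\log\int_\Theta\exp(-f)\,\d P$ is concave, and $H(\cdot,P)$, being a sum of a linear and a concave function on a convex domain, is concave.

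Next I would prove convexity in $P$. Fix any $f$ with $(f,P_0)\in M$ for some $P_0$. The effective domain $\Dom\left(H(f,\cdot)\right)=\{\, P\in E_2 : P\text{ a probability measure} \,\}$ is non-empty and convex, since a convex combination of probability measures in $E_2$ is again a probability measure and remains in $E_2$ by convexity of $P\mapsto\int_\Theta w\,\d|P|$. On this set, $P\mapsto -\int_\Theta f\,\d P$ is linear, and $P\mapsto\int_\Theta\exp(-f)\,\d P$ is linear and strictly positive, so $P\mapsto -\log\int_\Theta\exp(-f)\,\d P$ is convex as the composition of the convex function $t\mapsto -\log t$ on $(0,\infty)$ with an affine map. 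Thus $H(f,\cdot)$ is a sum of a linear and a convex function on a convex domain, hence convex. Together with the previous paragraph, this establishes that $H$ is saddle. I do not expect a genuine obstacle here; the step requiring the most care is the bookkeeping at the level of extended-real-valued functions, namely verifying that the effective domains $\Dom(H(\cdot,P))$ and $\Dom(H(f,\cdot))$ are convex and that $H$ is finite on them, so that the defining concavity and convexity inequalities only ever involve finite quantities.

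An alternative route, should one prefer to connect with the introduction, is to invoke the Gibbs--Donsker--Varadhan formula to write $H(f,P)=\inf_{Q\in\Mp}\left[\langle f,Q\rangle-\langle f,P\rangle+\KL(Q,P)\right]$: concavity in $f$ is then immediate because the bracket is affine in $f$, and convexity in $P$ follows from joint convexity of $(Q,P)\mapsto\KL(Q,P)$ (the perspective of $t\mapsto t\log t$) together with the fact that partial minimisation of a jointly convex function over a convex set is convex. The direct computation above is slightly shorter and avoids appealing to the variational formula.
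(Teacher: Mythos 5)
Your proof is correct and takes essentially the same route as the paper: it decomposes $H(f,P) = -\langle f, P \rangle - \log\int_\Theta \exp(-f)\,\d P$, establishes concavity in $f$ via H\"older's inequality and convexity in $P$ via convexity of $-\log$ composed with an affine map, and combines these with bilinearity of $\langle f,P\rangle$. The only presentational difference is that the paper isolates the saddle property of $G(f,P)=-\log\int\exp(-f)\,\d P$ as a separate lemma (\Cref{lem:Function_G_saddle}) before adding the bilinear term.
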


\begin{proof}[Proof of \Cref{thm:Hamiltonian_Bayes_saddle}]
	See \Cref{proof:Hamiltonian_Bayes_saddle}.
\end{proof}

We derive a form of the symplectic variation $\partial H(f, P) = (- \partial_2 H(f, P), \partial_1 H(f, P))$ of the minimum free energy $H$.
In addition, we confirm that $H$ is compatible (c.f.~\Cref{def:compatible_H}).
Recall from \Cref{sec:notation} that, if a measure $Q$ admits the Radon-Nikodym derivative $h$ with respect to a measure $R$, we use a notation $d Q(\t) = h(\t) d R(\t)$ for better presentation.

\begin{theorem} \label{thm:Hamiltonian_Bayes_variation}
	Suppose \Cref{asmp:Hamiltonian_Bayes}.
	At each $(f, P) \in M$, we have 
	\begin{align*}
		(- \partial_2 H(f, P), \partial_1 H(f, P)) = ( f_* + f , P_* - P )
	\end{align*}
	for a function $f_* \in E_1$ and a measure $P_* \in E_2$ defined by
	\begin{align*}
		f_*(\t) := \frac{ \exp( - f(\t) ) }{ \int_\Theta \exp( - f(\t) ) \d P(\t) } \qquad \text{and} \qquad \d P_*(\t) := \frac{ \exp( - f(\t) ) }{ \int_\Theta \exp( - f(\t) ) \d P(\t) } \d P(\t) .
	\end{align*}
\end{theorem}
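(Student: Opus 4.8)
The plan is to reduce $H$ to a form that is affine in one argument and the negative logarithm of an affine function in the other, and then read off the two partial first variations from difference quotients. Expanding the mean-zero transform inside \eqref{eq:Hamiltonian_Bayes} and cancelling the factor $\exp(\int_\T f\,\d P)$ common to the numerator and denominator of the logarithm's argument gives, for every $(f,P)\in M$,
\begin{align*}
	H(f,P)=-\langle f,P\rangle-\log\langle\exp(-f),P\rangle,
\end{align*}
where $\langle\exp(-f),P\rangle=\int_\Theta\exp(-f(\t))\,\d P(\t)\in(0,1]$ because $0\le\log w(\t)\le f(\t)$ on $\Theta$ (so $\exp(-f)\le1$) and $P$ is a probability measure. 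Both partial first variations are then obtained by letting $\lambda\to0^+$ in the difference quotients along admissible directions, i.e.\ directions that keep the argument inside $\Dom(H)$ as in \Cref{def:gateux_derivative}.

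For $\partial_1 H(f,P)$, fix $P$ and take $g\in C(f)$. Since the domain of the concave map $H(\cdot,P)$, namely the continuous functions $\ge\log w$, is convex, $f+\lambda g=(1-\lambda)f+\lambda(f+g)$ lies in it for all $\lambda\in[0,1]$, hence $f+\lambda g\ge\log w\ge0$ throughout. The affine term contributes $-\langle g,P\rangle$. For the logarithmic term I would differentiate under the integral sign: the mean value theorem writes $\lambda^{-1}(\exp(-(f+\lambda g))-\exp(-f))=-g\exp(-(f+\xi g))$ for some $\xi\in(0,\lambda)$, which, since $f+\xi g\ge0$, is dominated in modulus by $|g|\le w\,\varrho(g,0)$, a $P$-integrable majorant because $P\in E_2$. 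Dominated convergence gives $\tfrac{\d}{\d\lambda}\big|_{0^+}\langle\exp(-(f+\lambda g)),P\rangle=-\langle g\exp(-f),P\rangle$, so by the chain rule the directional derivative of $H(\cdot,P)$ at $f$ along $g$ equals $-\langle g,P\rangle+\langle g\exp(-f),P\rangle/\langle\exp(-f),P\rangle=-\langle g,P\rangle+\int_\Theta g\,\d P_*=\langle g,P_*-P\rangle$. Since $\exp(-f)\le1$ we have $P_*\le\langle\exp(-f),P\rangle^{-1}P$, so $\int_\Theta w\,\d P_*<\infty$ and $P_*-P\in E_2$; uniqueness of the representing element follows from the dual-pair separation property, noting that $C(f)$ contains $wK$ and $\psi+wK$ for every continuous $\psi\in E_1$ and every sufficiently large $K>0$, so differences of elements of $C(f)$ exhaust the continuous functions of $E_1$, which determine a finite signed Borel measure on the Polish space $\Theta$. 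Hence $\partial_1 H(f,P)=P_*-P$.

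For $\partial_2 H(f,P)$ the computation is elementary. Fix $f$ and take $Q\in C(P)$; then $P+\lambda Q=(1-\lambda)P+\lambda(P+Q)$ is a probability measure for all $\lambda\in[0,1]$, so $Q(\T)=0$, $P+Q\ge0$, and $|Q|\le P+(P+Q)\in E_2$, whence $\langle f,Q\rangle$ and $\langle\exp(-f),Q\rangle$ are finite ($f\in E_1$, $\exp(-f)$ bounded). Therefore
\begin{align*}
	H(f,P+\lambda Q)=-\langle f,P\rangle-\lambda\langle f,Q\rangle-\log\big(\langle\exp(-f),P\rangle+\lambda\langle\exp(-f),Q\rangle\big)
\end{align*}
is smooth in $\lambda$ near $0$, and differentiating at $\lambda=0^+$ gives $-\langle f,Q\rangle-\langle\exp(-f),Q\rangle/\langle\exp(-f),P\rangle=-\langle f+f_*,Q\rangle$. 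Since $f\ge0$ makes $f_*$ bounded, $f_*+f\in E_1$, and one reads off $\partial_2 H(f,P)=-(f_*+f)$, i.e.\ $-\partial_2 H(f,P)=f_*+f$. The same explicit forms also verify compatibility in the sense of \Cref{def:compatible_H}: $(f,P)+s(f_*+f,P_*-P)=\big((1+s)f+sf_*,\,(1-s)P+sP_*\big)$ stays in $M$ for every $s\in[0,1]$, because $(1-s)P+sP_*$ is a convex combination of probability measures in $E_2$ and $(1+s)f+sf_*$ is continuous, lies in $E_1$, and satisfies $(1+s)f+sf_*\ge(1+s)\log w\ge\log w$ from $f\ge\log w\ge0$ and $f_*\ge0$.

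The main obstacle is the dominated-convergence step for differentiating the logarithmic term in $f$, together with checking that $P_*-P$ and $f_*+f$ genuinely lie in the weighted spaces $E_2$ and $E_1$; the $P$-derivative is elementary once the relevant integrals are seen to be finite. A secondary point is uniqueness of the first variation against the restricted direction sets: for the $f$-argument it holds as indicated, whereas for the $P$-argument the defining identity only pins down $\partial_2 H(f,P)$ up to an additive constant (each $Q\in C(P)$ has $Q(\T)=0$, so constant functions pair to zero against $C(P)$, consistently with $H(f,\cdot)$ being invariant under adding a constant to $f$); the statement records the canonical representative fixed by the stated normalisation of $f_*$.
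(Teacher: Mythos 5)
Your proof is correct and follows essentially the same route as the paper's: rewrite $H(f,P)=-\langle f,P\rangle-\log\langle\exp(-f),P\rangle$, differentiate the linear term directly, differentiate the logarithmic term in the $f$-direction by dominated convergence (using the mean-value bound $|\exp(-(f+\lambda g))-\exp(-f)|/\lambda\le|g|\le w\,\varrho(g,0)$, which is $P$-integrable because $P\in E_2$), and handle the $P$-direction by observing that $H(f,P+\lambda Q)$ is explicitly an affine-minus-log-affine scalar function of $\lambda$. The paper organises the same computation through intermediate lemmas for $F(f,P)=\int\exp(-f)\,\d P$ and $G(f,P)=-\log F(f,P)$, but the content is identical.

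One genuine observation you make that the paper's proof does not: for the $P$-argument the representing element $\partial_2 H(f,P)\in E_1$ is pinned down by $D_2 H_{f,P}(Q)=\langle\partial_2 H(f,P),Q\rangle$ on $C(P)$ only up to an additive constant, since every admissible direction $Q\in C(P)$ satisfies $Q(\Theta)=0$. Read literally against Definition \ref{def:gateux_derivative}, which requires a \emph{unique} representing element, this means the first variation $\partial_2 H(f,P)$ does not strictly exist unless a normalisation is fixed. The paper's proof simply reads off $\partial_2 H(f,P)=-(f_*+f)$ from the computed directional derivative and invokes ``by definition of the first variation'' without addressing uniqueness, so this is a gap in the paper's argument rather than in yours; still, your remark that ``the statement records the canonical representative'' is an informal patch rather than a repair of Definition \ref{def:gateux_derivative}, and a cleaner fix would either weaken the definition to demand a representing element modulo the annihilator of $C(\mone)$ or state the normalisation convention explicitly. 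Your uniqueness argument for $\partial_1 H(f,P)$ (differences of elements of $C(f)$ exhaust the continuous functions in $E_1$, which separate finite signed Borel measures on a Polish space) is sound.
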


\begin{proof}[Proof of \Cref{thm:Hamiltonian_Bayes_variation}]
	See \Cref{sec:appendix_c}.
\end{proof}

\begin{proposition} \label{thm:Hamiltonian_Bayes}
	Suppose \Cref{asmp:Hamiltonian_Bayes}. The function $H$ is compatible.
\end{proposition}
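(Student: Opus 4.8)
The plan is to reduce compatibility (\Cref{def:compatible_H}) to a routine check that the shifted pair never leaves the domain. By \Cref{thm:Hamiltonian_Bayes_variation}, at every $(f,P)\in M$ the symplectic variation $\partial H(f,P)=(f_*+f,\,P_*-P)$ already exists, where $f_*(\t)=\exp(-f(\t))/Z$ and $\mathrm{d}P_*(\t)=f_*(\t)\,\mathrm{d}P(\t)$ with $Z:=\int_\Theta \exp(-f(\t))\,\mathrm{d}P(\t)\in(0,\infty)$. So it only remains to show that $(f_s,P_s):=(f,P)+s\,\partial H(f,P)=\bigl((1+s)f+sf_*,\,(1-s)P+sP_*\bigr)$ lies in $M=\Dom(H)$ for every $s\in[0,1]$, that is, that it satisfies the two bullet conditions of \Cref{asmp:domain_Bayes}.

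For the first component I would verify the requirements on $f_s$ in turn. Membership $f_s\in E_1$ is immediate because $f\in E_1$, $f_*\in E_1$ (the latter provided by \Cref{thm:Hamiltonian_Bayes_variation}), and $E_1$ is a vector space. Continuity of $f_s$ follows since $f$ is continuous on $\Theta$ by the definition of $M$ and $f_*=\exp(-f)/Z$ is a continuous composition. Finally, the lower bound $f_s(\t)\ge\log w(\t)$ holds because $s\ge0$, $f_*(\t)\ge0$, $1+s\ge1$, and $f(\t)\ge\log w(\t)\ge0$ together give $f_s(\t)=(1+s)f(\t)+sf_*(\t)\ge f(\t)\ge\log w(\t)$.

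For the second component I would first record that $P_*$ is itself a probability measure in $E_2$: non-negativity is clear from $f_*\ge0$; its total mass is $\int_\Theta f_*\,\mathrm{d}P=Z^{-1}\int_\Theta\exp(-f)\,\mathrm{d}P=1$; and $\int_\Theta w\,\mathrm{d}|P_*|=\int_\Theta w f_*\,\mathrm{d}P\le Z^{-1}\int_\Theta w\,\mathrm{d}P<\infty$, using $\exp(-f)\le1$ (which holds since $f\ge0$) and $P\in E_2$. Hence for each $s\in[0,1]$ the measure $P_s=(1-s)P+sP_*$ is a convex combination of two probability measures in $E_2$, so it is again a probability measure in $E_2$. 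Combining the two components yields $(f_s,P_s)\in M$ for all $s\in[0,1]$, which together with the existence of $\partial H(f,P)$ from \Cref{thm:Hamiltonian_Bayes_variation} is precisely compatibility.

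I do not anticipate a genuine obstacle: the argument is bookkeeping over the domain conditions. The only place that warrants a moment's attention is that the coefficient multiplying $f$ in $f_s$ is $1+s$ rather than $1-s$, so one must invoke $f\ge\log w\ge0$ to see that this inflation only strengthens the lower-bound condition rather than threatening it; everything else is closure of $E_1$ and $E_2$ under linear combinations and convexity of the set of probability measures.
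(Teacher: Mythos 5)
Your proposal is correct and follows essentially the same approach as the paper's own proof: both invoke \Cref{thm:Hamiltonian_Bayes_variation} to obtain the symplectic variation $(f_*+f,\,P_*-P)$ and then check directly that $(f,P)+s(f_*+f,\,P_*-P)$ satisfies the two conditions of \Cref{asmp:domain_Bayes} for every $s\in[0,1]$, using $f\ge\log w\ge 0$, $f_*>0$, and the fact that $(1-s)P+sP_*$ is a convex combination of probability measures. If anything, you are slightly more thorough than the paper, which checks the lower bound on $f_s$ and the total mass of $P_s$ but leaves the verification of continuity, membership in $E_1\times E_2$, and non-negativity of $P_s$ implicit.
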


\begin{proof}[Proof of \Cref{thm:Hamiltonian_Bayes}]
	See \Cref{proof:Hamiltonian_Bayes}.
\end{proof}

This result establishes the characterisation $P_* - P$ of the difference of the posterior $P_*$ and the prior $P$ as the first variation $\partial_1 H(f, P)$.
Now, we have the domain $M$, the metric $d$, and the compatible Hamiltonian function $H$ to construct the arc Hamiltonian system $(H, M, d)$ of the minimum free energy.
We arrive at a main result on the generation of the flow everywhere in $(M, d)$ that obeys the invariance to conserve the minimum free energy.

\begin{theorem} \label{thm:Hamiltonian_Dyamics_Bayes}
	Suppose Assumptions \ref{asmp:domain_Bayes}--\ref{asmp:Hamiltonian_Bayes}.
	The tuple $(H, M, d)$ is an arc Hamiltonian system.
	Given any initial state $(\mone, \mtwo) \in M$, there exists a global flow $t \mapsto (\mone_t, \mtwo_t)$ of the system from $(\mone_0, \mtwo_0) = (\mone, \mtwo)$ in the indefinite interval $[0, \infty)$, along which the value of the Hamiltonian function $H(\mone_t, \mtwo_t)$ is invariant everywhere in $[0, \infty)$.
\end{theorem}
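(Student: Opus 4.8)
The plan is to obtain the statement as an instance of \Cref{thm:existence_conservation}, by verifying its hypotheses, Assumptions \ref{asmp:flow_existence_3} and \ref{asmp:conservation_of_energy_3}, for the tuple $(H, M, d)$. The results already established in this section supply everything those two assumptions do not cover: $H$ is saddle (\Cref{thm:Hamiltonian_Bayes_saddle}) and compatible (\Cref{thm:Hamiltonian_Bayes}) with $\Dom(H) = M$, the metric space $(M, d)$ is complete (\Cref{prop:M_completeness}), and the symplectic variation has the explicit form $\partial H(f, P) = (f_* + f,\, P_* - P)$ with $f_*$ and $P_*$ as in \Cref{thm:Hamiltonian_Bayes_variation}. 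First I would equip $E_1$ with the weighted uniform norm $\| f \|_{E_1} := \sup_{\t \in \T} | f(\t) | / w(\t)$ and $E_2$ with the weighted total variation norm $\| P \|_{E_2} := \int_\T w(\t)\, \d | P |(\t)$, which are finite on $E_1$ and $E_2$ by \Cref{def:weighted_space}. Then $\langle f, P \rangle = \int_\T f\, \d P \le \int_\T | f |\, \d | P | \le \| f \|_{E_1} \| P \|_{E_2}$, and the norm-induced metric $d_*$ on $E := E_1 \times E_2$ given by $d_*((f,P),(g,Q)) := \| f - g \|_{E_1} + \| P - Q \|_{E_2}$ restricts on $M \times M$ to exactly the metric $d = \varrho + \gamma$ of \Cref{asmp:metric_Bayes}. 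This already delivers \Cref{asmp:conservation_of_energy_3} and the first item of \Cref{asmp:flow_existence_3}.

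The substance lies in the second item of \Cref{asmp:flow_existence_3}: local Lipschitz continuity of $(f,P) \mapsto \partial H(f,P) = (f_* + f,\, P_* - P)$ from $(M,d)$ into $(E, d_*)$. Since $(f,P) \mapsto (f, -P)$ is distance preserving from $(M,d)$ into $(E, d_*)$ and $\partial H(f,P) = (f, -P) + (f_*, P_*)$, it is enough to show that $(f,P) \mapsto f_*$ and $(f,P) \mapsto P_*$ are locally Lipschitz into $(E_1, \| \cdot \|_{E_1})$ and $(E_2, \| \cdot \|_{E_2})$ respectively. The pivotal object is the normalising constant $Z(f,P) := \int_\T \exp(-f(\t))\, \d P(\t)$, which lies in $(0, 1]$ because $f(\t) \ge \log w(\t) \ge 0$ forces $\exp(-f(\t)) \le 1/w(\t) \le 1$ (c.f.~\Cref{asmp:domain_Bayes}) and $P$ is a probability measure. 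The same inequality, together with the mean value estimate $| \exp(-f(\t)) - \exp(-g(\t)) | \le \max\{ \exp(-f(\t)), \exp(-g(\t)) \}\, | f(\t) - g(\t) | \le w(\t)^{-1}\, \varrho(f,g)\, w(\t) = \varrho(f,g)$, would let me show that $Z$ is globally $1$-Lipschitz on $(M,d)$. Hence on a ball $B_{r_0}(f_0, P_0)$ with $r_0 < Z(f_0, P_0)/2$ one has the uniform lower bound $Z \ge c_0 := Z(f_0, P_0)/2 > 0$, controlling the denominators in $f_* = \exp(-f)/Z$ and $\d P_* = f_*\, \d P$.

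With $Z$ bounded below by $c_0$ over the ball, I would estimate the increments directly. Over the common denominator $Z(f,P)\, Z(g,Q) \ge c_0^2$, the numerator of $f_* - g_*$ is $Z(g,Q)\,(\exp(-f) - \exp(-g)) + \exp(-g)\,(Z(g,Q) - Z(f,P))$; bounding $Z(g,Q) \le 1$, $\exp(-g) \le 1/w \le 1$, $| \exp(-f(\t)) - \exp(-g(\t)) | \le \varrho(f,g)$ and $| Z(g,Q) - Z(f,P) | \le d((f,P),(g,Q))$ produces the $w$-free pointwise estimate $| f_*(\t) - g_*(\t) | \le 2\, c_0^{-2}\, d((f,P),(g,Q))$, and hence $\varrho(f_*, g_*) \le 2\, c_0^{-2}\, d((f,P),(g,Q))$ since $w \ge 1$. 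For $P_*$, I would decompose $P_* - Q_* = f_*\,(P - Q) + (f_* - g_*)\, Q$ as signed measures, so that $\gamma(P_*, Q_*) \le \int_\T w\, |f_*|\, \d | P - Q | + \int_\T w\, | f_* - g_* |\, \d Q$; here $w(\t)\, |f_*(\t)| \le c_0^{-1}$ pointwise, making the first integral at most $c_0^{-1}\, \gamma(P, Q)$, while the $w$-free bound on $f_* - g_*$ makes the second at most $2\, c_0^{-2}\, d((f,P),(g,Q)) \int_\T w\, \d Q = 2\, c_0^{-2}\, \| Q \|_{E_2}\, d((f,P),(g,Q))$, with $\| Q \|_{E_2}$ uniformly bounded over the ball. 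This verifies \Cref{asmp:flow_existence_3}, whereupon \Cref{thm:existence_conservation} yields the conclusion. The step I expect to be the main obstacle is exactly this weighted estimate for $P_*$: one needs a pointwise, $w$-independent Lipschitz bound on $f_* - g_*$ so that integrating it against $w\, \d Q$ costs only the finite total weight $\| Q \|_{E_2}$, and it is the hypothesis $f \ge \log w$ --- not merely $f \ge 0$ --- that, via $\exp(-f) \le 1/w$, keeps each such weighted estimate finite.
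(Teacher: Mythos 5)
Your proof is correct and follows essentially the same route as the paper's: verify \Cref{asmp:flow_existence_3} and \Cref{asmp:conservation_of_energy_3} for the weighted norms $\| \cdot \|_{E_1}, \| \cdot \|_{E_2}$, then invoke \Cref{thm:existence_conservation}, with the bulk of the work being local Lipschitz continuity of $(f,P) \mapsto (f_*, P_*)$ via a uniform lower bound on $Z(f,P)$ on a ball, a pointwise estimate on $f_* - g_*$ using $\exp(-f) \le 1/w$, and the decomposition $\d(P_* - Q_*) = f_*\,\d(P-Q) + (f_* - g_*)\,\d Q$. The one place you diverge is in the lower bound on $Z$: you observe that $Z$ is globally $1$-Lipschitz on $(M,d)$ and then shrink the radius to $r_0 < Z(f_0,P_0)/2$, whereas the paper's Lemma on $Z$ applies Jensen's inequality, $1/Z(f,P) \le \exp(\langle f, P \rangle) \le \exp(\|f\|_{E_1}\|P\|_{E_2})$, yielding a bound $W_0$ valid on any ball of radius $\le 1$ without tying $r_0$ to the value $Z(f_0,P_0)$. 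Both are perfectly valid for local Lipschitz continuity; your version is arguably more elementary (no Jensen's), while the paper's gives a radius-independent statement that is slightly cleaner to package as a standalone lemma. Your closing remark correctly identifies the role of $f \ge \log w$: it is what delivers the $\t$-uniform bound $\sup_\t |\exp(-f(\t)) - \exp(-g(\t))| \le \varrho(f,g)$ underlying the pointwise control of $f_* - g_*$, which in turn is what makes the weighted integral against $w\,\d Q$ finite.
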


\begin{proof}[Proof of \Cref{thm:Hamiltonian_Dyamics_Bayes}]
	See \Cref{proof:Hamiltonian_Dyamics_Bayes}.
\end{proof}

This result provides a fruitful insight into infinite-dimensional geometrical properties associated with Bayesian inference.
By \Cref{thm:Hamiltonian_Bayes_variation}, the Hamiltonian arc field $\Phi$ is given by
\begin{align*}
	\Phi_s(f, P) = (f, P) + s (- \partial_2 H(f, P), \partial_1 H(f, P) ) = ( (1 + s) f + s f_*, (1 - s) P + s P_* ) .
\end{align*}
The Hamiltonian arc field consists of the arc field $(1 + s) f + s f_*$ on the space of negative log-likelihood and the arc field $(1 - s) P + s P_*$ on the space of probability measures.
	
\begin{remark} \label{rm:interpretation}
	Our result confers a geometrical significance of the transition from the prior $P$ to the posterior $P_*$ in the mixture geodesic $(1 - s) P + s P_*$ over $s \in [0, 1]$.
	Assignment of this transition to every probability measures $P$ acts as a certain field on the space of probability measures.
	That field is the arc field driven by the first variation $\partial_1 H(f, P)$ of the minimum free energy.
	It forms the Hamiltonian arc field with another arc field $(1 + s) f + s f_*$ driven by the other first variation $- \partial_2 H(f, P)$.
	Our result reveals the underlying invariance of the conservation of the minimum free energy.
	\Cref{fig:illustration} illustrates this dynamics.
\end{remark}

\begin{remark} \label{rm:contrast}
	As aforementioned in \Cref{sec:introduction}, the minimum free energy $H(f, P)$ equals to the free energy $\mathcal{F}(P_*)$ at the minimiser $P_*$ for the mean-zero potential $f - \int_\T f(\t) d P(\t)$.
	Our result formalises the dynamics of the potential $f_t$ and reference probability measure $P_t$ of the Gibbs measure $P_*$ that conserves the value of the free energy $\mathcal{F}(P_*)$ over time $t \in [0, \infty)$.
\end{remark}

\subsection{Illustration of Generated Flows} \label{sec:illustration}

While our results are all theoretical, we can approximately visualise a flow $t \mapsto (f_t, P_t)$ of the system of the minimum free energy.
We first consider two discrete cases where the parameter space $\Theta$ is (i) a set $\Theta = \{ 1 \}$ and (ii) a set $\Theta = \{ 1, 2, 3 \}$, so that the flow and the simplectic variation can be visualised in the Euclidean space.
We then consider a continuous case $\Theta = \R$, where we visualise approximated trajectories of the infinite-dimensional flow $t \mapsto (f_t, P_t)$ from two different initial state $(f_0, P_0)$.

In the first discrete case $\Theta = \{ 1 \}$, functions $f$ and probability measures $P$ on $\Theta$ are identified as scalars in the Euclidean space $\R$. 
Here, probability measures $P$ on $\Theta = \{ 1 \}$ all reduce to $P = 1$; for visualisation purpose we extend the minimum free energy $H(f, P)$ to non-probability measures $P > 0$, so that we can plot the value and the flow of the minimum free energy $H(f, P)$ over a two-dimensional domain $(0, \infty) \times (0, \infty)$.
\Cref{fig:one_dimensional_H} shows the flows driven by the simplectic variation $\partial H(f, P) = (f_* + f, P_* - P)$ and the value of the extended minimum free energy $H(f, P)$ on $(0, \infty) \times (0, \infty)$.
It can be observed that all flows run along the contour line where $H(f, P)$ is constant, and that the flow $t \mapsto (f_t, P_t)$ in the domain where $P$ is the probability measure, i.e.~$P = 1$, stays within the domain.

\begin{figure}[b]
	\includegraphics[width=0.37\textwidth]{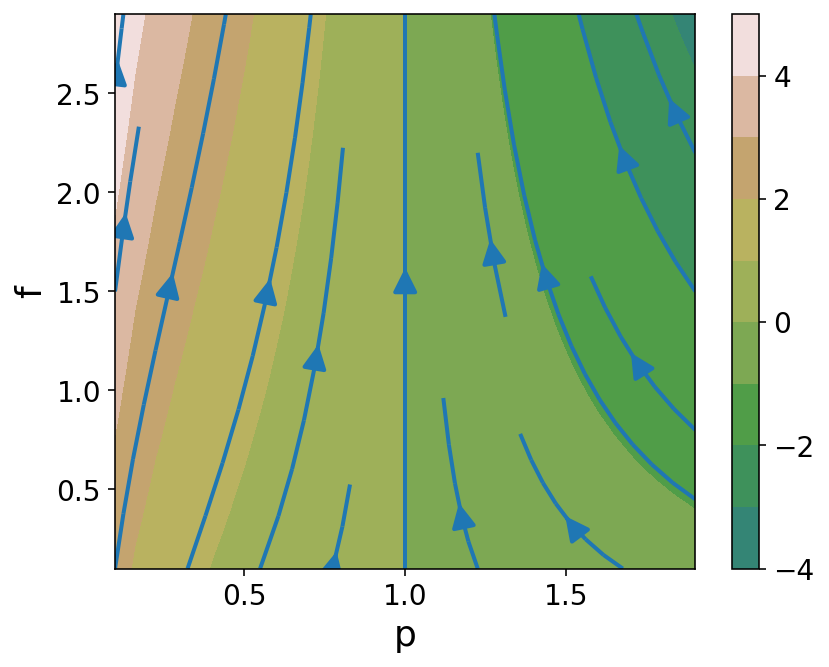}
	\caption{The value (contour colour map) and the generated flows (blue arrow) of the minimum free energy $H(f, P)$ extended to $(0, \infty) \times (0, \infty)$ when $\Theta = \{ 1 \}$. The value and flow at $P = 1$ corresponds to that of the minimum free energy in the original domain $(0, \infty) \times \{ 1 \}$.} \label{fig:one_dimensional_H}
\end{figure}

In the second discrete case $\Theta = \{ 1, 2, 3 \}$, functions $f$ and probability measures $P$ on $\Theta$ are identified as vectors in the Euclidean space $\R^3$.
We visualise the first variation $\partial_1 H(f, P) = P_* - P$ and the other first variation $- \partial_2 H(f, P) = f_* + f$ as three-dimensional vectors.
Let $f_0$ be a function s.t.~$(f_0(1), f_0(2), f_0(3)) = (2.0, 1.0, 0.5)$.
\Cref{fig:three_dimensional_H} (b) visualises the first variation $\partial_1 H(f_0, P)$ at the fixed $f_0$ and each point $P$ in the probability simplex in $\R^3$.
This illustrates what field on a space of probability measures is induced by the first variation $\partial_1 H(f_0, P)$ assigned at every $P$.
Let $P_0$ be a probability measure s.t.~$(P_0(1), P_0(2), P_0(3)) = (1/3, 1/3, 1/3)$.
\Cref{fig:three_dimensional_H} (a) visualises the first variation $- \partial_2 H(f, P_0)$ at the fixed $P_0$ and each point $f$, again, in the probability simplex in $\R^3$.
Here, we take each $f$ from the probability simplex for visualisation, but the first variation $- \partial_2 H(f, P_0)$ can be defined for all non-negative functions $f$.

\begin{figure}[t]
	\subcaptionbox{The first variation $- \partial_2 H(f, P_0)$}{\includegraphics[width=0.45\textwidth]{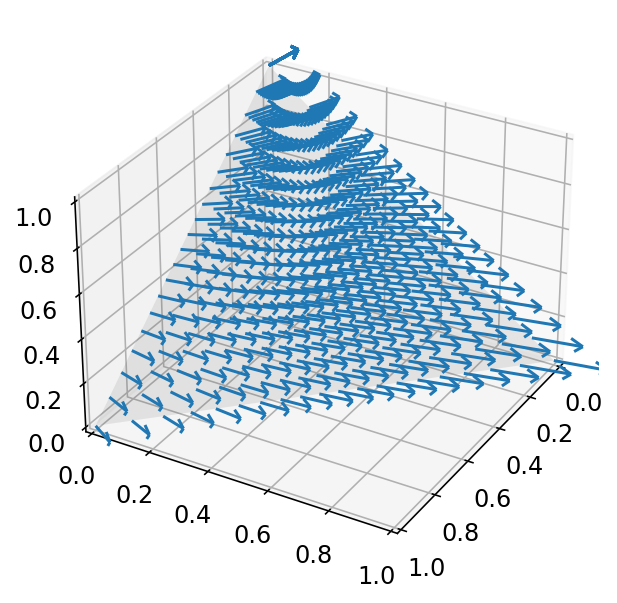}}
	\hfill
	\subcaptionbox{The first variation $\partial_1 H(f_0, P)$}{\includegraphics[width=0.45\textwidth]{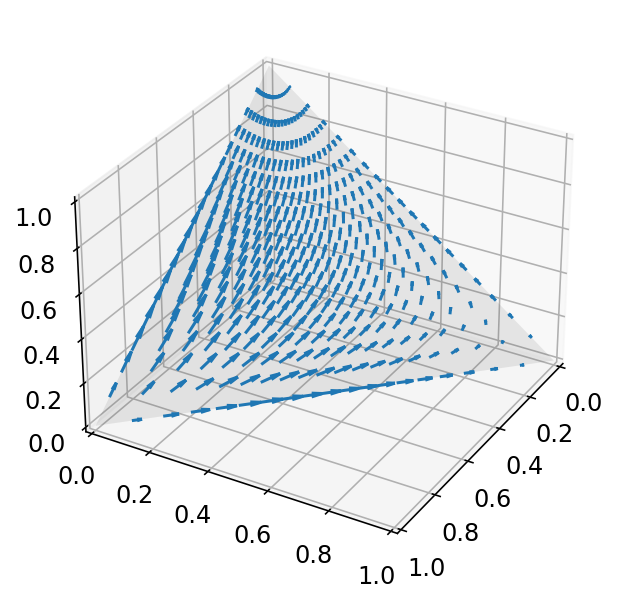}}
	\caption{The first variations (a) and (b) of the minimum free energy visualised as vectors in $\R^3$ when $\Theta = \{ 1, 2, 3 \}$. With a probability $P_0 = (1/3, 1/3, 1/3)$ fixed, the vector (a) is assigned at each $f$ in the simplex s.t.~$f(1) + f(2) + f(3) = 1$. With a function $f_0 = (2.0, 1.0, 0.5)$ fixed, the vector (b) is assigned at each $P$ in the simplex s.t.~$P(1) + P(2) + P(3) = 1$.} \label{fig:three_dimensional_H}
\end{figure}

Next, we illustrate an infinite-dimensional flow in the continuous case $\Theta = \R$.
We approximate a flow $t \mapsto (f_t, P_t)$ from $t = 0$ to $t = 3$ by the discretisation scheme in \eqref{eq:discretisation}:
\begin{align*}
	(f_{t + \delta}, P_{t + \delta}) \gets (f_t, P_t) + \delta \, \partial H(f_t, P_t) 
\end{align*}
where we use $\delta=0.001$.
By \Cref{thm:Hamiltonian_Bayes_variation}, the symplectic variation of the minimum free energy at each $(f, P) = (f_t, P_t)$ is given by $\partial H(f_t, P_t) = (f_* + f_t, P_* - P_t)$ with
\begin{align*}
    f_*(\t) = \frac{\exp(- f_t(\t))}{\int_\T \exp(- f_t(\t)) d P_t(\t)} \qquad \text{and} \qquad d P_* = \frac{\exp(- f_t(\t))}{\int_\T \exp(- f_t(\t)) d P_t(\t)} d P_t(\t).
\end{align*}
At any time $t$ in $[0, 3]$, the integral $\int_\T \exp(f_t(\t)) d P_t(\t)$ in the definition of $f_*$ and $P_*$ is numerically computed by the left-endpoint rule using $2000$ grid points over $[-10, 10] \subset \Theta$.
We identify the measure $P_t$ with the density $p_t$ with respect to the Lebesgue measure in $\Theta = \R$.
We compute the approximated flow $t \mapsto (f_t, p_t)$ at each time $t = 0, 1, 2, 3$, from two different initial states $(f_0, p_0)$.
Consider a normal location model $\mathcal{N}(\theta, 1)$ over data domain $\R$ with mean parameter $\theta \in \Theta$ and scale $1$.
The first initial state $(f_0, p_0)$ is defined by the negative log-likelihood $f_0(\theta) = ( x - \theta )^2 / 2 + \log( 2 \pi ) / 2$ of the normal location model for datum $x = 5.0$ and the density $p_0(\theta) = \exp( - \theta^2 / 2 ) / \sqrt{2 \pi}$.
Consider a Cauchy location model $\mathcal{C}(\theta, 1)$ over data domain $\R$ with mean parameter $\theta \in \Theta$ and scale $1$.
The second initial state $(f_0, p_0)$ is defined by the negative log-likelihood $f_0(\theta) = \log( 1 + (x - \theta)^2 ) + \log( \pi )$ of the Cauchy location model for datum $x = 5.0$ and the same density $p_0(\t) = \exp( - \theta^2 / 2 ) / \sqrt{2 \pi}$ as above.
\Cref{fig:gaussian_location,fig:cauchy_location} show the approximated flow $t \mapsto (f_t, P_t)$ from each initial state.

\begin{figure}[t]
	\includegraphics[width=\textwidth]{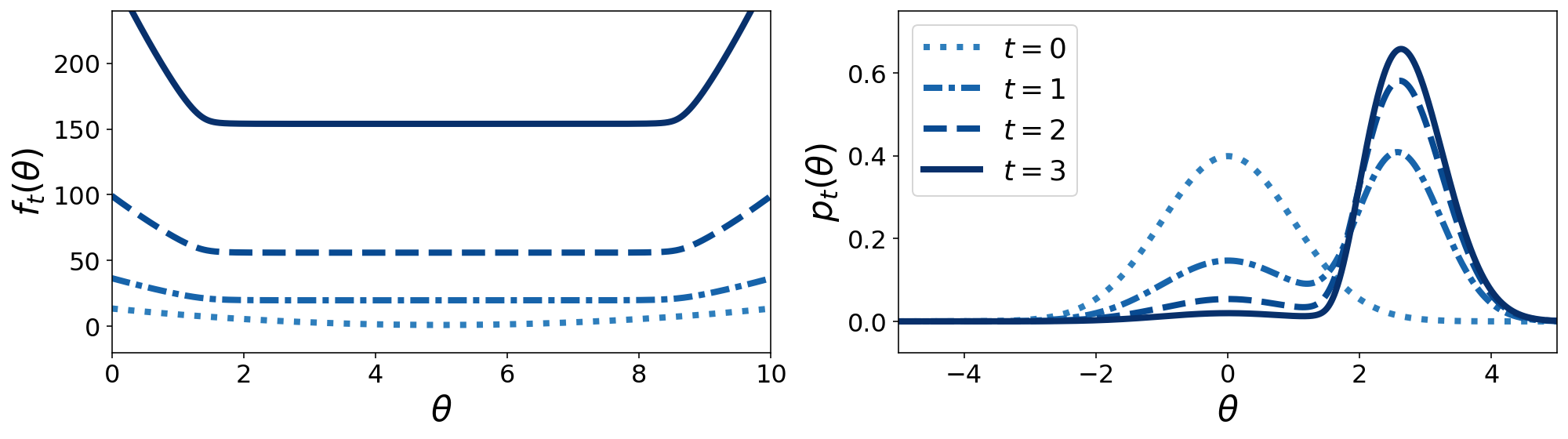}
	\caption{An approximated flow $t \mapsto (f_t, p_t)$ of the system of the minimum free energy from an initial state $(f_0, p_0)$ given by $f_0 = ( 5.0 - \theta )^2 / 2 + \log( 2 \pi ) / 2$ and $p_0 = \exp( - \theta^2 / 2 ) / \sqrt{2 \pi}$. The left panel shows the evolution of the negative log-likelihood $f_t$ and the right panel shows the evolution of the density $p_t$, respectively, at time $t = 0, 1, 2, 3$.\\~\\} \label{fig:gaussian_location}
	
	\includegraphics[width=\textwidth]{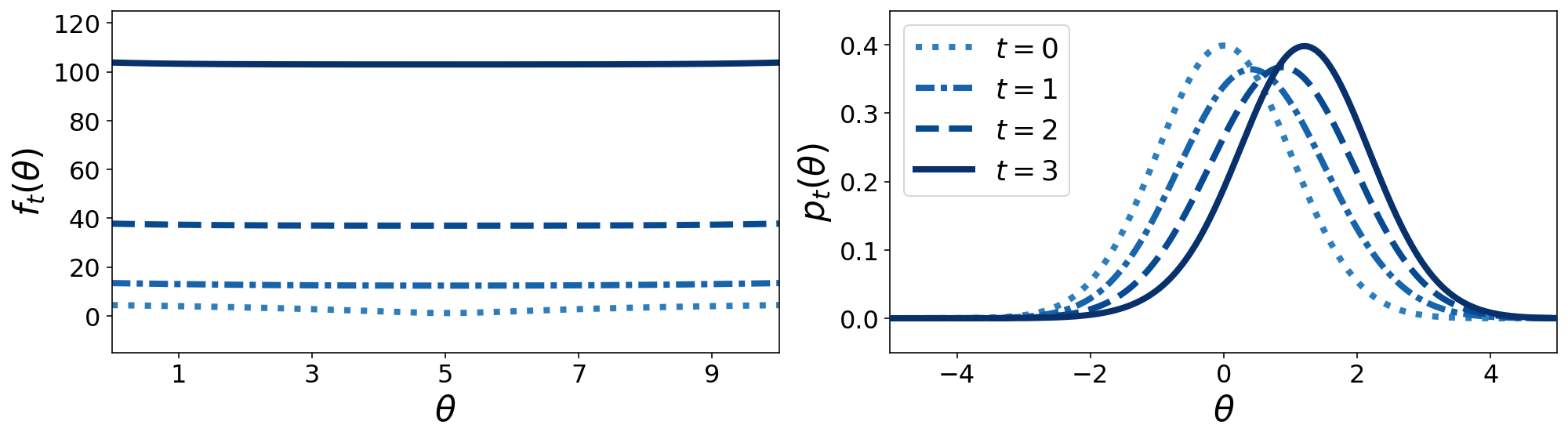}
	\caption{An approximated flow $t \mapsto (f_t, p_t)$ of the system of the minimum free energy from an initial state $(f_0, p_0)$ given by $f_0 = \log( 1 + (5.0 - \theta)^2 ) + \log( \pi )$ and $p_0 = \exp( - \theta^2 / 2 ) / \sqrt{2 \pi}$. The left panel shows the evolution of the negative log-likelihood $f_t$ and the right panel shows the evolution of the density $p_t$, respectively, at time $t = 0, 1, 2, 3$.} \label{fig:cauchy_location}
\end{figure}


\section{Conclusion} \label{sec:conclusion}

This paper made two theoretical contributions.
First, we established the framework of arc Hamiltonian systems for saddle Hamiltonian functions whose domains are infinite-dimensional metric spaces.
We derived the general condition, under which a Hamiltonian arc field generates energy-conserving flows everywhere in a metric space.
We showed that the general condition reduces to local Lipschitz continuity of the first variation of the Hamiltonian function when the metric on $M$ has an extension to a norm-induced metric on $E_1 \times E_2$.
Second, we presented the arc Hamiltonian system of the minimum free energy, in which the difference of the posterior and the prior of Bayesian inference is characterised as the first variation of the minimum free energy $H(f, P)$ with respect to $f$.
Pairing this first variation with the other first variation with respect to $P$, we revealed the underlying Hamiltonian arc field behind the transition from the prior and the posterior, as highlighted in \Cref{rm:interpretation}.
The invariance of the conservation of the minimum free energy was established.

In this work, convexity of saddle Hamiltonian functions plays a vital role in the construction of our energy-conserving systems.
Arc Hamiltonian systems satisfy the conservation of energy, using only a topologically weak notion of derivative and gradient, i.e., the variational derivative and the first variation.
In contrast to the Fr\'{e}chet derivative, weak notions of derivative, such as the G\^{a}teaux derivative, no longer satisfy the chain rule in general \citep{Yamamuro1974}.
Nonetheless, convexity acts in place of the chain rule to guarantee the energy-conserving property of arc Hamiltonian systems.
We focused on saddle Hamiltonian functions that are independent of time $t$.
Relaxing time-independence of saddle Hamiltonian functions represents one of natural directions for future work.
An extension of arc fields to time-dependent cases was considered in \cite{Kim2013}.
In this paper, we applied the framework of arc Hamiltonian systems to formalise the infinite-dimensional geometrical property associated with Bayesian inference.
Apart from our application, there may be other intriguing dynamical phenomena or optimisation methodologies that benefit from infinite-dimensional Hamiltonian systems in metric spaces.
The general framework of arc Hamiltonian systems can naturally lead to the development of further applications.



\begin{supplement}
\stitle{Supplementary Material for `Hamiltonian Dynamics of Bayesian Inference Formalised by Arc Hamiltonian Systems'}
\sdescription{The supplementary material contains proofs for all the theoretical results in the main text.}
\end{supplement}


\bibliographystyle{imsart-nameyear} 
\bibliography{bibliography}       

\begin{thebibliography}{38}

\bibitem[\protect\citeauthoryear{Abraham and Marsden}{1987}]{Abraham1987}
\begin{bbook}[author]
\bauthor{\bsnm{Abraham},~\bfnm{Ralph}\binits{R.}} \AND
  \bauthor{\bsnm{Marsden},~\bfnm{Jerrold~E.}\binits{J.~E.}}
(\byear{1987}).
\btitle{Foundations of Mechanics},
\bedition{2nd edition} ed.
\bpublisher{Addison-Wesley Publishing Company, Inc.}
\end{bbook}
\endbibitem

\bibitem[\protect\citeauthoryear{Aliprantis and Border}{2006}]{Aliprantis2006}
\begin{bbook}[author]
\bauthor{\bsnm{Aliprantis},~\bfnm{Charalambos~D.}\binits{C.~D.}} \AND
  \bauthor{\bsnm{Border},~\bfnm{Kim~C.}\binits{K.~C.}}
(\byear{2006}).
\btitle{Infinite Dimensional Analysis: A Hitchhiker's Guide}.
\bpublisher{Springer}.
\end{bbook}
\endbibitem

\bibitem[\protect\citeauthoryear{Arnold}{1978}]{Arnold1978}
\begin{bbook}[author]
\bauthor{\bsnm{Arnold},~\bfnm{V.~I.}\binits{V.~I.}}
(\byear{1978}).
\btitle{Mathematical Methods of Classical Mechanics}.
\bpublisher{Springer}.
\end{bbook}
\endbibitem

\bibitem[\protect\citeauthoryear{Attias}{1999}]{Attias1999}
\begin{binproceedings}[author]
\bauthor{\bsnm{Attias},~\bfnm{Hagai}\binits{H.}}
(\byear{1999}).
\btitle{A variational Baysian framework for graphical models}.
In \bbooktitle{Advances in Neural Information Processing Systems}
\bvolume{12}.
\end{binproceedings}
\endbibitem

\bibitem[\protect\citeauthoryear{Aubin}{1993}]{Aubin1993}
\begin{barticle}[author]
\bauthor{\bsnm{Aubin},~\bfnm{Jean-Pierre}\binits{J.-P.}}
(\byear{1993}).
\btitle{Mutational equations in metric spaces}.
\bjournal{Set-Valued Analysis}
\bvolume{1}
\bpages{3--46}.
\end{barticle}
\endbibitem

\bibitem[\protect\citeauthoryear{Aubin}{1999}]{Aubin1999}
\begin{bbook}[author]
\bauthor{\bsnm{Aubin},~\bfnm{Jean-Pierre}\binits{J.-P.}}
(\byear{1999}).
\btitle{Mutational and Morphological Analysis}.
\bpublisher{Springer}.
\end{bbook}
\endbibitem

\bibitem[\protect\citeauthoryear{Bissiri, Holmes and
  Walker}{2016}]{Bissiri2016}
\begin{barticle}[author]
\bauthor{\bsnm{Bissiri},~\bfnm{P.~G.}\binits{P.~G.}},
  \bauthor{\bsnm{Holmes},~\bfnm{C.~C.}\binits{C.~C.}} \AND
  \bauthor{\bsnm{Walker},~\bfnm{S.~G.}\binits{S.~G.}}
(\byear{2016}).
\btitle{A general framework for updating belief distributions}.
\bjournal{Journal of the Royal Statistical Society Series B: Statistical
  Methodology}
\bvolume{78}
\bpages{1103-1130}.
\end{barticle}
\endbibitem

\bibitem[\protect\citeauthoryear{Blei, Kucukelbir and
  McAuliffe}{2017}]{Blei2017}
\begin{barticle}[author]
\bauthor{\bsnm{Blei},~\bfnm{David~M.}\binits{D.~M.}},
  \bauthor{\bsnm{Kucukelbir},~\bfnm{Alp}\binits{A.}} \AND
  \bauthor{\bsnm{McAuliffe},~\bfnm{Jon~D.}\binits{J.~D.}}
(\byear{2017}).
\btitle{Variational inference: a review for statisticians}.
\bjournal{Journal of the American Statistical Association}
\bvolume{112}
\bpages{859-877}.
\end{barticle}
\endbibitem

\bibitem[\protect\citeauthoryear{Bogachev}{2006}]{Bogachev2006}
\begin{bbook}[author]
\bauthor{\bsnm{Bogachev},~\bfnm{Vladimir~I.}\binits{V.~I.}}
(\byear{2006}).
\btitle{Measure Theory}.
\bpublisher{Springer}.
\end{bbook}
\endbibitem

\bibitem[\protect\citeauthoryear{Brezis}{2011}]{Brezis2011}
\begin{bbook}[author]
\bauthor{\bsnm{Brezis},~\bfnm{Haim}\binits{H.}}
(\byear{2011}).
\btitle{Functional Analysis, Sobolev Spaces and Partial Differential
  Equations}.
\bpublisher{Springer}.
\end{bbook}
\endbibitem

\bibitem[\protect\citeauthoryear{Calcaterra and
  Bleecker}{2000}]{Calcaterra2000}
\begin{barticle}[author]
\bauthor{\bsnm{Calcaterra},~\bfnm{Craig}\binits{C.}} \AND
  \bauthor{\bsnm{Bleecker},~\bfnm{David}\binits{D.}}
(\byear{2000}).
\btitle{Generating flows on metric spaces}.
\bjournal{Journal of Mathematical Analysis and Applications}
\bvolume{248}
\bpages{645-677}.
\end{barticle}
\endbibitem

\bibitem[\protect\citeauthoryear{Catoni}{2007}]{Catoni2007}
\begin{barticle}[author]
\bauthor{\bsnm{Catoni},~\bfnm{Olivier}\binits{O.}}
(\byear{2007}).
\btitle{{Pac-Bayesian supervised classification: the thermodynamics of
  statistical learning}}.
\bjournal{IMS Lecture Notes Monograph Series}
\bvolume{56}
\bpages{1–-163}.
\end{barticle}
\endbibitem

\bibitem[\protect\citeauthoryear{Chernoff and Marsden}{1974}]{Chernoff1974}
\begin{bbook}[author]
\bauthor{\bsnm{Chernoff},~\bfnm{Paul~Robert}\binits{P.~R.}} \AND
  \bauthor{\bsnm{Marsden},~\bfnm{Jerrold~Eldon}\binits{J.~E.}}
(\byear{1974}).
\btitle{Properties of Infinite Dimensional Hamiltonian Systems}.
\bpublisher{Springer}.
\end{bbook}
\endbibitem

\bibitem[\protect\citeauthoryear{Cohn}{2013}]{Cohn2013}
\begin{bbook}[author]
\bauthor{\bsnm{Cohn},~\bfnm{Donald~L.}\binits{D.~L.}}
(\byear{2013}).
\btitle{Measure Theory}.
\bpublisher{Springer}.
\end{bbook}
\endbibitem

\bibitem[\protect\citeauthoryear{Dembo and Zeitouni}{2009}]{Dembo2009}
\begin{bbook}[author]
\bauthor{\bsnm{Dembo},~\bfnm{Amir}\binits{A.}} \AND
  \bauthor{\bsnm{Zeitouni},~\bfnm{Ofer}\binits{O.}}
(\byear{2009}).
\btitle{Large Deviations Techniques and Applications}.
\bpublisher{Springer}.
\end{bbook}
\endbibitem

\bibitem[\protect\citeauthoryear{Dupuis and Ellis}{2011}]{Dupuis2011}
\begin{bbook}[author]
\bauthor{\bsnm{Dupuis},~\bfnm{Paul}\binits{P.}} \AND
  \bauthor{\bsnm{Ellis},~\bfnm{Richard~S}\binits{R.~S.}}
(\byear{2011}).
\btitle{A Weak Convergence Approach to the Theory of Large Deviations}.
\bpublisher{Wiley}.
\end{bbook}
\endbibitem

\bibitem[\protect\citeauthoryear{Eells and Elworthy}{1970}]{Eells1970}
\begin{barticle}[author]
\bauthor{\bsnm{Eells},~\bfnm{J.}\binits{J.}} \AND
  \bauthor{\bsnm{Elworthy},~\bfnm{K.~D.}\binits{K.~D.}}
(\byear{1970}).
\btitle{Open embeddings of certain Banach manifolds}.
\bjournal{Annals of Mathematics}
\bvolume{91}
\bpages{465--485}.
\end{barticle}
\endbibitem

\bibitem[\protect\citeauthoryear{Ekeland and Témam}{1999}]{Ekeland1999}
\begin{bbook}[author]
\bauthor{\bsnm{Ekeland},~\bfnm{Ivar}\binits{I.}} \AND
  \bauthor{\bsnm{Témam},~\bfnm{Roger}\binits{R.}}
(\byear{1999}).
\btitle{Convex Analysis and Variational Problems}.
\bpublisher{Society for Industrial and Applied Mathematics}.
\end{bbook}
\endbibitem

\bibitem[\protect\citeauthoryear{Ellis}{2006}]{Ellis2006}
\begin{bbook}[author]
\bauthor{\bsnm{Ellis},~\bfnm{Richard~S.}\binits{R.~S.}}
(\byear{2006}).
\btitle{Entropy, large deviations, and statistical mechanics}.
\bpublisher{Springer}.
\end{bbook}
\endbibitem

\bibitem[\protect\citeauthoryear{Georgii}{2011}]{Georgii2011}
\begin{bbook}[author]
\bauthor{\bsnm{Georgii},~\bfnm{Hans-Otto}\binits{H.-O.}}
(\byear{2011}).
\btitle{Gibbs Measures and Phase Transitions},
\bedition{2nd edition} ed.
\bpublisher{De Gruyter}.
\end{bbook}
\endbibitem

\bibitem[\protect\citeauthoryear{Gibbs}{1902}]{Gibbs1902}
\begin{bbook}[author]
\bauthor{\bsnm{Gibbs},~\bfnm{Josiah~Willard}\binits{J.~W.}}
(\byear{1902}).
\btitle{Elementary Principles in Statistical Mechanics}.
\bpublisher{Charles Scribner's Sons}.
\end{bbook}
\endbibitem

\bibitem[\protect\citeauthoryear{Goldstein, Poole and
  Safko}{2002}]{Goldstein2002}
\begin{bbook}[author]
\bauthor{\bsnm{Goldstein},~\bfnm{Herbert}\binits{H.}},
  \bauthor{\bsnm{Poole},~\bfnm{Charles}\binits{C.}} \AND
  \bauthor{\bsnm{Safko},~\bfnm{John}\binits{J.}}
(\byear{2002}).
\btitle{Classical mechanics}.
\bpublisher{American Association of Physics Teachers}.
\end{bbook}
\endbibitem

\bibitem[\protect\citeauthoryear{Guedj}{2019}]{Guedj2019}
\begin{barticle}[author]
\bauthor{\bsnm{Guedj},~\bfnm{Benjamin}\binits{B.}}
(\byear{2019}).
\btitle{A primer on PAC-Bayesian learning}.
\bjournal{arXiv:1901.05353}.
\end{barticle}
\endbibitem

\bibitem[\protect\citeauthoryear{Jordan, Kinderlehrer and
  Otto}{1997}]{Jordan1997}
\begin{barticle}[author]
\bauthor{\bsnm{Jordan},~\bfnm{Richard}\binits{R.}},
  \bauthor{\bsnm{Kinderlehrer},~\bfnm{David}\binits{D.}} \AND
  \bauthor{\bsnm{Otto},~\bfnm{Felix}\binits{F.}}
(\byear{1997}).
\btitle{Free energy and the Fokker-Planck equation}.
\bjournal{Physica D: Nonlinear Phenomena}
\bvolume{107}
\bpages{265-271}.
\end{barticle}
\endbibitem

\bibitem[\protect\citeauthoryear{Kim and Masmoudi}{2013}]{Kim2013}
\begin{barticle}[author]
\bauthor{\bsnm{Kim},~\bfnm{H.~K.}\binits{H.~K.}} \AND
  \bauthor{\bsnm{Masmoudi},~\bfnm{N.}\binits{N.}}
(\byear{2013}).
\btitle{Generating and adding flows on locally complete metric spaces}.
\bjournal{Journal of Dynamics and Differential Equations}
\bvolume{25}
\bpages{231–-256}.
\end{barticle}
\endbibitem

\bibitem[\protect\citeauthoryear{Kolokoltsov}{2019}]{Kolokoltsov2019}
\begin{bbook}[author]
\bauthor{\bsnm{Kolokoltsov},~\bfnm{Vassili}\binits{V.}}
(\byear{2019}).
\btitle{Differential Equations on Measures and Functional Spaces}.
\bpublisher{Springer}.
\end{bbook}
\endbibitem

\bibitem[\protect\citeauthoryear{Kriegl and Michor}{1997}]{Kriegl1997}
\begin{bbook}[author]
\bauthor{\bsnm{Kriegl},~\bfnm{Andreas}\binits{A.}} \AND
  \bauthor{\bsnm{Michor},~\bfnm{Peter}\binits{P.}}
(\byear{1997}).
\btitle{The Convenient Setting of Global Analysis}.
\bseries{Mathematical Surveys and Monographs}
\bvolume{53}.
\bpublisher{American Mathematical Society}.
\end{bbook}
\endbibitem

\bibitem[\protect\citeauthoryear{Lang}{1985}]{Lang1985}
\begin{bbook}[author]
\bauthor{\bsnm{Lang},~\bfnm{Serge}\binits{S.}}
(\byear{1985}).
\btitle{Differential Manifolds}.
\bpublisher{Springer}.
\end{bbook}
\endbibitem

\bibitem[\protect\citeauthoryear{Lorenz}{2010}]{Lorenz2010}
\begin{bbook}[author]
\bauthor{\bsnm{Lorenz},~\bfnm{Thomas}\binits{T.}}
(\byear{2010}).
\btitle{Mutational analysis. A joint framework for Cauchy problems in and
  beyond vector spaces}.
\bpublisher{Springer}.
\end{bbook}
\endbibitem

\bibitem[\protect\citeauthoryear{Marsden and Hughes}{1983}]{Marsden1983}
\begin{bbook}[author]
\bauthor{\bsnm{Marsden},~\bfnm{J.~E.}\binits{J.~E.}} \AND
  \bauthor{\bsnm{Hughes},~\bfnm{T.~J.~R.}\binits{T.~J.~R.}}
(\byear{1983}).
\btitle{{Mathematical Foundations of Elasticity}}.
\bpublisher{Englewood Cliffs}.
\end{bbook}
\endbibitem

\bibitem[\protect\citeauthoryear{M\"{u}ller}{1997}]{Muller1997}
\begin{barticle}[author]
\bauthor{\bsnm{M\"{u}ller},~\bfnm{Alfred}\binits{A.}}
(\byear{1997}).
\btitle{Integral probability metrics and their generating classes of
  functions}.
\bjournal{Advances in Applied Probability}
\bvolume{29}
\bpages{429–443}.
\end{barticle}
\endbibitem

\bibitem[\protect\citeauthoryear{Rockafellar}{1974}]{Rockafellar1974}
\begin{bbook}[author]
\bauthor{\bsnm{Rockafellar},~\bfnm{Tyrrell~R}\binits{T.~R.}}
(\byear{1974}).
\btitle{Conjugate Duality and Optimization}.
\bpublisher{SIAM}.
\end{bbook}
\endbibitem

\bibitem[\protect\citeauthoryear{Ruelle}{1967}]{Ruelle1967}
\begin{barticle}[author]
\bauthor{\bsnm{Ruelle},~\bfnm{D.}\binits{D.}}
(\byear{1967}).
\btitle{A variational formulation of equilibrium statistical mechanics and the
  Gibbs phase rule}.
\bjournal{Communications in Mathematical Physics}
\bvolume{5}
\bpages{324--329}.
\end{barticle}
\endbibitem

\bibitem[\protect\citeauthoryear{Ruelle}{1969}]{Ruelle1969}
\begin{bbook}[author]
\bauthor{\bsnm{Ruelle},~\bfnm{David.}\binits{D.}}
(\byear{1969}).
\btitle{Statistical Mechanics: Rigorous Results}.
\bpublisher{W. A. Benjamin}.
\end{bbook}
\endbibitem

\bibitem[\protect\citeauthoryear{Santambrogio}{2015}]{Santambrogio2015}
\begin{bbook}[author]
\bauthor{\bsnm{Santambrogio},~\bfnm{Filippo}\binits{F.}}
(\byear{2015}).
\btitle{Optimal Transport for Applied Mathematicians Calculus of Variations,
  PDEs, and Modeling}.
\bpublisher{Springer}.
\end{bbook}
\endbibitem

\bibitem[\protect\citeauthoryear{Schmeding}{2022}]{Alexander2022}
\begin{bbook}[author]
\bauthor{\bsnm{Schmeding},~\bfnm{Alexander}\binits{A.}}
(\byear{2022}).
\btitle{An Introduction to Infinite-Dimensional Differential Geometry}.
\bpublisher{Cambridge University Press}.
\end{bbook}
\endbibitem

\bibitem[\protect\citeauthoryear{Villani}{2009}]{Villani2009}
\begin{bbook}[author]
\bauthor{\bsnm{Villani},~\bfnm{C{\'e}dric}\binits{C.}}
(\byear{2009}).
\btitle{Optimal Transport: Old and New}.
\bpublisher{Springer}.
\end{bbook}
\endbibitem

\bibitem[\protect\citeauthoryear{Yamamuro}{1974}]{Yamamuro1974}
\begin{bbook}[author]
\bauthor{\bsnm{Yamamuro},~\bfnm{Sadayuki}\binits{S.}}
(\byear{1974}).
\btitle{Differential Calculus in Topological Linear Spaces}.
\bpublisher{Springer}.
\end{bbook}
\endbibitem

\end{thebibliography}


\newpage

\appendix
\begin{frontmatter}
	\title{Supplemental Material for `Hamiltonian Dynamics of Bayesian Inference Formalised by Arc Hamiltonian Systems'}
	\runtitle{Hamiltonian Dynamics of Bayesian Inference}
	\begin{aug}
		\author{\fnms{Takuo}~\snm{Matsubara}}
		\address{School of Mathematics, The University of Edinburgh, \href{mailto:takuo.matsubara@ed.ac.uk}{takuo.matsubara@ed.ac.uk}}
	\end{aug}
	
%
	
\end{frontmatter}
\setcounter{equation}{0}
\setcounter{figure}{0}
\setcounter{table}{0}
\setcounter{page}{1}
\makeatletter
\renewcommand{\theequation}{S\arabic{equation}}
\renewcommand{\thefigure}{S\arabic{figure}}
\renewcommand{\bibnumfmt}[1]{[S#1]}
\renewcommand{\citenumfont}[1]{S#1}
\renewcommand{\appendixname}{}

This supplementary material contains proofs of all the theoretical results presented in the main text.
\Cref{sec:appendix_0} provides a concise summary of the theoretical results in \cite{Calcaterra2000} relevant to the theoretical development of arc Hamiltonian systems.
It also shows that a part of the original conditions can be relaxed.
\Cref{sec:appendix_a} contains proofs of all the theoretical results presented in \Cref{sec:methodology}.
\Cref{sec:appendix_b} contains proofs of all the theoretical results presented in \Cref{sec:application} apart from the derivation of the first variation.
The derivation of the first variation in \Cref{sec:application} is summarised in \Cref{sec:appendix_c}.
Finally, \Cref{sec:appendix_d} presents proofs of three intermediate lemmas used in the proof of \Cref{thm:Hamiltonian_Dyamics_Bayes}.


\section{The Theory of Arc Fields and The Relaxation of The Condition} \label{sec:appendix_0}

Let $(M, d)$ be an arbitrary complete metric space and let $\Phi: M \times [0, 1] \to M$ be an arbitrary map s.t.~$\Phi_0(\mone) = \mone$ throughout \Cref{sec:appendix_0}.
Recall that the map $\Phi$ is called an arc field if it satisfies the condition in \Cref{def:cb_definition_21}.
\cite{Calcaterra2000} were concerned with a curve $t \mapsto x_t$ in $(M, d)$ that satisfies the following condition at each time $t$:
\begin{align}
    \lim_{s \to 0^+} \frac{ d( x_{t+s} , \Phi_s( x_t ) ) }{ s } = 0 . \label{eq:apx_mutationeq}
\end{align}
They established conditions for the map $\Phi$, under which a unique solution $t \mapsto \mone_t$ of \eqref{eq:apx_mutationeq} from $\mone_0 = a$ exists in the infinite interval $[0, \infty)$ for every starting point $a \in M$.

In \Cref{sec:appendix_0_1}, we provide a brief recap of their results relevant to the theoretical development of arc Hamiltonian systems.
We restate their results using our notation, remarking corresponding statements of \cite{Calcaterra2000} in each statement provided in \Cref{sec:appendix_0_1}.
We prefix `CB-' to statements of \cite{Calcaterra2000} refered here---e.g.,~CB-Definition 2.1---for simpler citation.
In \Cref{sec:appendix_0_2}, we show that a part of their original conditions can be relaxed.
\Cref{sec:appendix_0_3} contains the proof of the relaxation.

\subsection{The Theory of Arc Fields} \label{sec:appendix_0_1}

First, they studied under what condition a unique solution $t \mapsto x_t$ of \eqref{eq:apx_mutationeq} exists at least in some finite interval $[0, T)$ at each starting point.
They provided the following two fundamental conditions.

\begin{assumption}[CB-Condition AL] \label{asmp:cb_condition_al}
	For some function $\Lambda: M \times M \times [0, 1] \to \R$, at each $a \in M$ there exist a radius $r_a > 0$ and a length $\epsilon_a > 0$ s.t.
	\begin{align*}
		d(\Phi_t(\mone), \Phi_t(\mthree) ) \le d(\mone, \mthree) (1 + t \, \Lambda(\mone, \mthree, t))
	\end{align*}
	and $\Lambda(\mone, \mthree, t)$ is upper bounded for all $\mone, \mthree \in B_{r_a}(a)$ and $t \in [0, \epsilon_a)$.
\end{assumption}

\begin{assumption}[CB-Condition BL] \label{asmp:cb_condition_bl}
	For some function $\Omega: M \times [0, 1] \times [0, 1] \to [0, \infty)$, at each $a \in M$ there exist a radius $r_a > 0$ and a length $\epsilon_a > 0$ s.t.
	\begin{align*}
		d(\Phi_{t+s}(\mone), \Phi_s(\Phi_t(\mone))) \le t \, s \, \Omega(\mone, t, s))
	\end{align*}
	and $\Omega(\mone, t, s))$ is upper bounded for all $\mone \in B_{r_a}(a)$ and $t,s \in [0, \epsilon_a)$.
\end{assumption}

\noindent
Originally, the above inequality in CB-Condition BL was stated in a form
\begin{align*}
	d(\Phi_{t+s}(\mone), \Phi_s(\Phi_t(\mone))) \le t \, g(t, s) \, \Omega(\mone, t, s)) 
\end{align*}
using some function $g: [0, \epsilon_a) \times [0, \epsilon_a) \to [0, \infty)$.
However, CB-Remark 2.1 highlighted that it suffices to set $g(t, s) = s$ in CB-Condition BL.
We hence use $g(t, s) = s$ as our default choice in \Cref{asmp:cb_condition_bl} by combining CB-Condition BL and CB-Remark 2.1.

By CB-Definition 4.1, an arc field $\Phi$ is said to have \emph{unique solutions for short time} if there exists an interval $[0, T_a)$ at each point $a \in M$, in which a unique solution $t \mapsto \mone_t$ of \eqref{eq:apx_mutationeq} exists from $\mone_0 = a$.
This means that, if two non-unique solutions of \eqref{eq:apx_mutationeq} exist from $\mone_0 = a$ in some interval larger than $[0, T_a)$, they must equal to each other at least in the interval $[0, T_a)$.
Then they showed in CB-Proposition 4.2 that, if an arc field $\Phi$ has unique solutions for short time, a \emph{maximal} solution (i.e.~the longest solution that can exist) at each point is, in fact, unique.
In other words, any solution of \eqref{eq:apx_mutationeq}, shown to exists in some interval, turns out to be a part of the underlying longest unique solution.
CB-Proposition 4.2 formalised this intuition.

\begin{proposition}[CB-Proposition 4.2] \label{prop:cb_proposition_42}
	Assume that an arc field $\Phi$ has unique solutions for short time. At each point $a \in M$, there exists a unique solution $t \mapsto \mone_t$ of \eqref{eq:apx_mutationeq} from $\mone_0 = a$ in some interval $[0, T_a^*)$ s.t.~for any solution $t \mapsto \hat{\mone}_t$ of \eqref{eq:apx_mutationeq} from $\hat{\mone}_0 = a$ whose interval is $[0, T_a)$, we have $T_a^* \ge T_a$ and $t \mapsto \hat{\mone}_t$ equals to $t \mapsto \mone_t$ everywhere in $[0, T_a)$.
\end{proposition}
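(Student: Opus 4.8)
The plan is to follow the classical route for maximal solutions of ordinary differential equations, transported to the arc-field setting. First I would collect the family $\mathcal{S}_a$ of all solutions of \eqref{eq:apx_mutationeq} issuing from $a$, each defined on its own interval $[0, T)$, and set $T_a^* := \sup\{ T : \text{some member of } \mathcal{S}_a \text{ is defined on } [0,T) \}$; note that the union of the half-open intervals $[0,T)$ over $\mathcal{S}_a$ is automatically $[0, T_a^*)$. The goal is to show that any two solutions in $\mathcal{S}_a$ agree on the intersection of their intervals; this lets me glue them into a single curve on $[0, T_a^*)$, verify that the glued curve is itself a solution, and then read off both its maximality and the agreement property asserted in \Cref{prop:cb_proposition_42}.

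The heart of the matter is the compatibility claim: if $x$ and $y$ are solutions from $a$ both defined on $[0, T)$, then $x \equiv y$ on $[0,T)$. I would argue by a branch-point / connectedness argument. Every solution of \eqref{eq:apx_mutationeq} is continuous, indeed locally Lipschitz: from $d(x_{t+s}, x_t) \le d(x_{t+s}, \Phi_s(x_t)) + d(\Phi_s(x_t), x_t)$ together with \eqref{eq:apx_mutationeq} and the arc condition of \Cref{def:cb_definition_21}, the increment is $O(s)$ as $s \to 0^+$ (this is part of the framework of \cite{Calcaterra2000}). Now put $\tau := \sup\{ t \in [0,T) : x_s = y_s \text{ for all } s \le t \}$, which is well defined since $t = 0$ belongs to the set; by continuity $x_\tau = y_\tau =: b$. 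Suppose, for contradiction, $\tau < T$. The crucial observation is that the defining relation \eqref{eq:apx_mutationeq} is local in time and the arc field $\Phi$ is time-independent, so the translated curves $t \mapsto x_{\tau + t}$ and $t \mapsto y_{\tau + t}$ are again solutions of \eqref{eq:apx_mutationeq}, now issuing from $b$. Since $\Phi$ has unique solutions for short time, there is an interval $[0, T_b)$ on which solutions from $b$ coincide, hence $x_{\tau + t} = y_{\tau + t}$ for all $t \in [0, \min(T_b, T - \tau))$, contradicting the maximality of $\tau$. Therefore $\tau = T$ and $x \equiv y$ on $[0, T)$. I expect this paragraph — precisely, justifying that a time-translate of a solution is again a solution and handling the supremum $\tau$ at the edge of the interval — to be the only real obstacle; the rest is bookkeeping.

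With compatibility established, define $x^* \colon [0, T_a^*) \to M$ by $x^*_t := x_t$ for any $x \in \mathcal{S}_a$ defined at $t$; this is independent of the choice of $x$ by the compatibility claim and satisfies $x^*_0 = a$. Because the condition \eqref{eq:apx_mutationeq} at a time $t$ only involves the curve on a right-neighbourhood of $t$, on which $x^*$ coincides with some genuine solution, $x^*$ itself satisfies \eqref{eq:apx_mutationeq} at every $t \in [0, T_a^*)$, so $x^* \in \mathcal{S}_a$. By construction $T_a^* \ge T$ for every solution's interval $[0, T)$, and restricting $x^*$ to $[0, T)$ returns that solution, which is exactly the assertion of \Cref{prop:cb_proposition_42}. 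Finally, uniqueness: if $\hat{x}$ on $[0, \hat{T})$ were another solution with the same dominating property, then applying that property with $x = x^*$ gives $\hat{T} \ge T_a^*$ and $\hat{x}|_{[0, T_a^*)} = x^*$, while applying it with $x = \hat{x}$ (a member of $\mathcal{S}_a$) gives $T_a^* \ge \hat{T}$; hence $\hat{T} = T_a^*$ and $\hat{x} = x^*$.
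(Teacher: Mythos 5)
The paper does not actually prove this proposition: it is quoted verbatim (as CB-Proposition~4.2) from \cite{Calcaterra2000}, and the supplementary material only restates it for reference before extending CB-Theorem~4.4. So your proof is not being compared against an argument in the paper but against the standard proof you would find in the cited reference.

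Your reconstruction is the right one --- the classical maximal-solution gluing argument --- and its skeleton is correct: define the family $\mathcal{S}_a$, prove pairwise compatibility of solutions via a supremum-of-agreement-times argument using ``unique solutions for short time'' at the branch point, and then glue. There is, however, one step that is asserted but not actually established by the one-line justification you give, and it is exactly the step you flag as ``the only real obstacle.'' You write that from $d(x_{t+s},x_t)\le d(x_{t+s},\Phi_s(x_t))+d(\Phi_s(x_t),x_t)$, \eqref{eq:apx_mutationeq}, and the bound $d(\Phi_s(x_t),\Phi_0(x_t))\le s\,\rho(x_t)$ one gets that solutions are ``continuous, indeed locally Lipschitz.'' What that estimate actually gives, at a fixed $t$, is a \emph{one-sided} bound $d(x_{t+s},x_t)=O(s)$ as $s\to0^+$, i.e.\ right-continuity with a constant $\rho(x_t)$ that varies with $t$. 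But the place where you use continuity --- passing from ``$x_s=y_s$ for all $s<\tau$'' to ``$x_\tau=y_\tau$'' --- needs \emph{left}-continuity at $\tau$. That does not follow from the displayed right-sided estimate alone: to promote it to genuine local Lipschitz (and hence two-sided) continuity you must combine the right-continuity with the locally uniform bound $\bar\rho(\cdot,r)<\infty$ from \Cref{def:cb_definition_21}, confine the curve to a ball where $\bar\rho$ is finite, and then chain the one-sided estimate over small subintervals. This is true in the CB framework, and you correctly attribute it there, but as written the parenthetical ``this is part of the framework'' is doing the real work, and a self-contained proof needs to supply that chaining argument explicitly before the branch-point step can be closed. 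Aside from this, the remaining bookkeeping (independence of the glued value of the representative, locality in time of \eqref{eq:apx_mutationeq}, and the final uniqueness of $x^*$) is sound.
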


\noindent
They concisely summarised in CB-Corollary 4.3 that an arc field $\Phi$ has unique solutions for short time under CB-Condition AL and CB-Condition BL.

\begin{corollary}[CB-Corollary 4.3] \label{cor:cb_corollary_43}
	An arc field $\Phi$ that satisfies Assumptions \ref{asmp:cb_condition_al}--\ref{asmp:cb_condition_bl} has unique solutions for short time, and hence has a unique maximal solution of \eqref{eq:apx_mutationeq} starting from each point in $M$.
\end{corollary}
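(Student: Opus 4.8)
The plan is to prove the two assertions in turn. The second one --- that the arc field then has a unique maximal solution from each point --- is immediate: having unique solutions for short time is precisely the hypothesis of \Cref{prop:cb_proposition_42}, so that proposition applies verbatim. Hence all the substance lies in the first assertion, namely that an arc field $\Phi$ obeying Assumptions~\ref{asmp:cb_condition_al}--\ref{asmp:cb_condition_bl} admits, from each starting point $a \in M$, a solution of \eqref{eq:apx_mutationeq} on some interval $[0,T)$ that is unique there. I would split this into an existence step and a uniqueness step.

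For existence, I would fix $a$, take the radius $r_a$ and length $\epsilon_a$ furnished jointly by Assumptions~\ref{asmp:cb_condition_al}--\ref{asmp:cb_condition_bl} (shrinking so that both hold on $B_{r_a}(a)$ for $t,s \in [0,\epsilon_a)$), and also use the arc-field bound $\bar\rho(a, r_a) < \infty$ from \Cref{def:cb_definition_21} to guarantee that short arc steps cannot leave $B_{r_a}(a)$ too quickly. I would then construct Euler-type polygonal approximants: partition $[0,T]$ into $n$ equal steps and define $x^{(n)}$ recursively by following the arc $s \mapsto \Phi_s$ emanating from the current point over each sub-step. Condition~BL serves as the local \emph{consistency} estimate, bounding the mismatch $d(\Phi_{t+s}(x), \Phi_s(\Phi_t(x)))$ between one long step and two short ones by $t\,s\,\Omega$; Condition~AL serves as the \emph{stability} estimate, controlling the growth of an initial discrepancy $d(\Phi_t(x),\Phi_t(y))$ by the factor $1 + t\Lambda$. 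Telescoping these two estimates along the two partitions bounds $d(x^{(n)}_t, x^{(m)}_t)$ uniformly on $[0,T]$ by a quantity tending to $0$ as $n,m \to \infty$; completeness of $(M,d)$ then delivers a uniform limit curve $t \mapsto x_t$ lying in $B_{r_a}(a)$, and one further pass through Conditions~AL and~BL shows this limit satisfies $\lim_{s\to 0^+} d(x_{t+s}, \Phi_s(x_t))/s = 0$, i.e.~solves \eqref{eq:apx_mutationeq}.

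For uniqueness on a short interval, suppose $t \mapsto x_t$ and $t \mapsto y_t$ both solve \eqref{eq:apx_mutationeq} from $a$ and set $\phi(t) := d(x_t,y_t)$, a continuous function with $\phi(0)=0$. Combining $d(x_{t+s},\Phi_s(x_t)) = o(s)$ and $d(y_{t+s},\Phi_s(y_t)) = o(s)$ with the triangle inequality and the Condition~AL bound $d(\Phi_s(x_t),\Phi_s(y_t)) \le \phi(t)(1 + s\Lambda)$ gives $\phi(t+s) \le \phi(t)(1 + s\Lambda) + o(s)$; dividing by $s$ and letting $s\to 0^+$ yields the Dini-derivative inequality $D^+\phi(t) \le \Lambda^*\phi(t)$ on a short interval where $\Lambda \le \Lambda^*$. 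Since $\phi(0)=0$, a Gr\"onwall estimate forces $\phi \equiv 0$, so $x_t = y_t$; this establishes unique solutions for short time, and the conclusion of the corollary follows via \Cref{prop:cb_proposition_42}.

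The main obstacle, as in any metric-space analogue of the Cauchy--Lipschitz theorem, is the absence of a linear structure: increments cannot be added, so every error bound must be produced purely from the two metric inequalities of Conditions~AL and~BL. The delicate point is to make the telescoping bound in the existence step genuinely uniform in the partition --- iterating the stability factor so that $(1 + (T/n)\Lambda^*)^n \le e^{\Lambda^* T}$ while summing the $n$ consistency errors, each of size $O((T/n)^2)$, to a total of $O(T^2/n) \to 0$ --- while simultaneously verifying that the polygonal iterates never leave the ball $B_{r_a}(a)$ on which the constants $\Lambda^*$ and $\Omega$ are valid, which is exactly what forces $T$ to be chosen small. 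Everything else is bookkeeping with the triangle inequality.
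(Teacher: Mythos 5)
The paper does not reproduce a proof of CB-Corollary~4.3; it is restated as a result quoted from \cite{Calcaterra2000}, with the surrounding text merely indicating that the corollary ``concisely summarised'' the existence--uniqueness results of that reference. There is therefore no in-paper argument to compare against line by line. That said, your sketch is a faithful reconstruction of the standard metric-space Cauchy--Lipschitz argument that Calcaterra and Bleecker themselves use: Euler polygonal approximants on a short interval, with Condition~BL supplying the local consistency bound $d(\Phi_{t+s}(x),\Phi_s(\Phi_t(x))) \le ts\,\Omega$ and Condition~AL supplying the stability factor $1 + t\Lambda$, a discrete Gr\"onwall telescope controlling the difference of two approximants by $e^{\Lambda^* T}\cdot O(T^2/n)$, completeness of $(M,d)$ to pass to a uniform limit, and a separate continuous Gr\"onwall estimate on $\phi(t) = d(x_t,y_t)$ via the Dini derivative to get short-time uniqueness, after which the maximal-solution claim does indeed drop out of \Cref{prop:cb_proposition_42}. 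The one step you correctly flag as delicate --- choosing $T$ small enough that the polygonal iterates stay inside $B_{r_a}(a)$ where the constants in Conditions~AL and~BL are valid --- is exactly the point that the bound $\bar\rho(a,r_a) < \infty$ from \Cref{def:cb_definition_21} is there to handle, so the outline is sound as given.
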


In CB-Corollary 4.3, the interval in which each unique maximal solution exists was implicit.
They next studied under what additional condition each unique maximal solution exists globally in the infinite interval $[0, \infty)$.
For simplicity, we drop the term `maximal' for any unique solution in $[0, \infty)$, because a unique solution in $[0, \infty)$ is trivially the longest solution.
They used a condition, termed linear speed growth, and showed the existence of a unique solution in the infinite interval $[0, \infty)$ in CB-Theorem 4.4.

\begin{definition}[CB-Definition 4.2] \label{def:cb_definition_42}
	An arc field $\Phi$ is said to have \emph{linear speed growth} if there exists a point $\mone \in M$ and some constant $C_1, C_2$ dependent only on $\mone$ s.t.
	\begin{align*}
		\bar{\rho}(\mone, r) \le C_1 \, r + C_2
	\end{align*}
	holds for all $r > 0$, where $\bar{\rho}$ is defined in \Cref{def:cb_definition_21}.
\end{definition}

\begin{theorem}[CB-Theorem 4.4] \label{thm:cb_theorem_44}
	Assume that an arc field $\Phi$ satisfies Assumptions \ref{asmp:cb_condition_al}--\ref{asmp:cb_condition_bl} and has linear speed growth.
	At each $a \in M$, there exists a unique solution $t \mapsto \mone_t$ of \eqref{eq:apx_mutationeq} from $\mone_0 = a$ in $[0, \infty)$.
\end{theorem}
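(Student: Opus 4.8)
The plan is to upgrade the short-time existence and uniqueness already available to a global statement by a continuation argument, with linear speed growth supplying the \emph{a priori} bound that rules out finite-time blow-up. By \Cref{cor:cb_corollary_43} the arc field $\Phi$ has unique solutions for short time, so \Cref{prop:cb_proposition_42} provides, at each $a \in M$, a unique maximal solution $t \mapsto x_t$ of \eqref{eq:apx_mutationeq} on some interval $[0, T^*)$ with $x_0 = a$. It therefore suffices to show $T^* = \infty$, and I would argue by contradiction: assume $T^* < \infty$ and show the solution extends beyond $T^*$.

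The first step is a speed estimate along the solution. For small $s > 0$, the triangle inequality together with $\Phi_0(x_t) = x_t$ gives $d(x_{t+s}, x_t) \le d(x_{t+s}, \Phi_s(x_t)) + d(\Phi_s(x_t), \Phi_0(x_t))$, where the first term is $o(s)$ because $x$ solves \eqref{eq:apx_mutationeq} and the second is at most $s\,\rho(x_t)$ by the definition of $\rho$ in \Cref{def:cb_definition_21}. Hence $m(t) := d(x_t, a)$ has upper right Dini derivative $D^+ m(t) \le \rho(x_t)$, and by linear speed growth (\Cref{def:cb_definition_42}) this is $\le \bar\rho(a, m(t)) \le C_1 m(t) + C_2$. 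Since $m$ is continuous and $m(0) = 0$, a standard Dini comparison lemma yields $m(t) \le \psi(t)$, where $\psi$ solves $\psi' = C_1\psi + C_2$, $\psi(0) = 0$, and $\psi$ is finite on every bounded interval. As $T^* < \infty$, this gives $R := \sup_{0 \le t < T^*} d(x_t, a) < \infty$; re-running the same Dini estimate with the base point $a$ replaced by a fixed $x_{t_0}$ then shows $d(x_t, x_{t_0}) \le (C_1 R + C_2)\,|t - t_0|$, so the curve is Lipschitz on $[0, T^*)$ and in particular Cauchy as $t \to T^{*-}$.

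The final step is the continuation. By completeness of $(M, d)$ the limit $x_{T^*} := \lim_{t \to T^{*-}} x_t$ exists in $M$; extending $x$ to $[0, T^*]$ by this value yields a continuous curve still satisfying \eqref{eq:apx_mutationeq} on $[0, T^*)$. Applying \Cref{cor:cb_corollary_43} at the point $x_{T^*}$ produces a solution of \eqref{eq:apx_mutationeq} on some interval $[0, \delta)$ starting there; concatenating it onto $x$ at time $T^*$ gives a curve on $[0, T^* + \delta)$ that satisfies \eqref{eq:apx_mutationeq} everywhere, since that condition is one-sided and at $t = T^*$ is governed exactly by the newly appended piece while at all other times it holds by construction. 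This contradicts the maximality of $T^*$ guaranteed by \Cref{prop:cb_proposition_42}, so $T^* = \infty$; uniqueness on $[0, \infty)$ is then inherited from uniqueness for short time through the same proposition.

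I expect the main obstacle to be making the speed estimate genuinely \emph{uniform} along the curve and closing the Grönwall loop without circularity: the arc-field axiom and \Cref{asmp:cb_condition_al} only give $\bar\rho(a, r) < \infty$ for small $r$, and it is precisely the linear speed growth hypothesis that promotes this to the global-in-$r$ control $\bar\rho(a, r) \le C_1 r + C_2$ used above. Secondary care is needed to ensure the solution is continuous so that the Dini comparison lemma applies, and to verify that the concatenated curve really solves \eqref{eq:apx_mutationeq} at the junction $t = T^*$ and not merely on each side of it.
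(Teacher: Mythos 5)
Your proof is essentially the same continuation argument the paper gives for its relaxed version (\Cref{thm:cb_theorem_44_relaxed}): start from short-time existence and uniqueness via \Cref{cor:cb_corollary_43} and \Cref{prop:cb_proposition_42}, derive $D^+\, d(x_t, a) \le \rho(x_t)$ from the triangle inequality, the arc-field bound and the defining condition \eqref{eq:apx_mutationeq}, use the linear speed growth to close a Gr\"onwall loop, conclude a Lipschitz estimate on the trajectory, take the Cauchy limit by completeness, and concatenate. Casting it as a contradiction ($T^* < \infty$) rather than direct extension is cosmetic, and your derivation of the Lipschitz constant $(C_1 R + C_2)$ from a uniform bound on $\rho(x_t)$ is a slightly shorter route to the same conclusion as the paper's explicit exponential comparison. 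One small gloss worth fixing: \Cref{def:cb_definition_42} only asserts $\bar\rho(\cdot, r) \le C_1 r + C_2$ at \emph{one} point $x \in M$, not at an arbitrary $a$; you need a one-line triangle-inequality remark ($B_r(a) \subseteq B_{r + d(a,x)}(x)$, so $\bar\rho(a,r) \le C_1 r + (C_1 d(a,x) + C_2)$) before you may write $\rho(x_t) \le \bar\rho(a, m(t)) \le C_1 m(t) + C_2$. And since $m(t)=d(x_t,a)$ means $x_t$ sits on the boundary of $B_{m(t)}(a)$ rather than in its interior, you should take $\bar\rho(a, m(t)+\varepsilon)$ and send $\varepsilon\to 0^+$, though the linear bound makes this harmless. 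Neither point is a genuine gap; both are one-sentence repairs.
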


\noindent
Note that each unique solution above is continuous while this was implicit in CB-Theorem 4.4.
Continuity of each solution is made explicit in our statement in \Cref{sec:appendix_0_2}. 
We move on to \Cref{sec:appendix_0_2}, where we weaken CB-Definition 4.2 and reproduce CB-Theorem 4.4.

\subsection{The Relaxation of The Condition} \label{sec:appendix_0_2}

The condition of CB-Definition 4.2 for an arc field $\Phi$ requires that the inequality $\bar{\rho}(\mone, r) \le C_1 \, r + C_2$ holds for all radius $r$ in $(0, \infty)$.
This is in contrast to the inequalities in Assumptions \ref{asmp:cb_condition_al}--\ref{asmp:cb_condition_bl} that need to hold only for all radius $r$ smaller than some constant $r_a$ at each point $a \in M$.
The condition of CB-Definition 4.2 may be restrictive for some application; in the theory of arc Hamiltonian systems, we use local Lipschitz continuity of the first variation of a Hamiltonian function to show the existence of a flow.
We conclude \Cref{sec:appendix_0} by showing that the condition of CB-Definition 4.2 can be relaxed in a manner that an analogous inequality holds only for all radius $r$ smaller than some constant $r_a$ at each point $a \in M$.
The function $\rho$ in \Cref{asmp:cb_definition_42_relaxed} is defined in \Cref{def:cb_definition_21}.

\begin{assumption} \label{asmp:cb_definition_42_relaxed}
	At each $a \in M$, there exist a radius $r_a > 0$ s.t.
	\begin{align*}
		\rho(a) \le C_1 d(\mone, a) + C_2 
	\end{align*}
	holds for all $\mone \in B_{r_a}(a)$, where $C_1, C_2$ are constants dependent only on $a$.
\end{assumption}

\noindent
Here, we note that a map $\Phi$ that satisfies \Cref{asmp:cb_definition_42_relaxed} is in fact an arc field.

\begin{lemma} \label{lem:cb_theorem_44_relaxed_lemma}
	A map $\Phi$ that satisfies \Cref{asmp:cb_definition_42_relaxed} is an arc field on $(M, d)$.
\end{lemma}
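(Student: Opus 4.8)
The plan is to verify directly the two conditions in the definition of an arc field, \Cref{def:cb_definition_21}: at each $a \in M$ one needs $\rho(a) < \infty$ and $\bar{\rho}(a, r) < \infty$ for all sufficiently small $r > 0$, where $\rho$ and $\bar{\rho}$ are the quantities appearing there. The third requirement, $\Phi_0(a) = a$ for all $a \in M$, is part of the standing hypotheses on $\Phi$ throughout \Cref{sec:appendix_0}, so nothing is needed for it. I read the inequality in \Cref{asmp:cb_definition_42_relaxed} as a bound on $\rho$ evaluated at the moving point, i.e.\ $\rho(\mone) \le C_1 d(\mone, a) + C_2$ for $\mone \in B_{r_a}(a)$, in keeping with the analogous \Cref{asmp:flow_existence_2}; this is the local form of the linear-speed-growth condition CB-Definition 4.2, in which the linear bound was instead imposed on all of $M$.

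The argument is then immediate. First I would fix an arbitrary $a \in M$ and apply \Cref{asmp:cb_definition_42_relaxed} at $a$, producing a radius $r_a > 0$ and constants $C_1, C_2$ (depending only on $a$) with $\rho(\mone) \le C_1 d(\mone, a) + C_2$ for every $\mone \in B_{r_a}(a)$. Taking $\mone = a$ and using $d(a, a) = 0$ yields $\rho(a) \le C_2 < \infty$, which is the first part of \Cref{def:cb_definition_21}. Next, for any $r$ with $0 < r \le r_a$ we have $B_r(a) \subseteq B_{r_a}(a)$, so
\[
	\bar{\rho}(a, r) = \sup_{\mone \in B_r(a)} \rho(\mone) \le \sup_{\mone \in B_r(a)} \big( C_1\, d(\mone, a) + C_2 \big) \le C_1\, r + C_2 < \infty .
\]
Hence $\bar{\rho}(a, r)$ is finite for every sufficiently small $r > 0$ (indeed for all $r \le r_a$), which is the second part of \Cref{def:cb_definition_21}. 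Since $a$ was arbitrary, $\Phi$ is an arc field on $(M, d)$.

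Given how short this is, there is no substantial obstacle; the one point that genuinely matters---and the reason the relaxation is useful rather than vacuous---is that the bound from \Cref{asmp:cb_definition_42_relaxed} is only local, valid on $B_{r_a}(a)$ and not on all of $M$. Consequently one must keep the radius in the supremum defining $\bar{\rho}(a, r)$ at most $r_a$; one cannot, and need not, control $\rho$ globally. This is exactly compatible with \Cref{def:cb_definition_21}, which asks only for finiteness of $\bar{\rho}(a, r)$ for small $r$, so the proof goes through without further ingredients.
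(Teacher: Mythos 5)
Your proof is correct and follows essentially the same route as the paper's: apply the local bound from Assumption \ref{asmp:cb_definition_42_relaxed} at each $a \in M$, and take the supremum over $B_r(a)$ for $r \le r_a$ to bound $\bar{\rho}(a, r)$ by $C_1 r_a + C_2 < \infty$. The only (minor) difference is that you explicitly verify $\rho(a) < \infty$ first, whereas the paper leaves this implicit since it follows from $a \in B_r(a)$ and $\bar{\rho}(a,r) < \infty$.
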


\begin{proof}
	Recall from \Cref{def:cb_definition_21} that $\bar{\rho}(\mone, r) = \sup_{ \mthree \in B_r(\mone) } \rho(\mthree)$.
	It follows from \Cref{asmp:cb_definition_42_relaxed} that each $a \in M$ there exist a radius $r_a > 0$ s.t.~it holds for all $0 < r \le r_a$ that
	\begin{align*}
		\bar{\rho}(a, r) = \sup_{ \mthree \in B_r(a) } \rho(\mthree) \le \sup_{ \mthree \in B_r(a) } \left( C_1 d(\mthree, a) + C_2 \right) \le C_1 \, r_a + C_2 
	\end{align*}
	where the last inequality is immediate from $\mthree \in B_r(a)$ and $r \le r_a$.
	Since $\bar{\rho}(a, r) < \infty$ for all $0 < r \le r_a$ at each $a \in M$, we conclude that $\Phi$ satisfies \Cref{def:cb_definition_21} and is an arc field.
\end{proof}

\noindent
Now, we reproduce CB-Theorem 4.4 under \Cref{asmp:cb_definition_42_relaxed} without using the condition of CB-Definition 4.2.
Continuity of each solution is made explicit in the following \Cref{thm:cb_theorem_44_relaxed}.

\begin{theorem} \label{thm:cb_theorem_44_relaxed}
	Suppose that a map $\Phi$ satisfies Assumptions \ref{asmp:cb_condition_al}--\ref{asmp:cb_definition_42_relaxed}. 
	At each $a \in M$, there exists a unique continuous solution $t \mapsto \mone_t$ of \eqref{eq:apx_mutationeq} from $\mone_0 = a$ in $[0, \infty)$.
\end{theorem}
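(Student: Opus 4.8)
The plan is to follow the architecture of CB\nobreakdash-Theorem~4.4 (\Cref{thm:cb_theorem_44}) almost verbatim, and to replace the single ingredient that the \emph{global} linear speed growth of CB\nobreakdash-Definition~4.2 (\Cref{def:cb_definition_42}) supplies there — an a priori confinement of the trajectory and a uniform bound on its speed on a finite time interval — by an argument driven by the purely \emph{local} estimate of \Cref{asmp:cb_definition_42_relaxed}. First I would record, via \Cref{lem:cb_theorem_44_relaxed_lemma}, that $\Phi$ is an arc field; then Assumptions~\ref{asmp:cb_condition_al}--\ref{asmp:cb_condition_bl} place us inside CB\nobreakdash-Corollary~4.3 (\Cref{cor:cb_corollary_43}), so $\Phi$ has unique solutions for short time, and by CB\nobreakdash-Proposition~4.2 (\Cref{prop:cb_proposition_42}) there is, from each $a \in M$, a unique maximal solution $t \mapsto \mone_t$ of \eqref{eq:apx_mutationeq} on some maximal interval $[0, T^*)$, which is continuous by the construction underlying \Cref{cor:cb_corollary_43}. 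Uniqueness of a solution on the whole of $[0,\infty)$, once existence is established, is then immediate from uniqueness for short time. Hence the entire content of the theorem is the claim $T^* = \infty$.

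I would prove this by contradiction: assume $T := T^* < \infty$. The first, routine, ingredient is a speed estimate along the solution: since $t\mapsto\mone_t$ satisfies \eqref{eq:apx_mutationeq} and $\Phi$ is an arc field, for every $t$ and all small $s>0$ we have $d(\mone_{t+s},\mone_t) \le d(\mone_{t+s},\Phi_s(\mone_t)) + d(\Phi_s(\mone_t),\mone_t) \le o(s) + \rho(\mone_t)\,s$, so for any fixed $u \in M$ the map $t \mapsto d(\mone_t,u)$ has upper right Dini derivative at most $\rho(\mone_t)$. The decisive step is to upgrade this to a uniform bound $\sup_{t \in [0,T)} \rho(\mone_t) < \infty$. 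Here is where \Cref{asmp:cb_definition_42_relaxed} is used in place of global linear speed growth: anchoring at a visited point $b = \mone_{t_0}$, while the solution remains in $B_{r_b}(b)$ the Dini inequality together with $\rho(\mone_t) \le C_1^{b}\,d(\mone_t,b) + C_2^{b}$ gives, via Gronwall's inequality, an exponential-in-time bound on $d(\mone_t,b)$ and hence on $\rho(\mone_t)$; one re-anchors at the first exit point (well defined by continuity) and iterates, noting that each passage across a ball $B_{r_b}(b)$ consumes time at least $r_b/(C_1^{b}r_b + C_2^{b}) > 0$. I expect the real work — and the main obstacle — to lie precisely in closing this re-anchoring chain on the finite interval $[0,T)$: one must rule out that infinitely many passages accumulate before $T$, which would let the trajectory escape all control. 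The intended device is a stopping-time argument combined with compactness: on each $[0,\tau]$ with $\tau < T$ the continuous image $\mone([0,\tau])$ is compact, hence covered by finitely many of the balls $\{B_{r_b}(b)\}_{b\in M}$, on each of which $\rho$ is bounded; this gives a uniform per-passage time lower bound on $[0,\tau]$, and letting $\tau \uparrow T$ (using the Gronwall confinement to keep the trajectory inside the chained balls) yields the desired $\sup_{t<T}\rho(\mone_t) =: R < \infty$.

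Granted $R < \infty$, the speed estimate gives $d(\mone_t,\mone_{t'}) \le (R+1)\,|t-t'|$ for $t,t'$ in $[0,T)$ sufficiently close and, by chaining, for all $t,t' \in [0,T)$; thus $t\mapsto\mone_t$ is uniformly continuous, hence Cauchy as $t \uparrow T$, and completeness of $(M,d)$ furnishes a limit $\mone_T \in M$. Extending the curve by $\mone_T$ keeps it a solution of \eqref{eq:apx_mutationeq} on $[0,T]$ (the defining condition is one-sided and only required at times $t$ with $t+s$ in the domain for small $s>0$), and applying short-time existence (\Cref{cor:cb_corollary_43}) at $\mone_T$ and gluing the resulting continuous solution yields a solution of \eqref{eq:apx_mutationeq} on $[0,T+\epsilon)$ for some $\epsilon>0$ starting from $a$; by \Cref{prop:cb_proposition_42} this forces $T^* \ge T + \epsilon$, contradicting $T^* = T$. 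Therefore $T^* = \infty$, and together with short-time uniqueness this gives the unique continuous solution on $[0,\infty)$. In summary, the only nonroutine part is the uniform speed bound of the second paragraph: everything else is the standard completeness-and-continuation machinery already present in \cite{Calcaterra2000}, while \Cref{asmp:cb_definition_42_relaxed} is exactly strong enough to run the localized Gronwall/re-anchoring argument that substitutes for CB\nobreakdash-Definition~4.2.
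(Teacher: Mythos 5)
Your proposal shares the opening scaffolding with the paper's proof (Lemma \ref{lem:cb_theorem_44_relaxed_lemma}, Corollary \ref{cor:cb_corollary_43}, Proposition \ref{prop:cb_proposition_42}, and a local Gronwall estimate anchored at a visited point), but the globalization step departs from the paper, and it is there that the argument breaks down. The paper uses Lemma \ref{lem:cb_continuity} to produce a short interval $[0,T_a)$ on which the solution stays inside $B_{r_a}(a)$, runs the Gronwall argument entirely inside that single ball (where $\rho(x_t) \le C_1\, d(x_t,a) + C_2$ is guaranteed by \Cref{asmp:cb_definition_42_relaxed}), deduces the Lipschitz estimate $d(x_t,x_s) \le C_2\, e^{2C_1 T_a}\,|t-s|$ on $[0,T_a)$, extends to $[0,T_a]$ by completeness, and re-anchors. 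Your outline never invokes Lemma \ref{lem:cb_continuity}; instead it targets the much stronger claim $\sup_{t<T^*}\rho(x_t)<\infty$ over the entire maximal interval, to be proved by compactness.

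That step cannot be made to work as described. Covering $x([0,\tau])$ by finitely many local balls, for each fixed $\tau<T^*$, yields only a $\tau$-dependent bound on $\rho$ and a $\tau$-dependent per-passage time lower bound: \Cref{asmp:cb_definition_42_relaxed} is purely local and places no constraint on how the constants $C_1^{b}, C_2^{b}$ or the radii $r_b$ behave across a chain of fresh anchors, so both the cover size and the worst per-passage time can deteriorate arbitrarily as $\tau\uparrow T^*$. The phrase ``using the Gronwall confinement to keep the trajectory inside the chained balls'' appeals to a confinement that has not been established — showing that the trajectory stays in a fixed compact set as $t\uparrow T^*$ is exactly what the claimed uniform speed bound would deliver, so the reasoning is circular. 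The paper's proof never needs (or asserts) a uniform bound on $\rho$ over the whole trajectory: its Gronwall bound and the Lipschitz constant it produces live inside the single confinement ball $B_{r_a}(a)$ supplied by Lemma \ref{lem:cb_continuity}, after which it passes to the limit and restarts. The omission of Lemma \ref{lem:cb_continuity}, and the attempt to substitute for it a compactness argument that the purely local hypothesis does not support, is the genuine gap in your plan.
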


\noindent
The proof of \Cref{thm:cb_theorem_44_relaxed} is provided in \Cref{sec:appendix_0_3}.
\Cref{thm:cb_theorem_44_relaxed} is the main result that directly used in the theoretical development of arc Hamiltonian systems.
This completes the brief recap of the theory of arc fields and the relaxation of the condition.

\subsection{Proof of \Cref{thm:cb_theorem_44_relaxed}} \label{sec:appendix_0_3}

We introduce an intermediate lemma used in the main proof.

\begin{lemma} \label{lem:cb_continuity}
	Suppose that for a point $a \in M$ an arc field $\Phi$ has a unique solution $t \mapsto \mone_t$ from $\mone_0 = a$ in some interval.
	Given any radius $r > 0$, there exists some length $T_a > 0$ s.t.~the unique solution $t \mapsto \mone_t$ from $\mone_0 = a$ satisfies $\mone_t \in B_r(a)$ for all $t \in [0, T_a)$.
\end{lemma}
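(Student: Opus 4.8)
The plan is to establish right-continuity of the solution at the origin, namely $d(x_t, a) \to 0$ as $t \to 0^+$, from which the statement follows by choosing $T_a$ small enough (and no larger than the length of the interval on which the solution is asserted to exist).

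First I would split by the triangle inequality, writing $d(x_t, a) = d(x_t, x_0) \le d(x_t, \Phi_t(x_0)) + d(\Phi_t(x_0), x_0)$ for $t$ in the interval of definition. The first summand is controlled by the defining property \eqref{eq:apx_mutationeq} of a solution evaluated at $t = 0$: since $d(x_s, \Phi_s(x_0)) / s \to 0$ as $s \to 0^+$ and $s \to 0$, the numerator $d(x_s, \Phi_s(x_0))$ itself converges to $0$. The second summand is controlled by the arc-field hypothesis on $\Phi$: by \Cref{def:cb_definition_21} the quantity $\rho(x_0)$ is finite, and because $\Phi_0(x_0) = x_0$ we obtain $d(\Phi_t(x_0), x_0) = d(\Phi_t(x_0), \Phi_0(x_0)) \le \rho(x_0)\, t$ for $t \in (0, 1]$, which tends to $0$ as $t \to 0^+$. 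Summing the two bounds gives $d(x_t, a) \to 0$.

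Then, given $r > 0$, right-continuity provides some $\delta > 0$ with $d(x_t, a) < r$ for all $t \in [0, \delta)$ (the case $t = 0$ being trivial, as $x_0 = a \in B_r(a)$). Setting $T_a := \min\{\delta, \tau\}$, where $[0, \tau)$ denotes an interval on which the solution exists, yields $x_t \in B_r(a)$ for every $t \in [0, T_a)$, as claimed.

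I do not expect a substantive obstacle here: the argument is essentially a single triangle inequality combined with (i) the first-order approximation property of a solution and (ii) the time-Lipschitz bound built into the definition of an arc field. The only points requiring care are recording that $\rho(x_0) < \infty$ — which is precisely the arc-field property of $\Phi$ — and that, since the solution is only asserted on some interval, the length $T_a$ must be capped by that interval.
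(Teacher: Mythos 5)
Your proof is correct and takes essentially the same approach as the paper's: a triangle inequality through $\Phi_t(x_0)$, the defining condition of a solution to kill $d(x_t, \Phi_t(x_0))$, and the finiteness of $\rho(x_0)$ from the arc-field definition to bound $d(\Phi_t(x_0), x_0) \le \rho(x_0)\,t$. The paper carries out the estimate at a general time $t$ before specialising to $t = 0$, whereas you work at $t = 0$ directly, but the argument is the same; you also make explicit the (implicit in the paper) capping of $T_a$ by the length of the interval on which the solution exists.
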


\begin{proof}[Proof of \Cref{lem:cb_continuity}]
	Denote by $[0, T_a^*)$ the interval in which a unique solution $t \mapsto \mone_t$ from $\mone_0 = a$ is supposed to exist.
	Let $t$ be an arbitrary point in $[0, T_a^*)$ and let $s$ be any variable in $[0, T_a^* - t)$, where we have $0 \le t + s < T_a^*$.
	By the triangle inequality, we have
	\begin{align*}
		d(\mone_{t+s}, \mone_t) & \le d(\mone_{t+s}, \Phi_s(\mone_t)) + d(\Phi_s(\mone_t), \mone_t) \\
		& = d(\mone_{t+s}, \Phi_s(\mone_t)) + d(\Phi_s(\mone_t), \Phi_0(\mone_t)) \\
		& = s \cdot \frac{ d(\mone_{t+s}, \Phi_s(\mone_t)) }{ s } + s \cdot \frac{ d(\Phi_s(\mone_t), \Phi_0(\mone_t)) }{ s } .
	\end{align*}
	In the above inequality, we plug the following trivial inequality
	\begin{align*}
		\frac{ d(\Phi_s(\mone_t), \Phi_0(\mone_t)) }{ s } \le \sup_{ s_1 \ne s_2 \in [0, 1] } \frac{ d( \Phi_{s_1}(\mone_t), \Phi_{s_2}(\mone_t)) }{ | s_1 - s_2 | } = \rho(\mone_t)
	\end{align*}
	where $\rho$ is the function defined in \Cref{def:cb_definition_21}.
	We take the limit $s \to 0^+$ to see that
	\begin{align*}
		\lim_{s \to 0^+} d(\mone_{t+s}, \mone_t) & = \lim_{s \to 0^+} \left( s \cdot \frac{ d(\mone_{t+s}, \Phi_s(\mone_t)) }{ s } + s \cdot \rho(\mone_t) \right) \\
		& = \underbrace{ \lim_{s \to 0^+} s }_{ = 0 } \cdot \underbrace{ \lim_{s \to 0^+} \frac{ d(\mone_{t+s}, \Phi_s(\mone_t)) }{ s } }_{ =: (*) } + \underbrace{ \lim_{s \to 0^+} s }_{ = 0 } \cdot \underbrace{\vphantom{\lim_{s \to 0^+}} \rho(\mone_t) }_{ < \infty} .
	\end{align*}
	We have $(*) = 0$ by the condition \eqref{eq:apx_mutationeq} and have $\rho(\mone_t) < \infty$ by \Cref{def:cb_definition_21}.
	We set $t = 0$ to see that $\lim_{s \to 0^+} d(\mone_s, \mone_0) = \lim_{s \to 0^+} d(\mone_s, a) = 0$.
	By definition of convergence, for any value $r > 0$, we can take some value $T_a > 0$ so that $d(\mone_s, a) < r$ holds for all $s \in [0, T_a)$.
\end{proof}

\noindent
Now, we present the main proof of \Cref{thm:cb_theorem_44_relaxed}.

\begin{proof}[Proof of \Cref{thm:existence_of_global_flow}]
	By \Cref{lem:cb_theorem_44_relaxed_lemma}, the map $\Phi$ is an arc field under \Cref{asmp:cb_definition_42_relaxed}.
	Let $a$ be an arbitrary point in $M$.
	Let $r_a > 0$ be the radius that the inequality in \Cref{asmp:cb_definition_42_relaxed} holds at $a \in M$.
	By \Cref{cor:cb_corollary_43}, the arc field $\Phi$ has unique solutions for short time.
	Hence a unique solution $t \mapsto \mone_t$ from $\mone_0 = a$ exists in some interval.
	By \Cref{lem:cb_continuity}, there exists some $T_a > 0$ s.t.~the solution $t \mapsto \mone_t$ from $\mone_0 = a$ satisfies $\mone_t \in B_{r_a}(a)$ for all $t \in [0, T_a)$.

	Our aim is to show that the unique solution $t \mapsto \mone_t$ in $[0, T_a)$ can be uniquely and indefinitely extended to $[0, \infty)$, by the same argument in the proof of CB-Theorem 4.4.
	Let $\{ t_n \}_{n=1}^{\infty}$ be a Cauchy sequence in $[0, T_a)$ s.t.~$t_n \to T_a$.
	A key part of the proof is to show that 
	\begin{center}
		$(\star)$ the solution sequence $\{ \mone_{t_n} \}_{n=1}^{\infty}$ from $\mone_{t_1} = a$ is Cauchy.
	\end{center}
	If item $(\star)$ holds, the solution $t \mapsto \mone_t$ from $\mone_0 = a$ converges to some point $b \in M$ as $t \to T_a$.
	This implies that, at $t = T_a$, the solution $t \mapsto \mone_t$ connects to the other unique solution $t \mapsto \hat{\mone}_t$ running from $\hat{\mone}_0 = b$ in some interval $[0, T_b)$.
	Since we show item $(\star)$ for arbitrary $a$, the other solution $t \mapsto \hat{\mone}_t$ from $b$ is also Cauchy and converges to some point $c \in M$ as $t \to T_b$.
	By repeating this process, the original solution from $a$ can be continued over $[0, \infty)$.
	By \Cref{prop:cb_proposition_42}, the continued solution in $[0, \infty)$ is unique, which concludes the existence of a unique solution in $[0, \infty)$.
	Continuity of the solution is clarified in the end of this proof.
	
	It suffices to prove item $(\star)$.
	We repeat the argument in the proof of CB-Theorem 4.4 for item $(\star)$, using only \Cref{asmp:cb_definition_42_relaxed}.
	Let $f(t) := d( \mone_t, \mone_0 )$ for any $t \in [0, T_a)$ where $\mone_0 = a$.
	For any $t \in [0, T_a)$ and $h \in [0, T_a - t)$, it follows from the reverse triangle inequality that
	\begin{align*}
		f(t + h) - f(t) = d( \mone_{t+h}, \mone_0 ) - d( \mone_t, \mone_0 ) \le d( \mone_{t+h}, \mone_t ) .
	\end{align*}
	It further follows from the triangle inequality that
	\begin{align*}
		f(t + h) - f(t) & \le d( \mone_{t+h}, \Phi_h(\mone_t) ) + d( \Phi_h(\mone_t), \mone_t ) \\
		& = d( \mone_{t+h}, \Phi_h(\mone_t) ) + h \cdot \frac{ d( \Phi_h(\mone_t), \Phi_0(\mone_t) ) }{ h }  \\
		& \le d( \mone_{t+h}, \Phi_h(\mone_t) ) + h \cdot \rho(\mone_t) .
	\end{align*}
	Dividing both the side by $h$ and taking the limit in $h \to 0^+$, we have
	\begin{align*}
		\lim_{h \to 0^+} \frac{ f(t + h) - f(t) }{ h } & \le \lim_{h \to 0^+} \frac{ d( \mone_{t+h}, \Phi_h(\mone_t) ) }{ h } + \rho(\mone_t) = \rho(\mone_t) .
	\end{align*}
	where $\lim_{h \to 0^+} d( \mone_{t+h}, \Phi_h(\mone_t) ) / h = 0$ due to the condition \eqref{eq:apx_mutationeq}.
	Recall that $\mone_t \in B_{r_a}(\mone_0)$ holds for all $t \in [0, T_a)$, where $\mone_0 = a$.
	Since $r_a$ is the radius under which the inequality in \Cref{asmp:cb_definition_42_relaxed} holds at $a$, we can apply the inequality of $\rho(\mone_t)$ for all $t \in [0, T_a)$, that is,
	\begin{align}
		\lim_{h \to 0^+} \frac{ f(t + h) - f(t) }{ h } & \le C_1 d( \mone_t, \mone_0 ) + C_2 = C_1 f(t) + C_2 . \label{eq:proof_fth_ft_condition}
	\end{align}
	This inequality \eqref{eq:proof_fth_ft_condition} implies a bound $f(t) \le ( C_2 / C_1 ) ( \exp( C_1 t ) - 1 )$.
	To verify this, we set $g(t) := \exp( - C_1 t ) ( f(t) + C_2 / C_1 )$ and calculate its right-derivative
	\begin{align*}
		\lim_{h \to 0^+} \frac{ g(t + h) - g(t) }{ h } & = \exp( - C_1 t ) \left( \lim_{h \to 0^+} \frac{ f(t + h) - f(t) }{ h } - C_1 f(t) - C_2 \right) \le 0 
	\end{align*}
	where the right-hand side is non-positive because of the inequality \eqref{eq:proof_fth_ft_condition}.
	Since the right-derivative is non-positive, $g$ decreases over $[0, T_*)$ and hence $g(t) \le g(0)$, which leads to
	\begin{align*}
		g(t) = \exp( - C_1 t ) \left( f(t) + \frac{ C_2 }{ C_1 } \right) \le \frac{ C_2 }{ C_1 } = g(0)
	\end{align*}
	where we used $f(0) = 0$.
	Rearranging the term confirms that we have the intended bound $f(t) \le ( C_2 / C_1 ) ( \exp( C_1 t ) - 1 )$.
	Since $f(t) = d(\mone_t, \mone_0)$, we arrive at a main bound:
	\begin{align}
		d(\mone_t, \mone_0) \le \frac{C_2}{C_1} ( \exp( C_1 t ) - 1 ) . \label{eq:proof_data0_bound}
	\end{align}
	Let $s$ be an arbitrary point in $[0, T_a)$.
	Define $\mthree_t := \mone_{s+t}$ for any $t \in[0, T_* - s)$, where $\mthree_0 = \mone_s$.
	We aim to derive an upper bound of $d( \mthree_t, \mthree_0 )$ similar to \eqref{eq:proof_data0_bound}.
	The above argument for $f(t) = d( \mone_t, \mone_0 )$ depends only on the condition $\rho(\mone_t) \le C_1 d( \mone_t, \mone_0 ) + C_2$.
	If a corresponding condition $\rho(\mthree_t) \le C_1' d( \mthree_t, \mthree_0 ) + C_2'$ holds for some constants $C_1'$ and $C_2'$, we have
	\begin{align}
		d(\mthree_t, \mthree_0) \le \frac{C_2'}{C_1'} ( \exp( C_1' t ) - 1 ) . \label{eq:proof_dbtb0_bound}
	\end{align}
	In fact, the required condition holds for $C_1' = C_1$ and $C_2' = C_1 d( \mone_s, \mone_0 ) + C_2$ because
	\begin{align*}
		\rho(\mone_{s+t}) \le C_1 \, d(\mone_{s+t}, \mone_0) + C_2 \le C_1 \, d(\mone_{s+t}, \mone_s) + C_1 \, d(\mone_s, \mone_0) + C_2
	\end{align*}
	where the first inequality follows from the inequality in \Cref{asmp:cb_definition_42_relaxed} and the last inequaliy follows from the triangle inequality.
	Applying \eqref{eq:proof_data0_bound} for $d(\mone_s, \mone_0)$ in the constant $C_2'$, we further have $C_2' \le C_2 \exp( C_1 T_a )$.
	We plug these constants $C_1'$ and $C_2'$ in \eqref{eq:proof_dbtb0_bound} to see that
	\begin{align*}
		d(\mone_{s+t}, \mone_s) & \le \frac{C_2 \exp( C_1 T_a )}{C_1} | \exp( C_1 t ) - \exp(0) | .
	\end{align*}
	Since the mean value theorem implies $| \exp( C_1 t ) - \exp( 0 ) | \le \exp( C_1 T_a ) | C_1 t - 0 |$ for any $t \in [0, T_a)$, we have, for arbitrary $s \in [0, T_a)$ and $t \in [0, T_a - s)$, that
	\begin{align*}
		d(\mone_{s+t}, \mone_s) & \le C_2 \exp( 2 C_1 T_a ) \, | t | .
	\end{align*}
	Since the metric $d$ are symmetric, we can rewrite this inequality as
	\begin{align}
		d(\mone_t, \mone_s) & \le C_2 \exp( 2 C_1 T_a ) \, | t - s | \label{eq:proof_dbtb0_lipschitz}
	\end{align}
	for any $s, t \in [0, T_a)$.
	This inequality implies Lipschitz continuity of the solution $t \mapsto \mone_t$ in $[0, T_a)$.
	We conclude that the solution sequence $\{ \mone_{t_n} \}_{n=1}^{\infty}$ is Cauchy as intended, because we have $d(\mone_{t_n}, \mone_{t_m}) \to 0$ whenever $| t_n - t_m | \to 0$ due to \eqref{eq:proof_dbtb0_lipschitz}.
	
	Finally, we make explicit continuity of the continued solution in $[0, \infty)$.
	As aforementioned, a solution $t \mapsto \mone_t$ from $\mone_0 = a$ in $[0, T_a)$ converges to some point $b \in M$ at $t = T_a$ and connects to the other solution $t \mapsto \hat{\mone}_t$ from $\hat{\mone}_0 = b$ in $[0, T_b)$.
	Since $a$ is arbitrary in this proof, both the solution $t \mapsto \mone_t$ from $\mone_0 = a$ and the solution $t \mapsto \hat{\mone}_t$ from $\hat{\mone}_0 = b$ satisfy Lipschitz continuity \eqref{eq:proof_dbtb0_lipschitz}.
	Thus, the connected solution over $[0, T_a + T_b)$ is continuous if it is continuous at $b$.
	It is trivially continuous at $b$ since the two solutions are concatenated at $b$.
	By repeating this process, the solution $t \mapsto \mone_t$ from $a$ continues over $[0, \infty)$ continuously.
\end{proof}


\section{Proofs for Results in Section 3} \label{sec:appendix_a}

\subsection{Proof of \Cref{thm:existence_of_global_flow}} \label{proof:existence_of_global_flow}

Theorem 4.4 of \cite{Calcaterra2000} established conditions for an arbitrary arc field $\Phi: M \times [0, 1] \to M$, under which a unique solution of the associated condition \eqref{eq:emutationeq} exists in $[0, \infty)$ from any initial point in $M$.
See \Cref{sec:appendix_0_1} for a summary of their results.
We established in \Cref{thm:cb_theorem_44_relaxed} in \Cref{sec:appendix_0_2} that a part of their original assumptions can be relaxed.
\Cref{thm:existence_of_global_flow} is a direct application of \Cref{thm:cb_theorem_44_relaxed} stated under the relaxed assumptions suitable for our theoretical development.

\begin{proof}
	By definition, a global flow of the system from a point $(\mone, \mtwo)$ is a unique continuous solution $t \mapsto (\mone_t, \mtwo_t)$ of the arc Hamilton equation \eqref{eq:H_mutation_equation} from $(\mone_0, \mtwo_0) = (\mone, \mtwo)$ in the infinite interval $[0, \infty)$.
	\Cref{thm:cb_theorem_44_relaxed} shows that a unique continuous solution of the condition \eqref{eq:emutationeq} exists in the infinite interval $[0, \infty)$ from every initial point in $M$ if the associated map $\Phi$ satisfies Assumptions \ref{asmp:cb_condition_al}--\ref{asmp:cb_definition_42_relaxed}.
	For the arc Hamilton equation \eqref{eq:H_mutation_equation}, \Cref{asmp:flow_existence_1} restates Assumptions \ref{asmp:cb_condition_al}--\ref{asmp:cb_condition_bl} combined together and \Cref{asmp:flow_existence_2} restates \Cref{asmp:cb_definition_42_relaxed}.
\end{proof}

\subsection{Proof of \Cref{lem:existence_of_flow_simplified}} \label{proof:existence_of_flow_simplified}

In \Cref{proof:existence_of_flow_simplified}, let $\| \cdot \|_E$ be the norm on $E = E_1 \times E_2$ that induces the metric $d_*$ on $E$ in \Cref{asmp:flow_existence_3} item (1).
Since $d_*$ is an extension of $d$ from $M$ to $E$, we have $d( (\mone, \mtwo), (\mthree, \mfour) ) = d_*( (\mone, \mtwo), (\mthree, \mfour) ) = \| (\mone, \mtwo) - (\mthree, \mfour) \|_E$ everywhere in $M$.
We introduce an intermediate lemma used in the main proof.

\begin{lemma} \label{lem:equality_psi}
	Suppose \Cref{asmp:flow_existence_3}.
	For any $(\mone, \mtwo) \in M$, we have
	\begin{align*}
		d((\mone, \mtwo), \Phi_s(\mone, \mtwo)) = s \| \partial H(\mone, \mtwo) \|_E .
	\end{align*}
	At each $(\mone_0, \mtwo_0) \in M$, it holds for all $(\mone, \mtwo) \in B_{\rll}(\mone_0, \mtwo_0)$ that
	\begin{align*}
		\| \partial H(\mone, \mtwo) \|_E \le K_0 \, d((\mone, \mtwo), (\mone_0, \mtwo_0)) + \| \partial H(\mone_0, \mtwo_0) \|_E 
	\end{align*}
	where $\delta_0$ is the radius and $K_0$ is the constant of local Lipshitz continuity \eqref{eq:local_lipschitz} at $(\mone_0, \mtwo_0)$.
\end{lemma}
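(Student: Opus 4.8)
The plan is to establish the two claims separately, each reducing to an elementary property of the norm once the structural hypotheses already in force are invoked. For the first identity, I would begin by recalling that $\Phi_s(\mone, \mtwo) = (\mone, \mtwo) + s\,\partial H(\mone, \mtwo)$ and that, because $H$ is compatible (\Cref{def:compatible_H}), the point $(\mone, \mtwo) + s\,\partial H(\mone, \mtwo)$ lies in $\Dom(H) = M$ for every $s \in [0,1]$. Thus both $(\mone, \mtwo)$ and $\Phi_s(\mone, \mtwo)$ belong to $M$, so by \Cref{asmp:flow_existence_3} item (1) the metric $d$ coincides with the norm-induced extension $d_*$ on this pair, giving $d((\mone, \mtwo), \Phi_s(\mone, \mtwo)) = \| (\mone, \mtwo) - \Phi_s(\mone, \mtwo) \|_E = \| s\,\partial H(\mone, \mtwo) \|_E$. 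Absolute homogeneity of $\| \cdot \|_E$ together with $s \ge 0$ then yields $\| s\,\partial H(\mone, \mtwo) \|_E = s\,\| \partial H(\mone, \mtwo) \|_E$, which is the asserted equality.

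For the second inequality, I would fix $(\mone_0, \mtwo_0) \in M$ and let $\rll > 0$ and $K_0 > 0$ be the radius and constant of the local Lipschitz continuity \eqref{eq:local_lipschitz} of $(\mone, \mtwo) \mapsto \partial H(\mone, \mtwo)$ at $(\mone_0, \mtwo_0)$. For any $(\mone, \mtwo) \in B_{\rll}(\mone_0, \mtwo_0)$, the triangle inequality for $\| \cdot \|_E$ gives $\| \partial H(\mone, \mtwo) \|_E \le \| \partial H(\mone, \mtwo) - \partial H(\mone_0, \mtwo_0) \|_E + \| \partial H(\mone_0, \mtwo_0) \|_E$. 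Since $d_*$ is the metric induced by $\| \cdot \|_E$, the first summand equals $d_*(\partial H(\mone, \mtwo), \partial H(\mone_0, \mtwo_0))$, and since $(\mone_0, \mtwo_0)$ trivially lies in $B_{\rll}(\mone_0, \mtwo_0)$, the bound \eqref{eq:local_lipschitz} applies to the pair $(\mone, \mtwo)$, $(\mone_0, \mtwo_0)$, so $d_*(\partial H(\mone, \mtwo), \partial H(\mone_0, \mtwo_0)) \le K_0\, d((\mone, \mtwo), (\mone_0, \mtwo_0))$. Substituting this back produces the stated estimate.

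I do not anticipate a substantial obstacle; the one point that deserves care is the appeal to compatibility of $H$ in the first part, since it is precisely this that guarantees $\Phi_s(\mone, \mtwo) \in M$ and hence licenses replacing $d$ by its extension $d_*$ — absent this, the norm expression $\| (\mone, \mtwo) - \Phi_s(\mone, \mtwo) \|_E$ need not coincide with $d((\mone, \mtwo), \Phi_s(\mone, \mtwo))$. The remainder is a direct use of norm homogeneity and the triangle inequality.
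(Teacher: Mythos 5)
Your proof is correct and follows essentially the same route as the paper's: express $d$ as the norm-induced extension $d_*$ (valid since $\Phi_s(\mone,\mtwo)\in M$ by compatibility, a point you make explicit where the paper leaves it implicit from the standing setup), use homogeneity of the norm for the first identity, and apply the triangle inequality together with local Lipschitz continuity \eqref{eq:local_lipschitz} for the second. No gaps.
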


\begin{proof}[Proof of \Cref{lem:equality_psi}]
	The first equality is immediate from
	\begin{align*}
		d( (\mone, \mtwo), \Phi_s(\mone, \mtwo) ) & = \| (\mone, \mtwo) - (\mone, \mtwo) - s \, \partial H(\mone, \mtwo) ) \|_E = s \| \partial H(\mone, \mtwo) \|_E .
	\end{align*}
	For any $(\mone, \mtwo) \in B_{\rll}(\mone_0, \mtwo_0)$, we apply the triangle inequality to see that
	\begin{align*}
		\| \partial H(\mone, \mtwo) \|_E & \le \| \partial H(\mone, \mtwo) - \partial H(\mone_0, \mtwo_0) \|_E + \| \partial H(\mone_0, \mtwo_0) \|_E  \\
		& = d_*( \partial H(\mone, \mtwo), \partial H(\mone_0, \mtwo_0) ) + \| \partial H(\mone_0, \mtwo_0) \|_E \\
		& \le K_0 \, d( (\mone, \mtwo), (\mone_0, \mtwo_0) ) + \| \partial H(\mone_0, \mtwo_0) \|_E
	\end{align*}
	where the last inequality follows from local Lipschitz continuity \eqref{eq:local_lipschitz}.
\end{proof}

\noindent
Now, we provide the main proof of \Cref{lem:existence_of_flow_simplified}.

\begin{proof}[Proof of \Cref{lem:existence_of_flow_simplified}]
	First, we specify the functions $\Lambda: M \times M \times [0, 1] \to \R$ and $\Omega: M \times [0, 1] \times [0, 1] \to [0, \infty)$, for which \Cref{asmp:flow_existence_1} is shown to hold:
	\begin{align*}
		\Lambda((\mone, \mtwo), (\mthree, \mfour), t) & = \frac{ d_*( \partial H(\mone, \mtwo) , \partial H(\mthree, \mfour) ) }{ d((\mone, \mtwo), (\mthree, \mfour)) } , \\
		\Omega((\mone, \mtwo), t, s) & = \Lambda((\mone, \mtwo), \Phi_s(\mone, \mtwo), t) \| \partial H(\mone, \mtwo) \|_E . 
	\end{align*}
	Here, we verify $\Lambda$ and $\Omega$ are well-defined at each point.
	If $(\mone, \mtwo) \ne (\mthree, \mfour)$, the value of $\Lambda$ at each point is finite because $d((\mone, \mtwo), (\mthree, \mfour)) \ne 0$.
	Even if $(\mone, \mtwo) = (\mthree, \mfour)$, the value of $\Lambda$ is finite because local Lipschitz continuity \eqref{eq:local_lipschitz} implies that, for all $(\mone, \mtwo), (\mthree, \mfour) \in B_{\rll}(\mone_0, \mtwo_0)$,
	\begin{align}
		\Lambda((\mone, \mtwo), (\mthree, \mfour), t) \le \frac{ K_0 \, d((\mone, \mtwo), (\mthree, \mfour))  }{ d((\mone, \mtwo), (\mthree, \mfour)) } = K_0 \label{eq:inequality_item1_1}
	\end{align}
	at each point $(\mone_0, \mtwo_0) \in M$.
	Therefore, $\Lambda$ is well-defined at each point, so does $\Omega$.
	
	In what follows, let $(\mone_0, \mtwo_0)$ be an arbitrary point in $M$. 
	Let $\rll > 0$ be the radius that local Lipschitz continuity \eqref{eq:local_lipschitz} in \Cref{asmp:flow_existence_3} holds at $(\mone_0, \mtwo_0)$.
	Let $(\mone, \mtwo), (\mthree, \mfour)$ be arbitrary points in an open ball $B_{r_0}(\mone_0, \mtwo_0)$ for a given radius $r_0 \in (0, \infty)$. 
	Let $t,s$ be arbitrary points in an interval $[0, \epsilon_0)$ for a given length $\epsilon_0 \in (0, 1]$.
	First, we show that there exist some radius $r_0 > 0$ and length $\epsilon_0 > 0$ s.t.~\Cref{asmp:flow_existence_1} items (1)--(2) hold.
	
	\vspace{5pt}
	\noindent
	\underline{\textit{\Cref{asmp:flow_existence_1} item (1)}}:
	It follows from the triangle inequality that
	\begin{align*}
		d(\Phi_t(\mone, \mtwo), \Phi_t(\mthree, \mfour) ) & = \| (\mone, \mtwo) + t \partial H(\mone, \mtwo) - (\mthree, \mfour) - t \partial H(\mthree, \mfour) \|_E \\
		& \le \| (\mone, \mtwo) - (\mthree, \mfour) \|_E  + \| t \partial H(\mone, \mtwo) - t \partial H(\mthree, \mfour) \|_E \\
		& = d_*((\mone, \mtwo), (\mthree, \mfour)) + t d_*( \partial H(\mone, \mtwo), \partial H(\mthree, \mfour) ) .
	\end{align*}
	Since $d_*( (\mone, \mtwo), (\mthree, \mfour) ) = d( (\mone, \mtwo), (\mthree, \mfour) )$ everywhere in $M$, we have
	\begin{align*}
		d(\Phi_t(\mone, \mtwo), \Phi_t(\mthree, \mfour) ) & \le d( (\mone, \mtwo), (\mthree, \mfour) ) + t \, d_*( \partial H(\mone, \mtwo) , \partial H(\mthree, \mfour) ) \\
		& = d( (\mone, \mtwo), (\mthree, \mfour) ) \, \left( 1 + t \, \Lambda( (\mone, \mtwo), (\mthree, \mfour) , t ) \right) .
	\end{align*}
	If $(\mone, \mtwo), (\mthree, \mfour)$ are contained in $B_{\rll}(\mone_0, \mtwo_0)$, the term $ \Lambda( (\mone, \mtwo), (\mthree, \mfour) , t )$ is bounded by $K_0$ as established in \eqref{eq:inequality_item1_1}.
	Selecting any $r_0 \le \rll$ and $\epsilon_0 \in (0, 1]$ concludes \Cref{asmp:flow_existence_1} item (1).
	
	\vspace{5pt}
	\noindent
	\underline{\textit{\Cref{asmp:flow_existence_1} item (2)}}:
	Denote $( \tilde{\mone}_s, \tilde{\mtwo}_s ) := \Phi_s(\mone, \mtwo) = (\mone, \mtwo) + s \, \partial H(\mone, \mtwo)$ to see
	\begin{align}
		d(\Phi_{t+s}(\mone, \mtwo), \Phi_t( \Phi_s(\mone, \mtwo) ) ) & = \| (\mone, \mtwo) + (t + s) \, \partial H(\mone, \mtwo) - ( \tilde{\mone}_s, \tilde{\mtwo}_s ) - t \, \partial H(\tilde{\mone}_s, \tilde{\mtwo}_s) \|_E \nonumber \\
		& = \| ( \tilde{\mone}_s, \tilde{\mtwo}_s ) + t \, \partial H(\mone, \mtwo) - ( \tilde{\mone}_s, \tilde{\mtwo}_s ) - t \, \partial H(\tilde{\mone}_s, \tilde{\mtwo}_s) \|_E \nonumber \\
		& = t ~ d_*( \partial H(\mone, \mtwo) , \partial H(\tilde{\mone}_s, \tilde{\mtwo}_s) ) \nonumber \\
		& = t ~ \Lambda( (\mone, \mtwo), (\tilde{\mone}_s, \tilde{\mtwo}_s), t ) ~ d( (\mone, \mtwo), (\tilde{\mone}_s, \tilde{\mtwo}_s) ) . \nonumber
	\end{align}
	Since $d( (\mone, \mtwo), (\tilde{\mone}_s, \tilde{\mtwo}_s) ) = s \| \partial H(\mone, \mtwo) \|_E$ by \Cref{lem:equality_psi}, it follows from definition of $\Omega$ that
	\begin{align*}
		d(\Phi_{t+s}(\mone, \mtwo), \Phi_t( \Phi_s(\mone, \mtwo) ) ) & = t \, s \, \Omega( (\mone, \mtwo), s, t ) .
	\end{align*}
	We can conclude \Cref{asmp:flow_existence_1} item (2) if there exist some $r_0 > 0$ and $\epsilon_0 > 0$ s.t.~$\Omega( (\mone, \mtwo), s, t )$ is bounded for all $(\mone, \mtwo) \in B_{r_0}(\mone_0, \mtwo_0)$ and all $s, t \in [0, \epsilon_0)$.
	Since by definition we have $\Omega((\mone, \mtwo), t, s) = \Lambda((\mone, \mtwo), \Phi_s(\mone, \mtwo), t) \| \partial H(\mone, \mtwo) \|_E$, we will upper bound each component $\Lambda((\mone, \mtwo), \Phi_s(\mone, \mtwo), t)$ and $\| \partial H(\mone, \mtwo) \|_E$.
	If we have $(\mone, \mtwo) \in B_{\rll}(\mone_0, \mtwo_0)$, \Cref{lem:equality_psi} shows that $\| \partial H(\mone, \mtwo) \|_E \le K r_0 + \| \partial H(\mone_0, \mtwo_0) \|_E < \infty$.
	It thus suffices to take any $r_0 \le \rll$ to upper bound the second component.
	It is established in \eqref{eq:inequality_item1_1} that $\Lambda( (\mone, \mtwo), \Phi_s(\mone, \mtwo), t ) \le K_0$ if both $(\mone, \mtwo)$ and $\Phi_s(\mone, \mtwo)$ are contained in $B_{\rll}(\mone_0, \mtwo_0)$.
	In fact, if we have $(\mone, \mtwo) \in B_{\rll/2}(\mone_0, \mtwo_0)$, there exists $\epsilon_0 > 0$ s.t.~$\Phi_s(\mone, \mtwo) \in B_{\rll}(\mone_0, \mtwo_0)$ for all $s \in [0, \epsilon_0)$.
	To verify this, see that \Cref{lem:equality_psi} implies $\lim_{s \to 0^+} d((\mone, \mtwo), \Phi_s(\mone, \mtwo)) = 0$.
	By definition of convergence, for any given $r > 0$ there exists some $\epsilon_r > 0$ s.t.~$d((\mone, \mtwo), \Phi_s(\mone, \mtwo)) < r$ for all $s \in [0, \epsilon_r)$.
	That is, given the radius $\rll/2$, there exists some $\epsilon_* > 0$ s.t.~$\Phi_s(\mone, \mtwo)$ is contained in $B_{\rll/2}(\mone, \mtwo)$ for all $s \in [0, \epsilon_*)$.
	Clearly, the open ball $B_{\rll/2}(\mone, \mtwo)$ around $(\mone, \mtwo)$ is further contained in the open ball $B_{\rll}(\mone_0, \mtwo_0)$ around $(\mone_0, \mtwo_0)$ if we have $(\mone, \mtwo) \in B_{\rll/2}(\mone_0, \mtwo_0)$.
	This means that, if we take any $r_0 \le \rll / 2$, then $(\mone, \mtwo)$ and $\Phi_s(\mone, \mtwo)$ are both contained in $B_{\rll}(\mone_0, \mtwo_0)$ for all $s \in [0, \epsilon_*)$.
	Therefore, selecting any $r_0 \le \rll / 2$ and $\epsilon_0 \le \epsilon_*$ guarantees that $\Omega( (\mone, \mtwo), s, t )$ is bounded.

	\vspace{5pt}
	We complete the proof by showing \Cref{asmp:flow_existence_2}.
	
	\vspace{5pt}
	\noindent
	\underline{\textit{\Cref{asmp:flow_existence_2}}}:
	Since $d(\Phi_t(\mone, \mtwo), \Phi_s(\mone, \mtwo)) = \| \Phi_t(\mone, \mtwo) - \Phi_s(\mone, \mtwo) \|_E$, we have
	\begin{align*}
		\rho(\mone, \mtwo) & = \sup_{t \ne s \in [0, 1]} \frac{ \| (\mone, \mtwo) + t \, \partial H(\mone, \mtwo) - (\mone, \mtwo) - s \, \partial H(\mone, \mtwo) \|_E }{ | t - s| } = \| \partial H(\mone, \mtwo) \|_E . 
	\end{align*}
	By selecting any $r_0 \le \rll$ so that $(\mone, \mtwo) \in B_{\rll}(\mone_0, \mtwo_0)$, it follows from \Cref{lem:equality_psi} that
	\begin{align}
		\rho(\mone, \mtwo) & = \| \partial H(\mone, \mtwo) \|_E \le K_0 \, d((\mone, \mtwo), (\mone_0, \mtwo_0)) + \| \partial H(\mone_0, \mtwo_0) \|_E . \label{eq:proof_rho_ph}
	\end{align}
	Setting $C_1 = K_0$ and $C_2 = \| \partial H(\mone_0, \mtwo_0) \|_E$ concludes \Cref{asmp:flow_existence_2}.
\end{proof}

\subsection{Proof of \Cref{prop:conservation_of_energy}} \label{proof:conservation_of_energy}

If a convex function $f: E_1 \to \Re$ admits the first variation $\partial f(\mone) \in E_2$ at $\mone \in \Dom(f)$, the following inequality holds for any $\mthree \in \Dom(f)$:
\begin{align}
	\langle \mthree - \mone, \partial f(\mone) \rangle = \lim_{\epsilon \to 0^+} \frac{ f( (1 - \epsilon) \mone + \epsilon \mthree ) - f(\mone) }{ \epsilon } \le f(\mthree) - f(\mone) . \label{eq:Gateaux_subgradient}
\end{align} 
The first equality follows from definition of the first variation, where $\mthree - \mone$ is a trivial element of $C(\mone)$.
The last inequality follows from convexity $f( (1 - \epsilon) \mone + \epsilon \mthree ) \le (1 - \epsilon) f( \mone ) + \epsilon f( \mthree )$.
If $f: E_1 \to \Re$ is a concave function, we have the inequality with the less-than sign flipped
\begin{align}
	\langle \mthree - \mone, \partial f(\mone) \rangle \ge f(\mthree) - f(\mone) . \label{eq:Gateaux_subgradient_2}
\end{align} 
Note that the essentially same inequalities hold for a convex or concave function in $E_2$.
We introduce a lemma that applies \eqref{eq:Gateaux_subgradient} and \eqref{eq:Gateaux_subgradient_2} for a saddle function $H: E_1 \times E_2 \to \Re$.

\begin{lemma} \label{lem:saddle_bound}
	Suppose that a saddle function $H: E_1 \times E_2 \to \Re$ has the first variations $\partial_1 H(\mone, \mtwo)$ and $\partial_2 H(\mone, \mtwo)$ everywhere in $\Dom(H)$.
	For any $(\mone, \mtwo), (\mthree, \mfour) \in \Dom(H)$,
	\begin{align*}
		H(\mthree, \mfour) - H(\mone, \mtwo) & \ge \langle \mthree -  \mone, \partial_1 H(\mthree, \mfour) \rangle + \langle \partial_2 H(\mone, \mtwo), \mfour - \mtwo \rangle , \\
		H(\mthree, \mfour) - H(\mone, \mtwo) & \le \langle \mthree - \mone, \partial_1 H(\mone, \mfour) \rangle + \langle \partial_2 H(\mone, \mfour), \mfour - \mtwo \rangle .
	\end{align*}
\end{lemma}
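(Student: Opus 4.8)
The plan is to prove each of the two inequalities by telescoping the difference $H(\mthree,\mfour)-H(\mone,\mtwo)$ through the intermediate point $(\mone,\mfour)$, writing
\begin{align*}
	H(\mthree,\mfour)-H(\mone,\mtwo) = \bigl(H(\mthree,\mfour)-H(\mone,\mfour)\bigr) + \bigl(H(\mone,\mfour)-H(\mone,\mtwo)\bigr),
\end{align*}
and then controlling the first bracket via concavity of the slice $H(\cdot,\mfour):E_1\to\Re$ and the second bracket via convexity of the slice $H(\mone,\cdot):E_2\to\Re$. Each bracket is a one‑variable increment of a convex or concave function, so it is estimated by the first‑variation (subgradient‑type) inequalities \eqref{eq:Gateaux_subgradient} and \eqref{eq:Gateaux_subgradient_2} already recorded above, together with their counterparts obtained by interchanging the base point and the end point (which follow by relabelling $\mone\leftrightarrow\mthree$, resp.\ $\mtwo\leftrightarrow\mfour$), and the analogous statements for functions on $E_2$ noted after \eqref{eq:Gateaux_subgradient_2}.

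Concretely, for the lower bound I would apply the concave inequality \eqref{eq:Gateaux_subgradient_2} at the \emph{end} point to the slice $H(\cdot,\mfour)$, giving $H(\mthree,\mfour)-H(\mone,\mfour)\ge\langle\mthree-\mone,\partial_1 H(\mthree,\mfour)\rangle$, and the convex inequality \eqref{eq:Gateaux_subgradient} at the \emph{base} point to the slice $H(\mone,\cdot)$, giving $H(\mone,\mfour)-H(\mone,\mtwo)\ge\langle\partial_2 H(\mone,\mtwo),\mfour-\mtwo\rangle$; adding the two yields the first inequality. For the upper bound I would instead apply the concave inequality at the \emph{base} point, giving $H(\mthree,\mfour)-H(\mone,\mfour)\le\langle\mthree-\mone,\partial_1 H(\mone,\mfour)\rangle$, and the convex inequality at the \emph{end} point, giving $H(\mone,\mfour)-H(\mone,\mtwo)\le\langle\partial_2 H(\mone,\mfour),\mfour-\mtwo\rangle$; adding gives the second inequality. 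In each use, the direction ($\mthree-\mone$ or $\mfour-\mtwo$) is a genuine element of the set $C(\cdot)$ of \Cref{def:gateux_derivative}, since the relevant endpoints lie in $\Dom(H)$ and the effective domains of the slices are convex, so \eqref{eq:Gateaux_subgradient}--\eqref{eq:Gateaux_subgradient_2} are legitimately applicable.

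The only delicate point — the main obstacle, such as it is — is that the telescoping passes through $(\mone,\mfour)$, so this point must lie in $\Dom(H)$ for the evaluations $H(\mone,\mfour)$ and the first variations $\partial_1 H(\mone,\mfour)$, $\partial_2 H(\mone,\mfour)$ to make sense. This is already implicit in the statement of the upper bound, which references those first variations, and it is automatic in every setting where the lemma is invoked; in particular, in \Cref{sec:application} the domain $\Dom(H)$ is the product set of \Cref{asmp:domain_Bayes}, so $(\mone,\mfour)\in\Dom(H)$ whenever $(\mone,\mtwo),(\mthree,\mfour)\in\Dom(H)$. Beyond this bookkeeping no estimates are required: the whole argument reduces to the first‑order characterisation of convex and concave functions in terms of their first variations, added across the two coordinates.
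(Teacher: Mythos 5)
Your proposal is correct and follows essentially the same route as the paper's proof: both telescope $H(\mthree,\mfour)-H(\mone,\mtwo)$ through the intermediate point $(\mone,\mfour)$, then bound the $E_1$-increment via the concave first-variation inequality and the $E_2$-increment via the convex one, applying each once at the base point and once at the end point to get the two-sided estimate. Your explicit remark that the argument presupposes $(\mone,\mfour)\in\Dom(H)$ is a point the paper leaves implicit (though it is already required for the right-hand side of the upper bound to be well-defined), so it is a welcome clarification rather than a deviation.
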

	
\begin{proof}[Proof of \Cref{lem:saddle_bound}]
	Let $(\mone, \mtwo), (\mthree, \mfour)$ be arbitrary points in $\Dom(H)$.
	Denote $H_{\mfour}(\cdot) := H(\cdot, \mfour)$ for fixed $\mfour$ and $H_{\mone}(\cdot) := H(\mone, \cdot)$ for fixed $\mone$, where $H_{\mfour}$ is concave and $H_{\mone}$ is convex.
	By definition, the first variation of $H_{\mfour}$ at $\mone$ is $\partial_1 H(\mone, \mfour)$, and the first variation of $H_{\mone}$ at $\mtwo$ is $\partial_2 H(\mthree, \mtwo)$.
	We apply \eqref{eq:Gateaux_subgradient_2} for both (i) $H_{\mfour}(\mthree) - H_{\mfour}(\mone)$ and (ii) $H_{\mfour}(\mone) - H_{\mfour}(\mthree)$ to see that
	\begin{align*}
		\text{(i)}~& \langle \mthree -  \mone, \partial_1 H(\mone, \mfour) \rangle \ge H_{\mfour}(\mthree) - H_{\mfour}(\mone) , \\
		\text{(ii)}~& \langle \mone - \mthree, \partial_1 H(\mthree, \mfour) \rangle \ge H_{\mfour}(\mone) - H_{\mfour}(\mthree) .
	\end{align*}
	We multiply both sides of the inequality (ii) by $-1$ to have that
	\begin{align}
		\langle \mthree - \mone, \partial_1 H(\mthree, \mfour) \rangle \le H_\mfour(\mthree) - H_\mfour(\mone) \le \langle \mthree -  \mone, \partial_1 H(\mone, \mfour) \rangle . \label{eq:H1_subgradient}
	\end{align}
	Similarly, we apply \eqref{eq:Gateaux_subgradient} for both (iii) $H_{\mone}(\mfour) - H_{\mone}(\mtwo)$ and (iv) $H_{\mone}(\mtwo) - H_{\mone}(\mfour)$ to see that
	\begin{align*}
		\text{(iii)}~& \langle \partial_2 H(\mone, \mtwo), \mfour -  \mtwo \rangle \le H_{\mone}(\mfour) - H_{\mone}(\mtwo) , \\
		\text{(iv)}~& \langle \partial_2 H(\mone, \mfour), \mtwo - \mfour \rangle \le H_{\mone}(\mtwo) - H_{\mone}(\mfour) .
	\end{align*}
	We multiply both sides of the inequality (iv) by $-1$ to have that
	\begin{align}
		\langle \partial_2 H(\mone, \mtwo), \mfour -  \mtwo \rangle \le H_{\mone}(\mfour) - H_{\mone}(\mtwo) \le \langle \partial_2 H(\mone, \mfour), \mfour - \mtwo \rangle  . \label{eq:H2_subgradient}
	\end{align}
	Adding all sides of \eqref{eq:H1_subgradient} and \eqref{eq:H2_subgradient} completes the proof.
\end{proof}

\noindent
Now we present the main proof of \Cref{prop:conservation_of_energy}.

\begin{proof}[Proof of \Cref{prop:conservation_of_energy}]
	It suffices to show that (i) $t \mapsto H( \monet, \mtwot )$ is continuous in $[0, T)$ and (ii) the right derivative $(\d^+ / \d t) H( \monet, \mtwot )$ equals to zero everywhere in $[0, T)$.
	First of all, item (i) immediately holds by combining local Lipschitz continuity \eqref{eq:H_local_lipschitz}
	\begin{align*}
		| H( \mone_{s}, \mtwo_{s} ) - H( \mone_{t}, \mtwo_{t} ) | \le J_0 \, d( (\mone_{s}, \mtwo_{s}), ( \mone_{t}, \mtwo_{t}) ) 
	\end{align*}
	with continuity of the flow, that is, $\lim_{s \to t} d( (\mone_{s}, \mtwo_{s}), ( \mone_{t}, \mtwo_{t}) ) = 0$ for any $t \in (0, T)$ with the edge case $\lim_{s \to 0^+} d( (\mone_{s}, \mtwo_{s}), ( \mone_{0}, \mtwo_{0}) ) = 0$.
	For item (ii), by definition of the right derivative, 
	\begin{align*}
		\frac{\d^+}{\d t} H( \monet, \mtwot ) & = \lim_{s \to 0^+} \frac{ H( \mone_{t+s}, \mtwo_{t+s} ) - H( \monet, \mtwot ) }{s} \\
		& = \underbrace{ \lim_{s \to 0^+} \frac{ H( \mone_{t+s}, \mtwo_{t+s} ) - H( \Phi_s(\mone_t, \mtwo_t) ) }{s} }_{ =: (*_1) } + \underbrace{ \lim_{s \to 0^+} \frac{ H( \Phi_s(\mone_t, \mtwo_t) ) - H( \monet, \mtwot ) }{s} }_{ =: (*_2) } .
	\end{align*}
	In the remainder, let $t$ be arbitrary fixed time in $[0, T)$, and we show that both the term $(*_1)$ and $(*_2)$ equal to zero.
	This in turn concludes that item (ii) holds as intended.
	
	\vspace{5pt}
	\noindent
	\underline{\textit{Term $(*_1)$}}:
	Let $\eta_0 > 0$ be the radius that local Lipschitz continuity \eqref{eq:H_local_lipschitz} in \Cref{asmp:conservation_of_energy_1} holds at the point $(\mone_t, \mtwo_t)$.
	We verify that $(\mone_{t+s}, \mtwo_{t+s})$ and $\Phi_s(\mone_t, \mtwo_t)$ are contained in $B_{\rll}(\mone_t, \mtwo_t)$ for all $s \ge 0$ sufficiently small.
	It is trivial that $(\mone_{t+s}, \mtwo_{t+s}) \in B_{\rll}(\mone_t, \mtwo_t)$ for all $s \ge 0$ sufficiently small by continuity of the flow.
	It also holds that $\Phi_s(\mone_t, \mtwo_t) \in B_{\rll}(\mone_t, \mtwo_t)$ for all $s \ge 0$ sufficiently small, because it follows from the triangle inequality that
	\begin{align*}
		\lim_{s \to 0^+} d( ( \mone_t, \mtwo_t ),  \Phi_s(\mone_t, \mtwo_t) ) & \le \underbrace{ \lim_{s \to 0^+} d( ( \mone_t, \mtwo_t ), (\mone_{t+s}, \mtwo_{t+s}) ) }_{ = 0 } + \underbrace{ \lim_{s \to 0^+} d( ( \mone_{t+s}, \mtwo_{t+s} ),  \Phi_s(\mone_t, \mtwo_t) ) }_{ = 0 }
	\end{align*}
	where the first term is $0$ by continuity of the flow and the condition \eqref{eq:H_mutation_equation} is strong enough to imply that the second term is $0$.
	Thus, by local Lipschitz continuity \eqref{eq:H_local_lipschitz}, we have
	\begin{align*}
		\lim_{s \to 0^+} \frac{ | H( \mone_{t+s}, \mtwo_{t+s} ) - H( \Phi_s(\mone_t, \mtwo_t) ) | }{s} & \le \lim_{s \to 0^+} \frac{ J_0 \, d( ( \mone_{t+s}, \mtwo_{t+s} ),  \Phi_s(\mone_t, \mtwo_t) ) }{s} = 0 
	\end{align*}
	where the last equality holds by the condition \eqref{eq:H_mutation_equation}.
	Thus the term $(*_1)$ equals to $0$.
	
	\vspace{5pt}
	\noindent
	\underline{\textit{Term $(*_2)$}}: 
	For better presentation, denote $(\mone_t, \mtwo_t)$ simply by $(\mone, \mtwo)$ since $t$ is a fixed time.
	Denote $( \mone_s^*, \mtwo_s^* ) := \Phi_s(\mone, \mtwo) = ( \mone, \mtwo) + s ( - \partial_2 H(\mone, \mtwo), \partial_1 H(\mone, \mtwo) )$ to rewrite the term $(*_2)$ as
	\begin{align*}
		(*_2) = \lim_{s \to 0^+} \frac{ H( \Phi_s(\mone, \mtwo) ) - H( \mone, \mtwo ) }{s} = \lim_{s \to 0^+} \frac{ H( \mone_s^*, \mtwo_s^* ) - H( \mone, \mtwo ) }{s} .
	\end{align*}
	We shall derive upper and lower bounds of the term inside the limit, showing that the bounds both converge to $0$ in the limit.
	We begin with the lower bound.
	It follows from \Cref{lem:saddle_bound} that
	\begin{align*}
		\frac{ H( \mone_s^*, \mtwo_s^* ) - H( \mone, \mtwo ) }{s} & \ge \frac{ \langle \mone_s^* - \mone , \partial_1 H(\mone_s^*, \mtwo_s^*) \rangle + \langle \partial_2 H(\mone, \mtwo), \mtwo_s^* - \mtwo \rangle }{ s } .
	\end{align*}
	We then plug $\mone_s^* - \mone = s ( - \partial_2 H(\mone_t, \mtwo_t) )$ and $\mtwo_s^* - \mtwo  = s \, \partial_1 H(\mone_t, \mtwo_t)$ in, to see that
	\begin{align*}
		\frac{ H( \mone_s^*, \mtwo_s^* ) - H( \mone, \mtwo ) }{s} & \ge - \langle \partial_2 H(\mone, \mtwo) , \partial_1 H(\mone_s^*, \mtwo_s^*) \rangle + \langle \partial_2 H(\mone, \mtwo), \partial_1 H(\mone, \mtwo) \rangle =: (*_a) .
	\end{align*}
	Here, it is clear that we have $(\mone_s^*, \mtwo_s^*) \to (\mone, \mtwo)$ in $(M, d)$ in the limit $s \to 0^+$.
	It then follows from \Cref{asmp:conservation_of_energy_1} item (2) that we have $\langle \mthree, \partial_1 H(\mone_s^*, \mtwo_s^*) \rangle \to \langle \mthree, \partial_1 H(\mone, \mtwo) \rangle$ for any $\mthree \in E_1$.
	By applying this convergence with $\mthree = \partial_2 H(\mone, \mtwo)$, we have
	\begin{align*}
		\lim_{s \to 0^+} (*_a) = - \langle \partial_2 H(\mone, \mtwo) , \partial_1 H(\mone, \mtwo) \rangle + \langle \partial_2 H(\mone, \mtwo), \partial_1 H(\mone, \mtwo) \rangle = 0 .
	\end{align*}
	We move on to the upper bound.
	It follows from \Cref{lem:saddle_bound} that
	\begin{align*}
		\frac{ H( \mone_s^*, \mtwo_s^* ) - H( \mone, \mtwo ) }{s} & \le \frac{ \langle \mone_s^* - \mone , \partial_1 H(\mone, \mtwo_s^*) \rangle + \langle \partial_2 H(\mone, \mtwo_s^*), \mtwo_s^* - \mtwo \rangle }{ s } \\
		& = - \langle \partial_2 H(\mone, \mtwo) , \partial_1 H(\mone, \mtwo_s^*) \rangle + \langle \partial_2 H(\mone, \mtwo_s^*), \partial_1 H(\mone, \mtwo) \rangle =: (*_b) .
	\end{align*}
	Here, it is clear that we have $(\mone, \mtwo_s^*) \to (\mone, \mtwo)$ in $(M, d)$ in the limit $s \to 0^+$.
	We apply the essentially same argument as above for the terms $\partial_1 H(\mone, \mtwo_s^*)$ and $\partial_2 H(\mone, \mtwo_s^*)$ to see that
	\begin{align*}
		\lim_{s \to 0^+} (*_b) = - \langle \partial_2 H(\mone, \mtwo) , \partial_1 H(\mone, \mtwo) \rangle + \langle \partial_2 H(\mone, \mtwo), \partial_1 H(\mone, \mtwo) \rangle = 0 .
	\end{align*}
	Therefore, we have $\lim_{s \to 0^+} (*_a) \le (*_2) \le \lim_{s \to 0^+} (*_b)$, which concludes that the term $(*_2)$ equals to $0$ since the upper and lower bounds both converge to $0$.
\end{proof}

\subsection{Proof of \Cref{thm:existence_conservation}} \label{proof:existence_conservation}

\begin{proof}[Proof of \Cref{thm:existence_conservation}]
	It follows from \Cref{lem:existence_of_flow_simplified} that \Cref{asmp:flow_existence_3} implies Assumptions \ref{asmp:flow_existence_1}--\ref{asmp:flow_existence_2}.
	Then, by \Cref{thm:existence_of_global_flow}, there exists a global flow $t \mapsto (\monet, \mtwot)$ of the system from any initial state in $M$.
	We complete the proof by showing that the value of the Hamiltonian function $H( \monet, \mtwot )$ along the flow $t \mapsto (\monet, \mtwot)$ is constant everywhere in $[0, \infty)$.
	
	Our aim is to show that \Cref{prop:conservation_of_energy} holds.
	To this end, we show that \Cref{asmp:conservation_of_energy_1} item (1)--(2) are each implied by \Cref{asmp:flow_existence_3} item (2) under \Cref{asmp:conservation_of_energy_3}.
	Recall that, under \Cref{asmp:conservation_of_energy_3}, the metric $d_*$ is defined as $d_*((\mone, \mtwo), (\mthree, \mfour)) = \| \mone - \mthree \|_{E_1} + \| \mtwo - \mfour \|_{E_2}$ and we have the inequality $| \langle \mone, \mtwo \rangle | \le \| \mone \|_{E_1} \| \mtwo \|_{E_2}$.
	In what follows, let $(\mone_0, \mtwo_0)$ be an arbitrary point in $M$.
	Let $\rll > 0$ be the radius that local Lipschitz continuity \eqref{eq:local_lipschitz} in \Cref{asmp:flow_existence_3} holds at $(\mone_0, \mtwo_0)$.
	Let $(\mone, \mtwo), (\mthree, \mfour)$ be arbitrary points in the open ball $B_{\rll/2}(\mone_0, \mtwo_0)$.
	For better presentation, we show \Cref{asmp:conservation_of_energy_1} item (2) first and item (1) next.
	
	\vspace{5pt}
	\noindent
	\underline{\textit{\Cref{asmp:conservation_of_energy_1} item (2)}}:
	First, we show part (i) in \Cref{asmp:conservation_of_energy_1} item (2).
	By \Cref{asmp:conservation_of_energy_3}.
	\begin{align*}
		\langle \mthree_\star, \partial_1 H(\mthree, \mfour) - \partial_1 H(\mone, \mtwo) \rangle \le \| \mthree_\star \|_{E_1} \| \partial_1 H(\mthree, \mfour) - \partial_1 H(\mone, \mtwo) \|_{E_2}
	\end{align*}
	holds for any $\mthree_\star \in E_1$.
	It is trivial that $\| \partial_1 H(\mthree, \mfour) - \partial_1 H(\mone, \mtwo) \|_{E_2} \le d_*( \partial H(\mthree, \mfour), \partial H(\mone, \mtwo) )$ from definition of $d_*$.
	We apply this bound to see that
	\begin{align*}
		\langle \mthree_\star, \partial_1 H(\mthree, \mfour) - \partial_1 H(\mone, \mtwo) \rangle \le \| \mthree_\star \|_{E_1} d_*( \partial H(\mthree, \mfour), \partial H(\mone, \mtwo) ) .
	\end{align*}
	Then, local Lipschitz continuity in \Cref{asmp:flow_existence_3} item (2) implies that the right-hand side converges to $0$ whenever $(\mthree, \mfour) \to (\mone, \mtwo)$ in $(M, d)$.
	This concludes part (i) of \Cref{asmp:conservation_of_energy_1} item (2).
	The essentially same argument holds for part (ii) of \Cref{asmp:conservation_of_energy_1} item (2).
	
	\vspace{5pt}
	\noindent
	\underline{\textit{\Cref{asmp:conservation_of_energy_1} item (1)}}:
	By \Cref{lem:saddle_bound}, we have $(*_a) \le H(\mthree, \mfour) - H(\mone, \mtwo) \le (*_b)$ where 
	\begin{align*}
		(*_a) & =  \langle \mthree -  \mone, \partial_1 H(\mthree, \mfour) \rangle + \langle \partial_2 H(\mone, \mtwo), \mfour - \mtwo \rangle , \\
		(*_b) &= \langle \mthree - \mone, \partial_1 H(\mone, \mfour) \rangle + \langle \partial_2 H(\mone, \mfour), \mfour - \mtwo \rangle .
	\end{align*}
	By the inequality $| \langle \mone, \mtwo \rangle | \le \| \mone \|_{E_1} \| \mtwo \|_{E_2}$ in \Cref{asmp:conservation_of_energy_3}, we further have
	\begin{align*}
		| (*_a) | & =  \| \mthree -  \mone \|_{E_1} \| \partial_1 H(\mthree, \mfour) \|_{E_2} + \| \partial_2 H(\mone, \mtwo) \|_{E_1} \| \mfour - \mtwo \|_{E_2} , \\
		| (*_b) | &= \| \mthree - \mone \|_{E_1} \| \partial_1 H(\mone, \mfour) \|_{E_2} + \| \partial_2 H(\mone, \mfour) \|_{E_1} \| \mfour - \mtwo \|_{E_2} .
	\end{align*}
	For any $(\mthree_\star, \mfour_\star) \in M$, the following inequality is trivial by definition of $d_*$:
	\begin{align*}
		\| \partial_1 H(\mthree_\star, \mfour_\star) \|_{E_2} \le \| \partial H(\mthree_\star, \mfour_\star) \|_{E} \quad \text{and} \quad \| \partial_2 H(\mthree_\star, \mfour_\star) \|_{E_1} \le \| \partial H(\mthree_\star, \mfour_\star) \|_E .
	\end{align*} 
	If $(\mthree_\star, \mfour_\star) \in B_{\rll}(\mone_0, \mtwo_0)$, it follows from local Lipschitz continuity \eqref{eq:local_lipschitz} around $(\mone_0, \mtwo_0)$ that 
	\begin{align*}
		\| \partial H(\mthree_\star, \mfour_\star) \|_E & \le \| \partial H(\mthree_\star, \mfour_\star) - \partial H(\mone_0, \mtwo_0) \|_E + \| \partial H(\mone_0, \mtwo_0) \|_E \\
		& = d_*(\partial H(\mthree_\star, \mfour_\star), \partial H(\mone_0, \mtwo_0)) + \| \partial H(\mone_0, \mtwo_0) \|_E \\
		& \le K_0 \, d((\mthree_\star, \mfour_\star), (\mone_0, \mtwo_0)) + \| \partial H(\mone_0, \mtwo_0) \|_E \\
		& \le K_0 \, \rll + \| \partial H(\mone_0, \mtwo_0) \|_E = : U_0
	\end{align*}
	where the first last inequality follows from the triangle inequality.
	Note that $U_0$ is a constant dependent only on $(\mone_0, \mtwo_0)$.
	Hence, for any $(\mthree_\star, \mfour_\star) \in B_{\rll}(\mone_0, \mtwo_0)$, we have
	\begin{align}
		\| \partial_1 H(\mthree_\star, \mfour_\star) \|_{E_2} \le U_0 \quad \text{and} \quad \| \partial_2 H(\mthree_\star, \mfour_\star) \|_{E_1} \le U_0 . \label{eq:norm_ub_combined}
	\end{align} 
	We aim to apply this bound \eqref{eq:norm_ub_combined} for all the terms $\| \partial_1 H(\mthree, \mfour) \|_{E_2}$, $\| \partial_2 H(\mone, \mtwo) \|_{E_1}$, $\| \partial_1 H(\mone, \mfour) \|_{E_2}$, and $\| \partial_2 H(\mone, \mfour) \|_{E_1}$ used in the bound $| (*_a) |$ and $| (*_b) |$.
	To do so, we need to verify that each point $(\mone, \mtwo), (\mthree, \mfour), (\mone, \mfour)$---at which the above terms are evaluated---is contained in $B_{\rll}(\mone_0, \mtwo_0)$.
	It is trivial that $(\mone, \mtwo), (\mthree, \mfour) \in B_{\rll}(\mone_0, \mtwo_0)$ since we let $(\mone, \mtwo), (\mthree, \mfour) \in B_{\rll/2}(\mone_0, \mtwo_0)$ at first.
	It can be further verified that $(\mone, \mfour) \in B_{\rll}(\mone_0, \mtwo_0)$ because
	\begin{align*}
		d((\mone, \mfour), (\mone_0, \mtwo_0)) = \| \mone - \mone_0 \|_{E_1} + \| \mfour - \mtwo_0 \|_{E_2} \le \underbrace{ d((\mone, \mtwo), (\mone_0, \mtwo_0)) }_{ < \rll / 2 } + \underbrace{ d((\mthree, \mtwo), (\mone_0, \mtwo_0)) }_{ < \rll /2 } .
	\end{align*} 
	Therefore, by applying \eqref{eq:norm_ub_combined} for all the terms, we upper bound $| (*_a) |$ and $| (*_b) |$ as follows:
	\begin{align*}
		| (*_a) | & = U_0 \, \| \mthree -  \mone \|_{E_1} + U_0 \, \| \mfour - \mtwo \|_{E_2} = U_0 \, d((\mone, \mtwo), (\mthree, \mfour)) , \\
		| (*_b) | & = U_0 \, \| \mthree - \mone \|_{E_1} + U_0 \, \| \mfour - \mtwo \|_{E_2} = U_0 \, d((\mone, \mtwo), (\mthree, \mfour)) .
	\end{align*}
	By taking the absolute value of all the side of $(*_a) \le H(\mthree, \mfour) - H(\mone, \mtwo) \le (*_b)$, we have
	\begin{align*}
		| H(\mthree, \mfour) - H(\mone, \mtwo) | \le \max( | (*_a) |, | (*_b) | ) \le U_0 ~ d((\mone, \mtwo), (\mthree, \mfour)) .
	\end{align*}
	This bound holds for all $(\mone, \mtwo), (\mthree, \mfour) \in B_{\rll/2}(\mone_0, \mtwo_0)$, where $U_0$ depends only on $(\mone_0, \mtwo_0)$.
	Setting $J_0 = U_0$ and $\eta_0 = \rll / 2$ concludes \Cref{asmp:conservation_of_energy_1} item (1).
\end{proof}


\section{Proofs for Results in Section 4} \label{sec:appendix_b}

\subsection{Proof of \Cref{prop:M_completeness}} \label{sec:appendix_b1}

\begin{proof}[Proof of \Cref{prop:M_completeness}]
	Let $B_1$ be a set of all continuous functions $f: \T \to \R$ that satisfies $\sup_{\t \in \Theta} | f(\t) | / w(\t) < \infty$. 
	Let $B_2$ be a set of all signed measures $P \in \Ms$ that satisfies $\int_{\T} w(\t) \d | P |(\t) < \infty$.
	Given the metrics $\varrho$ and $\gamma$ in \eqref{eq:wu_tv_metrics}, the metric spaces $(B_1, \varrho)$ and $(B_2, \gamma)$ are complete if $w$ is continuous and strictly positive \citep[p.4]{Kolokoltsov2019}.
	The product metric space $(B, d_B)$ s.t.~$B := B_1 \times B_2$ and $d_B((f, P), (g, Q)) := \varrho(f, g) + \gamma(\P, \Q)$ is then complete.
	Here $(M, d)$ is a metric subspace of $(B, d_B)$ because $M \subset B$ and $d$ is the restriction of $d_B$ to $M$ by definition.
	By a standard fact on complete metric spaces \citep[p.74]{Aliprantis2006}, the metric subspace $(M, d)$ is complete if $M$ is closed in $(B, d_B)$.
	
	The set $M$ is closed in $(B, d_B)$ if any convergent sequence of $(B, d_B)$ taken in $M$ has its limit within $M$.
	Let $\{ (f_n, P_n) \}_{n = 1}^{\infty} \subset M$ be an arbitrary convergent sequence of $(B, d_B)$ taken in $M$, whose limit is denoted $(f, P) \in B$.
	Our aim is to show that $(f, P) \in M$.
	Since the limit $(f, P)$ is at least an element of $B \subset E_1 \times E_2$, it suffices to verify that the limit $(f, P)$ satisfies the condition: (i) $f$ satisfies $f(\t) \ge \log w(\t)$ and (ii) $P$ is a probability measure.
	
	\vspace{5pt}
	\noindent
	\underline{\textit{Condition (i)}}: 
	First, we have $f_n(\t) \ge \log w(\t)$ at all $n$ because $\{ (f_n, P_n) \}_{n = 1}^{\infty} \subset M$.
	The uniform convergence in $\varrho$ implies the pointwise convergence $f(\t) = \lim_{n \to \infty} f_n(\t)$.
	At each $\t \in \Theta$, we take the limit of $n$ in the both side of  $f_n(\t) \ge \log w(\t)$ to see that
	\begin{align*}
		f(\t) = \lim_{n \to \infty} f_n(\t) \ge \lim_{n \to \infty} \log w(\t) = \log w(\t) .
	\end{align*}
	The function $f$ of the limit $(f, P)$ thus satisfies the condition (i).
	
	\vspace{5pt}
	\noindent
	\underline{\textit{Condition (ii)}}:
	Note that convergence of $P_n$ in the metric $\gamma$ implies that
	\begin{align*}
		\int_O \d P(\t) = \lim_{n \to 0} \int_{O} \d P_n(\t) ~~ \text{for any measurable set} ~~ O \subset \T
	\end{align*}
	because $\int_{O} \d ( P - P_n )(\t) \le \int_{\T} w(\t) \d | P - P_n |(\t)$ due to $w(\t) \ge 1$.
	The measure $P$ is non-negative because $P_n$ is non-negative for all $n$ and we have $\int_O \d P(\t) = \lim_{n \to 0} \int_{O} \d P_n(\t) \ge 0$.
	Similarly, the measure $P$ satisfies $\int_\T \d P(\t) = 1$ because $\int_\T \d P_n(\t) = 1$ for all $n$ and we have $\int_\T \d P(\t) = \lim_{n \to 0} \int_{\T} \d P_n(\t) = 1$.
	The measure $P$ thus satisfies the condition (ii).
\end{proof}

\subsection{Proof of \Cref{thm:Hamiltonian_Bayes_saddle}} \label{proof:Hamiltonian_Bayes_saddle}

For $M$ in \Cref{asmp:domain_Bayes}, define $G: E_1 \times E_2 \to \R$ by
\begin{align}
	G(f, P) := - \log\left( \int_\Theta \exp\left( - f(\t) \right) \d \P(\t) \right) \label{eq:Function_G}
\end{align}
for all $(f, P) \in M$ and by $G(f, P) := \infty$ for all $(f, P) \not\in M$.
It is easy to verify that $H$ and $G$ satisfies an equality relation $H(f, P) = G(f, P) - \langle f, P \rangle$ for each $(f, P) \in M$ because
\begin{align}
	H(f, \P) & = - \log\left( \int_\Theta \exp( - f(\t) ) \times \exp\left( \int_{\T} f(\t) \d \P(\t) \right) \d \P(\t) \right) \nonumber \\
	& = - \log\left( \int_\Theta \exp( - f(\t) ) \d \P(\t) \times \exp\left( \int_{\T} f(\t) \d \P(\t) \right) \right) \nonumber \\
	& = - \log\left( \int_\Theta \exp\left( - f(\t) \right) \d \P(\t) \right) - \int_{\T} f(\t) \d \P(\t) = G(f, P) - \langle f, P \rangle . \label{eq:H_Hs_equality}
\end{align}
We introduce an intermediate lemma used in the main proof.

\begin{lemma} \label{lem:Function_G_saddle}
	The function $G: E_1 \times E_2 \to \R$ in \eqref{eq:Function_G} is saddle.
\end{lemma}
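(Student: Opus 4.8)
The plan is to verify directly from \eqref{eq:Function_G} the two defining properties of a saddle function on $\Dom(G) = M$: for every fixed $P$ the map $f \mapsto G(f,P)$ is concave, and for every fixed $f$ the map $P \mapsto G(f,P)$ is convex. Fix an arbitrary $(f_\star, P_\star) \in M$. For the concavity in $f$, the relevant domain $\{ f \in E_1 \mid (f, P_\star) \in M \}$ is the set of continuous functions $f \in E_1$ with $f \ge \log w$ pointwise; this is convex, being the intersection of the linear subspace of continuous functions in $E_1$ with the convex set $\{ f : f \ge \log w \}$, and nonempty because $\log w$ itself belongs to it --- since $w \ge 1$ we have $0 \le \log w(\t) \le w(\t)$, hence $\sup_{\t} |\log w(\t)| / w(\t) \le 1$ and $\log w \in E_1$. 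Given $f_0, f_1$ in this domain and $\lambda \in (0,1)$, I would write the integrand as $(\exp(-f_0))^{1-\lambda}(\exp(-f_1))^{\lambda}$ and apply H\"older's inequality with conjugate exponents $1/(1-\lambda)$ and $1/\lambda$, obtaining
\begin{align*}
\int_\T \exp\left( -(1-\lambda) f_0(\t) - \lambda f_1(\t) \right) \, \d P_\star(\t) \;\le\; \left( \int_\T \exp(-f_0) \, \d P_\star \right)^{1-\lambda} \left( \int_\T \exp(-f_1) \, \d P_\star \right)^{\lambda}.
\end{align*}
All three integrals are finite and strictly positive because $0 < \exp(-f) \le 1$ on $\T$ and $P_\star$ is a probability measure, so applying the (decreasing) map $-\log$ to both sides gives $G((1-\lambda) f_0 + \lambda f_1, P_\star) \ge (1-\lambda) G(f_0, P_\star) + \lambda G(f_1, P_\star)$; the cases $\lambda \in \{0,1\}$ are trivial. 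Equivalently, one may invoke the standard fact that $g \mapsto \log \int_\T \exp(g) \, \d P_\star$ is convex and precompose with the linear map $f \mapsto -f$.

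For the convexity in $P$, the relevant domain $\{ P \in E_2 \mid (f_\star, P) \in M \}$ is exactly the set of probability measures lying in $E_2$, which is nonempty (it contains $P_\star$) and convex: a convex combination of probability measures is again a probability measure, and $\int_\T w \, \d((1-\lambda)P_0 + \lambda P_1) \le (1-\lambda)\int_\T w \, \d P_0 + \lambda \int_\T w \, \d P_1 < \infty$. The key point is that $P \mapsto \int_\T \exp(-f_\star) \, \d P$ is affine; writing $a_i := \int_\T \exp(-f_\star) \, \d P_i \in (0,1]$, we get $\int_\T \exp(-f_\star) \, \d((1-\lambda)P_0 + \lambda P_1) = (1-\lambda)a_0 + \lambda a_1$, and concavity of $\log$ yields $-\log((1-\lambda)a_0 + \lambda a_1) \le -(1-\lambda)\log a_0 - \lambda \log a_1$, i.e.\ $G(f_\star, (1-\lambda)P_0 + \lambda P_1) \le (1-\lambda) G(f_\star, P_0) + \lambda G(f_\star, P_1)$. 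Together with the previous paragraph this shows $G$ is saddle.

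I do not anticipate a genuine obstacle here: the only non-trivial ingredient is the log-convexity of the ``partition function'' $\int_\T \exp(-f) \, \d P$ --- in the first variable this is H\"older's inequality, and in the second variable it is simply affineness composed with concavity of $\log$. The remaining effort is bookkeeping of the domain constraints from \Cref{asmp:domain_Bayes}, namely checking that the two one-dimensional slices of $M$ through $(f_\star, P_\star)$ are nonempty and convex and that the integrals involved are finite and strictly positive, so that $-\log$ is well defined and order-reversing.
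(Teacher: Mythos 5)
Your proof is correct and follows essentially the same route as the paper's: concavity in $f$ via Hölder's inequality applied to $\exp(-f_0)^{1-\lambda}\exp(-f_1)^{\lambda}$ followed by $-\log$, and convexity in $P$ via affineness of the integral composed with convexity of $-\log$, together with the check that the two one-variable slices of $M$ are convex. The additional remarks (that $\log w \in E_1$, and the log-sum-exp reformulation) are harmless elaborations that do not change the argument.
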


\begin{proof}[Proof of \Cref{lem:Function_G_saddle}]
	Let $(f_0, P_0) \in M$ be arbitrary.
	Define $M_1(P_0) := \{ f \in E_1 \mid (f, P_0) \in M \}$ and $M_2(f_0) := \{ P \in E_2 \mid (f_0, P) \in M \}$.
	It is clear that $M_1(P_0)$ and $M_2(f_0)$ are convex from definition of $M$.
	We show that (i) $f \mapsto G(f, P_0)$ is concave over $M_1(P_0)$ and (ii) $\P \mapsto G(f_0, \P)$ is convex over $M_2(f_0)$.
	Then $G$ is saddle since $(f_0, P_0)$ is arbitrary.
	
	\vspace{5pt}
	\noindent
	\underline{\textit{Item (i)}}:
	We show that $(*_1) := - G( \alpha f + (1 - \alpha) g, \P_0) \le \alpha ( - G( f, \P_0) ) + (1 - \alpha) ( - G( g, \P_0) )$ holds for any $f, g \in M_1(P_0)$ and any $\alpha \in (0, 1)$.
	Since the edge case $\alpha = 0$ or $\alpha = 1$ is trivial, it suffices to consider $\alpha \in (0, 1)$ without loss of generality.
	First of all, we have
	\begin{align*}
		(*_1) & = \log\left( \int_\T \exp( - \alpha f(\t) - (1 - \alpha ) g(\t) ) \d \P_0(\t) \right) \\
		& = \log\left( \int_\T \exp( - f(\t) )^\alpha \exp( - g(\t) )^{1 - \alpha } \d \P_0(\t) \right) .
	\end{align*}
	It then follows from H\"{o}lder's inequality with $p = 1 / \alpha$ and $q = 1 / (1 - \alpha)$ that
	\begin{align*}
		(*_1) & \le \log\bigg( \left( \int_\T \exp( - f(\t) ) \d \P_0(\t) \right)^{\alpha} \left( \int_\T \exp( - g(\t) ) \d \P_0(\t) \right)^{1 - \alpha} \bigg) \\
		& = \alpha \times \underbrace{ \log\left( \int_\T \exp( - f(\t) ) \d \P_0(\t) \right) }_{ = - G( f, \P_0) } + (1 - \alpha) \times \underbrace{ \log\left( \int_\T \exp( - g(\t) \d \P_0(\t) \right) }_{ = - G( g, \P_0) } 
	\end{align*}
	where we note that $\exp(- f(\t))$ and $\exp(- g(\t))$ are positive everywhere in $\T$.
	
	\vspace{5pt}
	\noindent
	\underline{\textit{Item (ii)}}:
	We show that $(*_2) := G( f_0, \alpha \P + (1 - \alpha) \Q ) \le \alpha G( f_0, \P ) + (1 - \alpha) G( f_0, \Q)$ holds for any $P, Q \in M_2(f_0)$ and any $\alpha \in (0, 1)$.
	Again, since the edge case $\alpha = 0$ or $\alpha = 1$ is trivial, it suffices to consider $\alpha \in (0, 1)$ without loss of generality.
	First of all, we have
	\begin{align*}
		(*_2) & = - \log\left( \alpha \int_\T \exp( - f_0(\t) ) \d \P(\t) + (1 - \alpha)  \int_\T \exp( - f_0(\t) ) \d \Q(\t) \right) .
	\end{align*}
	Because the negative logarithm $- \log(\cdot)$ is convex in $(0, \infty)$, we complete the proof.
\end{proof}

Now we move on to the main proof.

\begin{proof}[Proof of \Cref{thm:Hamiltonian_Bayes_saddle}]
	It follows from \eqref{eq:H_Hs_equality} that $H(f, P) = G(f, P) - \langle f, P \rangle$.
	By \Cref{lem:Function_G_saddle}, the function $G$ is saddle over $M$.
	The bilinear function $(f, P) \mapsto - \langle f, P \rangle$ is immediately saddle over $M$ because linear functions are convex and concave at the same time.
	Since the sum of two saddle functions remains saddle, the function $H$ is saddle.
\end{proof}

\subsection{Proof of \Cref{thm:Hamiltonian_Bayes}} \label{proof:Hamiltonian_Bayes}

\begin{proof}[Proof of \Cref{thm:Hamiltonian_Bayes}]
	It follows from \Cref{thm:Hamiltonian_Bayes_variation} that $( - \partial_2 H(f, P), \partial_1 H(f, P) ) = (f_* + f, P_* - P)$ at each $(f, \P) \in M$.
	The function $H$ is compatible if, at each $(f, \P) \in M$, we have $(f, P) + s ( f_* + f, P_* - P ) \in M$ for all $s \in [0, 1]$ (c.f.~\Cref{def:compatible_H}).
	We verify that $f + s ( f_* + f )$ and $\P + s ( \P_* - P )$ satisfy the condition of $M$ for all $s \in [0, 1]$.
	First, since $f(\t) \ge \log w(\t) \ge 0$ by \Cref{asmp:domain_Bayes} and $f_*(\t) > 0$ by definition, we have
	\begin{align*}
		f(\t) + s ( f_*(\t) + f(\t) ) \ge f(\t) \ge \log w(\t) ,
	\end{align*}
	where the last inequality holds by the condition of $f$ by \Cref{asmp:domain_Bayes} again.
	Second, since $P$ and $\P_*$ are both probability measures, we have
	\begin{align*}
		\int_{\T} \d ( \P + s ( \P_* - \P ) ) (\t) = (1 - s) \int_{\T} \d \P(\t) + s \int_{\T} d \P_*(\t) = (1 - s) \cdot 1 + s \cdot 1 = 1 .
	\end{align*}
	We therefore conclude that $(f + s ( f_* + f ), P + s ( \P_* - P )) \in M$ for all $s \in [0, 1]$.
\end{proof}

\subsection{Proof of \Cref{thm:Hamiltonian_Dyamics_Bayes}} \label{proof:Hamiltonian_Dyamics_Bayes}

We introduce three intermediate lemmas whose proofs are deferred to \Cref{sec:appendix_d} for better presentation.
Denote $Z(f, P) := \int_{\T} \exp(- f(\t)) \d P(\t)$.

\begin{lemma} \label{lem:P_bound}
	Suppose Assumptions \ref{asmp:domain_Bayes}--\ref{asmp:metric_Bayes}.
	At each $(f_0, P_0) \in M$, we have a constant $V_0$ s.t.~$\int_\Theta w(\t) \d | \P |(\t) \le V_0$ holds for all $(f, P) \in B_r(f_0, P_0)$ given any radius $r \in (0, 1]$.
\end{lemma}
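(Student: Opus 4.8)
The plan is to reduce the statement to the triangle inequality for the weighted total variation, exploiting that the second coordinate of every point of $M$ is a probability measure. Concretely, I would fix $(f_0, P_0) \in M$ and a radius $r \in (0, 1]$, and note that any $(f, P) \in B_r(f_0, P_0)$ has $P$ a probability measure by \Cref{asmp:domain_Bayes}; in particular its Jordan decomposition is trivial, so $|P| = P$ and $\int_\Theta w(\t) \d |P|(\t) = \int_\Theta w(\t) \d P(\t)$. It therefore suffices to bound this last integral uniformly over the ball.

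Next I would write $P = P_0 + (P - P_0)$ and use the Jordan decomposition $P - P_0 = (P - P_0)^+ - (P - P_0)^-$ of the finite signed measure $P - P_0$. Since $w \ge 1 > 0$, the term $\int_\Theta w(\t) \d (P - P_0)^-(\t)$ is non-negative, so discarding it yields
\begin{align*}
	\int_\Theta w(\t) \d P(\t) &= \int_\Theta w(\t) \d P_0(\t) + \int_\Theta w(\t) \d (P - P_0)^+(\t) - \int_\Theta w(\t) \d (P - P_0)^-(\t) \\
	&\le \int_\Theta w(\t) \d P_0(\t) + \int_\Theta w(\t) \d |P - P_0|(\t) .
\end{align*}
The last integral is exactly $\gamma(P, P_0)$ by the definition in \eqref{eq:wu_tv_metrics}, and the metric $d$ of \Cref{asmp:metric_Bayes} dominates it, so $\gamma(P, P_0) \le d((f, P), (f_0, P_0)) < r \le 1$ for every $(f, P) \in B_r(f_0, P_0)$. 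Combining these bounds gives
\begin{align*}
	\int_\Theta w(\t) \d |P|(\t) = \int_\Theta w(\t) \d P(\t) < 1 + \int_\Theta w(\t) \d P_0(\t) =: V_0 .
\end{align*}
Finally I would observe that $V_0$ is finite, since $P_0 \in E_2$ means $\int_\Theta w(\t) \d |P_0|(\t) < \infty$ by \Cref{def:weighted_space}, and that $V_0$ depends only on $(f_0, P_0)$, not on $r$ or on $(f, P)$, which is exactly the assertion.

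The argument is essentially elementary and I do not anticipate a genuine obstacle. The only two points that require care are (i) that the total-variation triangle inequality is applied with the non-negative integrand $w$, so that the omitted part carries the correct sign, and (ii) that the finiteness of $V_0$ relies on $P_0$ lying in the weighted measure space $E_2$ rather than merely being a finite measure.
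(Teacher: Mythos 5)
Your proof is correct and is essentially the same as the paper's: both reduce the claim to the triangle inequality for the weighted total-variation norm on $E_2$ (the paper invokes the reverse triangle inequality directly on $\|\cdot\|_{E_2}$, while you re-derive the triangle inequality from the Jordan decomposition of $P - P_0$), then bound $\gamma(P, P_0)$ by $d((f,P),(f_0,P_0)) < r \le 1$, arriving at the identical constant $V_0 = \int_\Theta w\, d|P_0| + 1$.
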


\begin{lemma} \label{lem:Z_bound}
	Suppose Assumptions \ref{asmp:domain_Bayes}--\ref{asmp:metric_Bayes}.
	At each $(f_0, P_0) \in M$, we have a constant $W_0$ s.t.~$1 / Z(f, P) \le W_0$ holds for all $(f, P) \in B_r(f_0, P_0)$ given any radius $r \in (0, 1]$. 
\end{lemma}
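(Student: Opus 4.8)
The plan is to obtain a uniform strictly positive lower bound on $Z(f,P) = \int_\Theta \exp(-f(\theta))\,dP(\theta)$ over the ball $B_r(f_0,P_0)$, from which $1/Z(f,P)$ is bounded above. The naive approach of writing $Z(f,P) = Z(f_0,P_0) + (Z(f,P)-Z(f_0,P_0))$ and controlling the increment by the metric only delivers a positive bound when $r$ is smaller than $Z(f_0,P_0)$, which need not hold for all $r \in (0,1]$. The observation that fixes this is to bound $\exp(-f(\theta))$ pointwise from below by $\exp(-(C+r)\,w(\theta))$, where $C$ depends only on $f_0$, and then integrate using Jensen's inequality together with the moment bound of \Cref{lem:P_bound}.

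First I would extract from $(f,P) \in B_r(f_0,P_0)$ and the product form of $d$ in \Cref{asmp:metric_Bayes} the two estimates $\varrho(f,f_0) < r$ and $\gamma(P,P_0) < r$. The first gives $f(\theta) - f_0(\theta) \le \varrho(f,f_0)\,w(\theta) < r\,w(\theta)$ for every $\theta \in \Theta$. Since $f_0 \in E_1$ and $f_0 \ge \log w \ge 0$ by \Cref{asmp:domain_Bayes}, the constant $C := \sup_{\theta \in \Theta} f_0(\theta)/w(\theta)$ is finite, so $f_0(\theta) \le C\,w(\theta)$ and hence $f(\theta) < (C+r)\,w(\theta)$ for all $\theta$. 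Exponentiating yields $\exp(-f(\theta)) \ge \exp(-(C+r)\,w(\theta))$ pointwise, so $Z(f,P) \ge \int_\Theta \exp(-(C+r)\,w(\theta))\,dP(\theta)$.

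Next I would apply Jensen's inequality to the convex map $x \mapsto \exp(-(C+r)x)$ and the probability measure $P$, obtaining $\int_\Theta \exp(-(C+r)\,w(\theta))\,dP(\theta) \ge \exp\!\big(-(C+r)\int_\Theta w(\theta)\,dP(\theta)\big)$, where the exponent is finite because $P \in E_2$. By \Cref{lem:P_bound} there is a constant $V_0$, depending only on $(f_0,P_0)$ and $r$, with $\int_\Theta w(\theta)\,dP(\theta) = \int_\Theta w(\theta)\,d|P|(\theta) \le V_0$, using that $P$ is a probability measure so $|P| = P$. Combining the displays gives $Z(f,P) \ge \exp(-(C+r)V_0) > 0$ for every $(f,P) \in B_r(f_0,P_0)$, so the claim holds with $W_0 := \exp((C+r)V_0)$.

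The routine-but-load-bearing steps are the pointwise estimate $f_0 \le C w$, which is exactly where membership of $f_0$ in the weighted space $E_1$ (\Cref{def:weighted_space}) enters, and the reduction to the moment bound of \Cref{lem:P_bound}. The only genuine obstacle is recognising that comparison with $Z(f_0,P_0)$ is too lossy, and that one must instead push the lower bound on $\exp(-f)$ into the exponent as a multiple of $w$ before integrating; this is what makes the bound uniform over all $r \in (0,1]$ rather than only for $r$ sufficiently small.
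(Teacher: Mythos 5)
Your proof is correct, and it reaches essentially the same place as the paper's, but via a slightly different packaging of the estimates. The paper applies Jensen's inequality directly to $\exp(-f)$ under $P$, giving $1/Z(f,P)\le\exp\bigl(\int_\Theta f\,dP\bigr)$, and then controls $\bigl|\int_\Theta f\,dP\bigr|$ by the H\"older-type pairing bound $|\langle f,P\rangle|\le\|f\|_{E_1}\|P\|_{E_2}$ followed by the triangle inequality and $r\le 1$; it never invokes Lemma \ref{lem:P_bound}. You instead first pass to a pointwise bound $f(\theta)<(C+r)\,w(\theta)$ from the weighted-uniform metric and from $f_0\in E_1$ (note that since $f_0\ge 0$ your $C=\sup_\theta f_0(\theta)/w(\theta)$ equals $\|f_0\|_{E_1}$), then apply Jensen to the fixed convex map $x\mapsto\exp(-(C+r)x)$, and then call Lemma \ref{lem:P_bound} to control $\int_\Theta w\,dP$. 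Substituting $V_0=\|P_0\|_{E_2}+1$ (the value chosen in the paper's proof of Lemma \ref{lem:P_bound}) and $r\le 1$ shows the two bounds agree. The key ingredient --- Jensen's inequality converting the bound on $\log Z$ into a first-moment bound --- is the same; what differs is whether you linearise in $f$ pointwise before integrating or integrate first and then use the dual-pair H\"older inequality. Your detour through Lemma \ref{lem:P_bound} is slightly less self-contained but entirely legitimate, and the preliminary observation that a naive expansion around $Z(f_0,P_0)$ is too lossy is a useful sanity check, even though neither proof needs it.
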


\begin{lemma} \label{lem:psi_lipschitz}
	Suppose Assumptions \ref{asmp:domain_Bayes}--\ref{asmp:metric_Bayes}.
	At each $(f_0, P_0) \in M$, we have a constant $L_0$ s.t.
	\begin{align*}
		\sup_{ \t \in \Theta } \left| \frac{ \exp(- f(\t)) }{ Z(f, P) } - \frac{ \exp(- g(\t)) }{ Z(g, Q) } \right| \le L_0 \, d((f, P), (g,Q)) 
	\end{align*}
	holds for all $(f, P), (g, Q) \in B_r(f_0, P_0)$ given any radius $r \in (0, 1]$. 
\end{lemma}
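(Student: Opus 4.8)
The plan is to reduce the claimed uniform bound to two ingredients: a pointwise Lipschitz estimate for $\t \mapsto \exp(-f(\t))$ and a Lipschitz estimate for the normalising constant $Z(f,P) = \int_\Theta \exp(-f(\t))\,\d P(\t)$, together with the uniform lower bound on $Z$ supplied by \Cref{lem:Z_bound}. Throughout, fix $(f_0,P_0) \in M$ and $r \in (0,1]$, and let $(f,P),(g,Q) \in B_r(f_0,P_0)$. The starting point is the algebraic decomposition
\[
	\frac{\exp(-f(\t))}{Z(f,P)} - \frac{\exp(-g(\t))}{Z(g,Q)}
	= \frac{\exp(-f(\t)) - \exp(-g(\t))}{Z(f,P)}
	+ \exp(-g(\t))\left( \frac{1}{Z(f,P)} - \frac{1}{Z(g,Q)} \right) ,
\]
whose two summands I would bound separately and uniformly in $\t \in \Theta$.

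The key step is the pointwise estimate $|\exp(-f(\t)) - \exp(-g(\t))| \le \varrho(f,g)$ for every $\t \in \Theta$. By the mean value theorem, $|\exp(-a) - \exp(-b)| \le \exp(-\min(a,b))\,|a-b|$ for real $a,b$; applying this with $a = f(\t)$ and $b = g(\t)$ and using that $f(\t) \ge \log w(\t)$ and $g(\t) \ge \log w(\t)$ by \Cref{asmp:domain_Bayes}, so that $\exp(-\min(f(\t),g(\t))) \le \exp(-\log w(\t)) = 1/w(\t)$, yields
\[
	|\exp(-f(\t)) - \exp(-g(\t))| \le \frac{1}{w(\t)}\,|f(\t)-g(\t)| \le \frac{1}{w(\t)}\cdot w(\t)\,\varrho(f,g) = \varrho(f,g) .
\]
This is where I expect the main obstacle: the naive bound $|f(\t)-g(\t)| \le w(\t)\,\varrho(f,g)$ grows with the weight $w(\t)$, and it is precisely the lower bound $f \ge \log w$ built into the domain $M$ that supplies the compensating decay $\exp(-f(\t)) \le 1/w(\t)$ needed to keep the estimate uniform; the remainder of the argument is routine.

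With this in hand, I would control the difference of normalising constants by writing
\[
	Z(f,P) - Z(g,Q) = \int_\Theta \exp(-f(\t))\,\d(P-Q)(\t) + \int_\Theta \big( \exp(-f(\t)) - \exp(-g(\t)) \big)\,\d Q(\t) .
\]
The first term is at most $\int_\Theta \exp(-f(\t))\,\d|P-Q|(\t) \le \int_\Theta w(\t)\,\d|P-Q|(\t) = \gamma(P,Q)$, using $\exp(-f(\t)) \le 1/w(\t) \le w(\t)$ since $w \ge 1$; the second term is at most $\varrho(f,g)\int_\Theta \d Q(\t) = \varrho(f,g)$, by the pointwise estimate and since $Q$ is a probability measure. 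Hence $|Z(f,P) - Z(g,Q)| \le \varrho(f,g) + \gamma(P,Q) = d((f,P),(g,Q))$.

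Finally I would assemble the bound on $B_r(f_0,P_0)$. By \Cref{lem:Z_bound} there is a constant $W_0$, depending only on $(f_0,P_0)$, with $1/Z(f,P) \le W_0$ and $1/Z(g,Q) \le W_0$, hence $Z(f,P)\,Z(g,Q) \ge W_0^{-2}$. The first summand of the decomposition is bounded uniformly in $\t$ by $W_0\,\varrho(f,g)$ via the pointwise estimate. For the second, $\exp(-g(\t)) \le 1$ (since $g \ge \log w \ge 0$ on $M$) and $|1/Z(f,P) - 1/Z(g,Q)| = |Z(g,Q) - Z(f,P)|/(Z(f,P)\,Z(g,Q)) \le W_0^2\,d((f,P),(g,Q))$, so this summand is bounded uniformly in $\t$ by $W_0^2\,d((f,P),(g,Q))$. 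Taking the supremum over $\t \in \Theta$ and summing gives the claimed inequality with $L_0 := W_0 + W_0^2$, a constant depending only on $(f_0,P_0)$ through \Cref{lem:Z_bound}.
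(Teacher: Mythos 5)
Your proof is correct and follows the same overall strategy as the paper: you use the same decomposition of the main expression (add and subtract $\exp(-g(\t))/Z(f,P)$), the same mean value theorem estimate that exploits the built-in lower bound $f \ge \log w$ to obtain the compensating decay $\exp(-f(\t)) \le 1/w(\t)$, and the same invocation of \Cref{lem:Z_bound} for the uniform lower bound on $Z$. The one genuine (if small) difference lies in how you split the normalising-constant difference. The paper writes $Z(f,P) - Z(g,Q) = \int (\exp(-f)-\exp(-g))\,\d P + \int \exp(-g)\,\d(P-Q)$, which pairs the exponential difference with the measure $P$; this forces an appeal to \Cref{lem:P_bound} to bound $\int \d|P| \le V_0$ and leads to the constant $L_0 = W_0 + W_0^2(V_0+1)$. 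You instead write $Z(f,P) - Z(g,Q) = \int \exp(-f)\,\d(P-Q) + \int (\exp(-f)-\exp(-g))\,\d Q$, pairing the exponential difference with the \emph{probability} measure $Q$, whose total mass is exactly $1$. This makes \Cref{lem:P_bound} unnecessary and yields the cleaner constant $L_0 = W_0 + W_0^2$. It is a minor optimisation, but a real one: your version is self-contained modulo only \Cref{lem:Z_bound}, whereas the paper's also consumes \Cref{lem:P_bound}.
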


\noindent
The proofs are each contained in \Cref{sec:appendix_d_1,sec:appendix_d_2,sec:appendix_d_3}.
We move on to the main proof.

\begin{proof}[Proof of \Cref{thm:Hamiltonian_Dyamics_Bayes}]
	The metric space $(M, d)$ is complete by \Cref{prop:M_completeness}.
	\Cref{thm:Hamiltonian_Dyamics_Bayes} follows from \Cref{thm:existence_conservation} and we hence verify Assumptions \ref{asmp:flow_existence_3} and \ref{asmp:conservation_of_energy_3}.
	We begin with verifying \Cref{asmp:flow_existence_3} item (1) and \Cref{asmp:conservation_of_energy_3}.
	Define a norm $\| f \|_{E_1} := \sup_{\t \in \Theta} | f(\t) | / w(\t)$ on $E_1$ and a norm $\| P \|_{E_2} := \int_{\T} w(\t) \d | P |(\t)$ on $E_2$.
	It is clear that $\varrho$ and $\gamma$ in \eqref{eq:wu_tv_metrics} are norm-induced metrics $\varrho(f, g) = \| f - g \|_{E_1}$ and $\gamma(P, Q) = \| P - Q \|_{E_2}$ by definition. 
	Thus, the metric $d$ on $M$ defined in \Cref{asmp:metric_Bayes} has an extension to a norm-induced metric $d_*( (f, P), (g, Q) ) := \| f - g \|_{E_1} + \| P - Q \|_{E_2}$ on $E_1 \times E_2$.
	This confirms \Cref{asmp:flow_existence_3} item (1).
	Furthermore, by definition of $\| \cdot \|_{E_1}$ and $\| \cdot \|_{E_2}$, we have
	\begin{align*}
		| \langle f, \P \rangle | = \left| \int_{\T} f(\t) \d \P(\t) \right| \le \sup_{ \t \in \Theta } \frac{ | f(\t) | }{ w(\t) } \int_{\T} w(\t) \d | \P |(\t) = \| f \|_{E_1} \| P \|_{E_2}
	\end{align*}
	for any $(f, P) \in E_1 \times E_2$, which confirms \Cref{asmp:conservation_of_energy_3}.
	
	We complete the proof by verifying \Cref{asmp:flow_existence_3} items (2).
	By \Cref{thm:Hamiltonian_Bayes_variation}, at each $(f, P) \in M$ we have $(- \partial_2 H(f, \P), \partial_1 H(f, \P) ) = (f_* + f, P_* - P)$ where
	\begin{align*}
		f_*(\t) := \frac{ \exp(-f(\t)) }{ Z(f, P) } \qquad \text{and} \qquad \d \P_* := \frac{ \exp(-f(\t)) }{ Z(f, P)} \d \P(\t) .
	\end{align*}	
	In what follows, let $(f_0, \P_0)$ be an arbitrary point in $M$. 
	Let $r$ be an arbitrary radius in $(0, 1]$.
	Let $(f, \P), (g, \Q)$ be arbitrary points in the open ball $B_r(f_0, \P_0)$.
	Our aim is to find a constant $K_0$ dependent only on $(f_0, P_0)$ s.t.~we have $d_*( \partial H(f, \P), \partial H(g, \Q) ) \le K_0 d((f, \P), (g, \Q))$ for $(f, \P), (g, \Q) \in B_r(f_0, \P_0)$.
	By the triangle inequality, we have
	\begin{align}
		d_*( \partial H(f, \P), \partial H(g, \Q) ) & = \| (f_*, P_*) + (f, - \P) - ( (g_*, Q_*) + (g, - \Q) ) \|_{E} \nonumber \\
		& \le \| (f_*, P_*) - (g_*, Q_*) \|_{E} + \| (f, \P) - ( g, \Q) \|_{E} \nonumber \\
		& = d_*( (f_*, P_*), (g_*, Q_*) ) + d( (f, \P), (g, \Q) ) . \label{eq:locally_lipschitz_P}
	\end{align}
	It hence suffices to find a constant $K_0^*$ s.t.~we have $d_*( (f_*, P_*), (g_*, Q_*)  \le K_0^* d((f, \P), (g, \Q))$ for $(f, \P), (g, \Q) \in B_r(f_0, \P_0)$, by which the proof is completed with $K_0 = K_0^* + 1$.
	
	We upper bound $d_*( (f_*, P_*), (g_*, Q_*) )$ in the remainder.
	By definition of $d_*$, we have
	\begin{align*}
		d_*( (f_*, P_*), (g_*, Q_*) ) & = \underbrace{ \sup_{\t \in \Theta} \frac{| f_*(\t) - g_*(\t) |}{w(\t)} }_{ = \varrho(f_*, g_*) } + \underbrace{ \int_\T w(\t) \d | \P_* - \Q_* | (\t) }_{ = \gamma(P_*, Q_*) } .
	\end{align*}
	Due to $w(\t) \ge 1$ (c.f.~\Cref{def:weighted_space}) and \Cref{lem:psi_lipschitz}, the first term $\varrho(f_*, g_*)$ is bounded by
	\begin{align}
		\varrho(f_*, g_*) & \le \sup_{\t \in \T} \bigg| \underbrace{ \frac{ \exp(- f(\t)) }{ Z(f, P) } - \frac{ \exp(- g(\t)) }{ Z(g, Q) } }_{ =: h_1(\t) } \bigg| \le L_0 \, d((f, P), (g, Q)) . \label{eq:proof_varrho_b1}
	\end{align}
	For the second term $\gamma(P_*, Q_*)$, we first decompose the term $\P_* - \Q_*$ as follows:
	\begin{align*}
		\d (\P_* - \Q_*) (\t) & = \d \P_*(\t) - \frac{ \exp(- f(\t)) }{ Z(f, P) } \d \Q(\t) + \frac{ \exp(- f(\t)) }{ Z(f, P) } \d \Q(\t) - \d \Q_*(\t) \\
		& = \underbrace{ \frac{ \exp(- f(\t)) }{ Z(f, P) } }_{ =: h_2(\t) } \d ( \P - \Q )(\t) + \bigg( \underbrace{ \frac{ \exp(- f(\t)) }{ Z(f, P) } - \frac{ \exp(- g(\t)) }{ Z(g, Q) } }_{ = h_1(\t) } \bigg) \d \Q(\t) .
	\end{align*}
	By the triangle inequality, the second term $\gamma(P_*, Q_*)$ is then bounded as
	\begin{align*}
		\gamma(P_*, Q_*) & \le \underbrace{ \sup_{ \t \in \Theta } | h_2(\t) | }_{=: (*_a)} \underbrace{ \int_\T w(\t) \d | \P - \Q |(\t) }_{= \gamma(\P, \Q)} + \sup_{\t \in \T} | h_1(\t) | \underbrace{ \int_\T w(\t) \d | \Q | (\t) }_{=: (*_b)} .
	\end{align*}
	We bound the term $(*_a) = \sup_{ \t \in \Theta } | \exp(- f(\t)) | / Z(f, P)$.
	For the numerator, we have $\sup_{\t \in \Theta} | \exp(- f(\t)) | \le 1$ because $f(\t) \ge \log w(\t) \ge 0$ by the condition of $M$.
	For the denominator, \Cref{lem:Z_bound} shows that $1 / Z(f, P) \le W_0$.
	This implies that $(*_a) \le W_0$.
	The term $(*_b)$ is bounded as $(*_b) \le V_0$ by \Cref{lem:P_bound}.
	Finally, it follows from \Cref{lem:psi_lipschitz} that $\sup_{\t \in \T} | h_1(\t) | \le L_0 d((f, P), (g, Q))$.
	By applying these upper bounds, we have
	\begin{align}
		\gamma(P_*, Q_*) & \le W_0 \, \gamma(\P, \Q) + V_0 L_0 \, d((f, P), (g, Q)) \nonumber \\
		& \le \left( W_0 + V_0 L_0 \right) \, d((f, P), (g, Q)) \label{eq:proof_gamma_b2}
	\end{align}
	where we applied a trivial inequality $\gamma(P, Q) \le d((f, P), (g, Q))$ for the last inequality, which follows from definition of $d$.
	Combining each bound \eqref{eq:proof_varrho_b1} and \eqref{eq:proof_gamma_b2}, we arrive at
	\begin{align}
		d_*( \partial H(f, \P), \partial H(g, \Q) ) = \varrho(f_*, g_*) + \gamma(P_*, Q_*) \le K_0^* \, d((f, P), (g,Q)) \label{eq:local_lipschitz_Dynamics_Bayes}
	\end{align}
	where we set $K_0^* = L_0 + W_0 + V_0 L_0$ which is dependent only on $(f_0, P_0)$.
	Since this argument holds for the arbitrary $(f_0, P_0) \in M$, we conclude \Cref{asmp:flow_existence_3} item (2). 
\end{proof}


\section{Derivation of The First Variations in Section 4} \label{sec:appendix_c}

\Cref{sec:appendix_c} derives the first variations of the minimum free energy in \Cref{thm:Hamiltonian_Bayes_variation}.
For the function $H$, at each $(f, P) \in \Dom(H)$ denote by $D_1 H_{f, \P}$ the variational derivative of the function $g \mapsto H(g, \P)$ at $g = f$.
The first variation $\partial_1 H(f, \P)$ is a unique element in $E_2$ s.t.
\begin{multline}
	D_1 H_{f, \P}(g) = \lim_{\epsilon \to 0^+} \frac{ H( f + \epsilon g, \P ) - H(f, \P) }{ \epsilon } = \langle g, \partial_1 H(f, \P) \rangle \\
	\text{holds for all $g \in E_1$ s.t.~$f + g \in \Dom(H(\cdot, \P))$}. \label{eq:proof_D1H_C1}
\end{multline}
Similarly, at each $(f, P) \in \Dom(H)$ denote by $D_2 H_{f, \P}$ the variational derivative of the function $Q \mapsto H(f, Q)$ at $Q = P$.
The first variation $\partial_2 H(f, \P)$ is a unique element in $E_1$ s.t.
\begin{multline}
	D_2 H_{f, \P}(Q) = \lim_{\epsilon \to 0^+} \frac{ H( f, \P + \epsilon Q ) - H(f, \P) }{ \epsilon } = \langle \partial_2 H(f, \P), Q \rangle \\
	\text{holds for all $Q \in E_2$ s.t.~$\P + Q \in \Dom(H(f, \cdot))$}. \label{eq:proof_D2H_C2}
\end{multline}
The first variations $\partial_1 H(f, \P)$ and $\partial_2 H(f, \P)$ can be found by deriving an explicit form of the variational derivative $D_1 H_{f, \P}$ and $D_2 H_{f, \P}$ that satisfies \eqref{eq:proof_D1H_C1} and \eqref{eq:proof_D2H_C2}.

First, we introduce two intermediate lemmas: the first one in \Cref{sec:appendix_c_1} and the second one in \Cref{sec:appendix_c_2}.
We then provide the main proof of \Cref{thm:Hamiltonian_Bayes_variation} in \Cref{sec:appendix_c_3}.

\subsection{The First Lemma for \Cref{thm:Hamiltonian_Bayes_variation}} \label{sec:appendix_c_1}

We introduce the first intermediate lemma used in the main proof.
For the set $M$ in \Cref{asmp:domain_Bayes}, define $\Hc: E_1 \times E_2 \to \Re$ by
\begin{align}
	\Hc(f, \P) := \int_\Theta \exp( - f(\t) ) \d \P(\t) \label{eq:Hc_function}
\end{align}
for each $(f, \P) \in M$ and by $\Hc(f, \P) := \infty$ for all $(f, \P) \not \in M$.
It is easy to see that $f \mapsto \Hc(f, \P)$ and $P \mapsto \Hc(f, P)$ are each convex, by convexity of $\exp(- \cdot)$ and linearity of integral.
Denote by $D_1 \Hc_{f, \P}(g)$ the variational derivative of the function $g \mapsto \Hc(g, \P)$ at $g = f$.
Denote by $D_2 \Hc_{f, \P}(Q)$ the variational derivative of the function $Q \mapsto \Hc(f, Q)$ at $Q = P$.

\begin{lemma} \label{lem:Hc_derivative}
	For the function $\Hc: E_1 \times E_2 \to \Re$ in \eqref{eq:Hc_function}, we have
	\begin{align*}
		& D_1 \Hc_{f, \P}(g) = - \int_\T g(\t) \exp( - f(\t) ) \d P(\t) ~~~\text{for all}~~~ g \in E_1 ~\text{s.t.}~ f + g \in \Dom(\Hc(\cdot, \P)) , \\
		& D_2 \Hc_{f, \P}(Q) = \int_\T \exp( - f(\t) ) \d Q(\t) ~~~\text{for all}~~~ Q \in E_2 ~\text{s.t.}~ P + Q \in \Dom(\Hc(f, \cdot)) .
	\end{align*}
\end{lemma}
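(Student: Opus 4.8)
My plan is to differentiate under the integral sign in both parts of the statement, treating the (easy) variation in the measure argument first and then the variation in the function argument. For $Q \in E_2$ with $\P + Q \in \Dom(\Hc(f, \cdot))$, linearity of the integral gives $\Hc(f, \P + \epsilon Q) = \Hc(f, \P) + \epsilon \int_\T \exp(-f(\t)) \, \d Q(\t)$ for every $\epsilon$, so the difference quotient is constant in $\epsilon$ and equals $\int_\T \exp(-f(\t)) \, \d Q(\t)$; the only point to verify is that $\exp(-f(\t))$ is $|Q|$-integrable, which holds because $(f, \P) \in M$ forces $f(\t) \ge \log w(\t) \ge 0$, hence $\exp(-f(\t)) \le 1$, while $Q$ is a finite signed measure. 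This yields the claimed form of $D_2 \Hc_{f, \P}$.

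The variation in $f$ needs a genuine limit-under-the-integral argument. I would fix $(f, \P) \in M = \Dom(\Hc)$ and a direction $g \in E_1$ with $f + g \in \Dom(\Hc(\cdot, \P))$; since $\Dom(\Hc(\cdot, \P))$ is convex we also get $f + \lambda g \in \Dom(\Hc(\cdot, \P))$ for all $\lambda \in [0, 1]$ (this is exactly why the direction set $C(\cdot)$ in \Cref{def:gateux_derivative} is used), so $\exp(-f(\t) - \lambda g(\t))$ is $\P$-integrable for each such $\lambda$. Writing the difference quotient as $\int_\T q_\epsilon(\t) \, \d \P(\t)$ with $q_\epsilon(\t) := \epsilon^{-1}\bigl(\exp(-f(\t) - \epsilon g(\t)) - \exp(-f(\t))\bigr)$, the integrand converges pointwise to $-g(\t)\exp(-f(\t))$ as $\epsilon \to 0^+$. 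To pass to the limit I would exploit that $\epsilon \mapsto \exp(-f(\t) - \epsilon g(\t))$ is convex (a composition of $\exp$ with an affine map), so $q_\epsilon(\t)$ is nondecreasing in $\epsilon$ on $(0, \infty)$ and decreases, as $\epsilon \to 0^+$, to its right derivative $-g(\t)\exp(-f(\t))$. This gives, for $\epsilon \in (0, 1]$, the sandwich $-g(\t)\exp(-f(\t)) \le q_\epsilon(\t) \le \exp(-f(\t) - g(\t)) - \exp(-f(\t))$, hence the single dominating function $\Psi(\t) := |g(\t)|\exp(-f(\t)) + \exp(-f(\t) - g(\t)) + \exp(-f(\t))$, and the dominated convergence theorem then delivers $D_1 \Hc_{f, \P}(g) = -\int_\T g(\t)\exp(-f(\t)) \, \d \P(\t)$.

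The step I expect to be the crux is showing that $\Psi$ is $\P$-integrable, which is where the structure of the domain $M$ is needed. The term $\int_\T \exp(-f(\t) - g(\t)) \, \d \P(\t)$ is finite because $f + g \in \Dom(\Hc(\cdot, \P))$, and $\int_\T \exp(-f(\t)) \, \d \P(\t)$ is finite because $(f, \P) \in M$. For $\int_\T |g(\t)|\exp(-f(\t)) \, \d \P(\t)$ I would bound $|g(\t)|\exp(-f(\t)) \le \bigl(\sup_{\t} |g(\t)|/w(\t)\bigr)\, w(\t)\exp(-f(\t))$ and use $w(\t)\exp(-f(\t)) \le 1$, again from $f(\t) \ge \log w(\t)$ (\Cref{asmp:domain_Bayes}), together with $g \in E_1$ and $\P$ being a probability measure. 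One could instead avoid the explicit dominating function by running monotone convergence on $q_1(\t) - q_\epsilon(\t) \ge 0$, but the dominated convergence route is cleaner; no other step is more than routine.
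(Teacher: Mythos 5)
Your proof is correct, and the $D_2$ part is identical in spirit to the paper's (both use linearity and the bound $\exp(-f)\le 1$). For the $D_1$ part you use the same overall strategy as the paper — pointwise convergence of the difference quotient plus the dominated convergence theorem — but you produce the dominating function by a different mechanism. The paper invokes the mean value theorem for $\exp(-\cdot)$ on $[0,\infty)$: since $f(\t)$ and $f(\t)+\epsilon g(\t)$ both stay nonnegative along the segment (by the domain condition $f\ge\log w\ge 0$ and convexity of $\Dom(\Hc(\cdot,P))$), one gets $|\exp(-f-\epsilon g)-\exp(-f)|\le\epsilon|g|$, hence the single, tight dominator $|g|$, which is $P$-integrable directly from $g\in E_1$ and $P\in E_2$. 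You instead exploit convexity of $\epsilon\mapsto\exp(-f(\t)-\epsilon g(\t))$, so the difference quotient $q_\epsilon$ is monotone in $\epsilon$ and is sandwiched between its right derivative at $0$ and its value at $\epsilon=1$; this yields the larger dominator $\Psi = |g|\exp(-f) + \exp(-f-g) + \exp(-f)$, whose integrability requires a small additional check ($f+g\in\Dom(\Hc(\cdot,P))$ and $w\exp(-f)\le 1$), all of which you carry out correctly. The paper's route is shorter and its dominator is uniform in the direction $g$ without reference to the endpoint, while your monotonicity route is a standard convex-analytic device that would also survive in settings where the mean value theorem bound is unavailable; here the two are essentially interchangeable, and your observation that monotone convergence on $q_1-q_\epsilon\ge 0$ would work too is accurate.
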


\begin{proof}[Proof of \Cref{lem:Hc_derivative}]
	We begin with deriving $D_1 \Hc_{f, \P}(g)$.
	Suppose $\epsilon \to 0^+$ is a sequence in $[0, 1]$ without loss of generality.
	By definition of the variational derivative,
	\begin{align}
		D_1 \Hc_{f, \P}(g) & = \lim_{\epsilon \to 0^+} \frac{ \Hc( f + \epsilon g, P ) - \Hc( f, P ) }{ \epsilon } \label{eq:Hc_derivative_def} \\
		& = \lim_{\epsilon \to 0^+} \int_{\T} \underbrace{ \frac{ \exp( - ( f(\t) + \epsilon g(\t) ) ) - \exp( - f(\t) ) }{ \epsilon } }_{ =: h_\epsilon(\t) } \d P(\t) . \nonumber
	\end{align}
	Our aim is to apply the dominated convergence theorem for the sequence $h_\epsilon$ indexed by $\epsilon$.
	To do so, we need to show that (i) $h_\epsilon(\t)$ is a pointwise convergent sequence and (ii) $| h_\epsilon(\t) |$ is bounded by some $P$-integrable function for all $\epsilon \in [0, 1]$.
	If item (i) and (ii) hold, we have
	\begin{align*}
		D_1 \Hc_{f, \P}(g) = \lim_{\epsilon \to 0^+} \int_{\T} h_\epsilon(\t) \d \P(\t) = \int_{\T} \lim_{\epsilon \to 0^+} h_\epsilon(\t) \d \P(\t)
	\end{align*}
	by the dominated convergence theorem.
	We show item (i) and (ii) below.
	
	\vspace{5pt}
	\noindent
	\underline{\textit{Item (i)}}: 
	At each fixed $\theta$, we set $\delta := \epsilon g(\t)$ to see that
	\begin{align}
		\lim_{\epsilon \to 0^+} h_\epsilon(\t) & = \left( \lim_{\delta \to 0} \frac{ \exp( - ( f(\t) + \delta ) ) - \exp( - f(\t) ) }{ \delta } \right) g(\t) = - \exp( - f(\t) ) g(\t) \label{eq:h_epsilon_limit}
	\end{align}
	where the last equality holds because the limit of $\delta$ is nothing but the derivative of $\exp( - x )$ at $x = f(\t)$.
	Therefore, $h_\epsilon(\t)$ is a pointwise convergent sequence to $- \exp( - f(\t) ) g(\t)$.
	
	\vspace{5pt}
	\noindent
	\underline{\textit{Item (ii)}}: 
	First of all, we have $f + \epsilon \, g \in \Dom(\Hc(\cdot, \P))$ for all $\epsilon \in [0, 1]$.
	This is because we can rewrite $f + \epsilon \, g$ as a convex combination $(1 - \epsilon) f + \epsilon ( f + g )$ in the convex domain $\Dom(\Hc(\cdot, \P))$, where we recall that $g$ is an element s.t.~$f + g \in \Dom(\Hc(\cdot, \P))$.
	By the condition of the domain of $\Hc$ in \Cref{asmp:domain_Bayes}, we have $f(\t) + \epsilon g(\t) \ge \log w(\t) \ge 0$ for all $\t \in \T$.
	By the mean value theorem, the following bound holds for any scalar $0 \le a, b < \infty$:
	\begin{align*}
		| \exp(- a) - \exp(- b) | \le | a - b | .
	\end{align*}
	At each $\t \in \T$, we apply this bound for $a = f(\t) + \epsilon g(\t)$ and $b = f(\t)$ to see that
	\begin{align*}
		| h_\epsilon(\t) | = \frac{| \exp( - ( f(\t) + \epsilon g(\t) ) ) - \exp( - f(\t) ) |}{ \epsilon } \le | g(\t) | .
	\end{align*}
	The function $| g(\cdot) |$ is integrable with respect to $P$ by definition of $E_1$ and $E_2$.
	We hence established that $| h_\epsilon(\t) |$ is bounded by some $P$-integrable function for all $\epsilon \in [0, 1]$.
	
	\vspace{5pt}
	Therefore, by the dominated convergence theorem and plugging \eqref{eq:h_epsilon_limit} in, we arrive at
	\begin{align*}
		D_1 \Hc_{f, \P}(g) = \int_{\T} \lim_{\epsilon \to 0^+} h_\epsilon(\t) \d \P(\t) = - \int_{\T} g(\t) \exp( - f(\t) ) \d \P(\t) .
	\end{align*}	
	We complete the proof by deriving $D_2 \Hc_{f, \P}(Q)$.
	By definition of the variational derivative and linearity of integral, the variational derivative $D_2 \Hc_{f, \P}(Q)$ immediately satisfies
	\begin{align*}
		\lim_{\epsilon \to 0^+} \frac{ \int_\Theta \exp( - f(\t) ) \d (\P + \epsilon Q)(\t) - \int_\Theta \exp( - f(\t) ) \d \P(\t) }{ \epsilon } = \int_\Theta \exp( - f(\t) ) \d Q(\t) 
	\end{align*}
	which completes the proof.
\end{proof}

\subsection{The Second Lemma for \Cref{thm:Hamiltonian_Bayes_variation}} \label{sec:appendix_c_2}

We introduce the second intermediate lemma used in the main proof.
For the set $M$ in \Cref{asmp:domain_Bayes}, define $G: E_1 \times E_2 \to \Re$ by
\begin{align}
	\Hb(f, \P) := - \log\left( \Hc(f, P) \right) = - \log\left( \int_\Theta \exp( - f(\t) ) \d \P(\t) \right) \label{eq:Hb_function}
\end{align}
for each $(f, \P) \in M$ and by $\Hb(f, \P) := \infty$ for all $(f, \P) \not \in M$.
By \Cref{lem:Function_G_saddle}, $\Hb$ is saddle.
Denote by $D_1 \Hb_{f, \P}(g)$ the variational derivative of the function $g \mapsto \Hb(g, \P)$ at $g = f$.
Denote by $D_2 \Hb_{f, \P}(Q)$ the variational derivative of the function $Q \mapsto \Hb(f, Q)$ at $Q = P$.

\begin{lemma} \label{lem:Hb_derivative}
	For the function $\Hb: E_1 \times E_2 \to \Re$ in \eqref{eq:Hb_function}, we have
	\begin{align*}
		& D_1 \Hb_{f, \P}(g) = \langle g, P_* \rangle ~~~\text{for all}~~~ g \in E_1 ~\text{s.t.}~ f + g \in \Dom(\Hb(\cdot, \P)) , \\
		& D_2 \Hb_{f, \P}(Q) = \langle - f_*, Q \rangle ~~~\text{for all}~~~ Q \in E_2 ~\text{s.t.}~  P + Q \in \Dom(\Hb(f, \cdot))
	\end{align*}
	where $f_* \in E_1$ and $P_* \in E_2$ are the function and measure defined in \Cref{thm:Hamiltonian_Bayes_variation}.
\end{lemma}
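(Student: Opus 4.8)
The plan is to obtain both variational derivatives of $\Hb$ from those of $\Hc$ established in \Cref{lem:Hc_derivative}, using that $\Hb = -\log\circ\,\Hc$ together with the chain rule and the strict positivity of $\Hc$ on $M$. First I would settle the book-keeping on domains: outside $M$ both $\Hb$ and $\Hc$ are set to $\infty$, while on $M$ the integral $\Hc(f,\P) = \int_\T \exp(-f(\t))\,\d\P(\t)$ is strictly positive and finite (since $0 < \exp(-f(\t)) \le 1$ by $f(\t) \ge \log w(\t) \ge 0$ and $\P$ is a probability measure), so $-\log\Hc(f,\P)$ is finite there; hence $\Dom(\Hb) = M = \Dom(\Hc)$, and in particular $\Dom(\Hb(\cdot,\P)) = \Dom(\Hc(\cdot,\P))$ and $\Dom(\Hb(f,\cdot)) = \Dom(\Hc(f,\cdot))$. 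Thus the admissible directions for $\Hb$ in each slot coincide with those for $\Hc$, so \Cref{lem:Hc_derivative} applies to exactly the directions appearing in the statement. I would also note that the pairings $\langle g, P_*\rangle$ and $\langle -f_*, Q\rangle$ make sense, because $f_*$ is bounded by $\Hc(f,\P)^{-1}$ and hence lies in $E_1$, and $\int_\T w(\t)\,\d P_*(\t) = \Hc(f,\P)^{-1}\int_\T w(\t)\exp(-f(\t))\,\d\P(\t) \le \Hc(f,\P)^{-1} < \infty$ (using $w(\t)\exp(-f(\t)) \le 1$), so $P_* \in E_2$.

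For the first slot, fix $(f,\P)\in M$ and $g \in E_1$ with $f+g \in \Dom(\Hc(\cdot,\P))$. Since $\Dom(\Hc(\cdot,\P))$ is convex, the scalar function $\varphi(\epsilon) := \Hc(f+\epsilon g,\P)$ is well-defined, strictly positive and finite for $\epsilon\in[0,1]$, and by \Cref{lem:Hc_derivative} it has right derivative $\varphi_+'(0) = D_1\Hc_{f,\P}(g) = -\int_\T g(\t)\exp(-f(\t))\,\d\P(\t)$ at $0$ (existence of which also gives right-continuity, $\varphi(\epsilon)\to\varphi(0)$). Writing $\Hb(f+\epsilon g,\P) = -\log\varphi(\epsilon)$ and decomposing the difference quotient as $\tfrac{-\log\varphi(\epsilon)+\log\varphi(0)}{\varphi(\epsilon)-\varphi(0)}\cdot\tfrac{\varphi(\epsilon)-\varphi(0)}{\epsilon}$ whenever $\varphi(\epsilon)\ne\varphi(0)$ — the first factor tending to $-1/\Hc(f,\P)$ by differentiability of $\log$ at $\varphi(0)=\Hc(f,\P)>0$ — yields the one-sided chain rule
\[
D_1\Hb_{f,\P}(g) = -\frac{1}{\Hc(f,\P)}\,D_1\Hc_{f,\P}(g) = \frac{1}{\Hc(f,\P)}\int_\T g(\t)\exp(-f(\t))\,\d\P(\t) = \int_\T g(\t)\,\d P_*(\t) = \langle g, P_*\rangle ,
\]
using $\d P_*(\t) = \Hc(f,\P)^{-1}\exp(-f(\t))\,\d\P(\t)$. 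The second slot is identical with $\psi(\epsilon) := \Hc(f,\P+\epsilon Q)$ for $Q\in E_2$ with $\P+Q\in\Dom(\Hc(f,\cdot))$: \Cref{lem:Hc_derivative} gives $\psi_+'(0) = D_2\Hc_{f,\P}(Q) = \int_\T\exp(-f(\t))\,\d Q(\t)$, and the same computation gives $D_2\Hb_{f,\P}(Q) = -\Hc(f,\P)^{-1}\int_\T\exp(-f(\t))\,\d Q(\t) = -\int_\T f_*(\t)\,\d Q(\t) = \langle -f_*, Q\rangle$, since $f_*(\t) = \Hc(f,\P)^{-1}\exp(-f(\t))$.

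The only step requiring care is the one-sided chain rule, specifically the degenerate case where $\varphi(\epsilon)=\varphi(0)$ (resp. $\psi(\epsilon)=\psi(0)$) along a sequence $\epsilon\to0^+$: there $\tfrac{\varphi(\epsilon)-\varphi(0)}{\epsilon}$ vanishes along that sequence, forcing $D_1\Hc_{f,\P}(g)=0$, and the asserted identity $D_1\Hb_{f,\P}(g)=0$ holds trivially, so a short case split closes this gap. Everything else is a direct substitution of the formulas from \Cref{lem:Hc_derivative}, so I expect no further obstacles.
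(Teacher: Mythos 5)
Your proof is correct and follows essentially the same chain-rule decomposition as the paper's: factor the difference quotient of $\Hb = -\log\circ\Hc$ through the intermediate increment $\delta = \Hc(f+\epsilon g,\P)-\Hc(f,\P)$ (resp.\ $\Hc(f,\P+\epsilon Q)-\Hc(f,\P)$), use differentiability of $\log$ at $\Hc(f,\P)>0$ for the first factor, and invoke \Cref{lem:Hc_derivative} for the second. The degenerate case you flag, where that intermediate increment vanishes along a sequence $\epsilon\to 0^+$, is in fact not addressed in the paper (the paper divides by $\delta$ without excluding $\delta=0$), so your brief case split is a small improvement rather than a deviation; to make it fully airtight you would also want to spell out why $D_1\Hb_{f,\P}(g)=0$ in that case, e.g.\ via the mean value theorem bound $|\log\varphi(\epsilon)-\log\varphi(0)|\le\sup_{x\in[\varphi(0)\wedge\varphi(\epsilon),\,\varphi(0)\vee\varphi(\epsilon)]}x^{-1}\,|\varphi(\epsilon)-\varphi(0)|$ together with right-continuity of $\varphi$ at $0$.
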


\begin{proof}[Proof of \Cref{lem:Hb_derivative}]
	First, we derive $D_1 \Hb_{f, \P}(g)$.
	By definition of the variational derivative and by rearranging the term, we have
	\begin{align*}
		D_1 \Hb_{f, \P}(g) & = - \lim_{\epsilon \to 0^+} \frac{ \log( \Hc(f + \epsilon g, P) ) - \log( \Hc(f, P) ) }{ \epsilon } \\
		& = - \lim_{\epsilon \to 0^+} \frac{ \log( \Hc(f + \epsilon g, P) ) - \log( \Hc(f, P) ) }{ \Hc(f + \epsilon g, P) - \Hc(f, P) } \cdot \frac{ \Hc(f + \epsilon g, P) - \Hc(f, P) }{\epsilon} 
	\end{align*}
	Set $\delta := \Hc(f + \epsilon g, P) - \Hc(f, P)$.
	Notice that $\lim_{\epsilon \to 0^+} ( \delta / \epsilon )$ corresponds to of the variational derivative $D_1 \Hc_{f, P}(g)$ as in \eqref{eq:Hc_derivative_def}.
	We have $\delta \to 0$ in the limit $\epsilon \to 0^+$ because $\lim_{\epsilon \to 0^+} ( \delta / \epsilon )$ exists from \Cref{lem:Hc_derivative}.
	Therefore, the above limit can be further written as
	\begin{align*}
		D_1 \Hb_{f, \P}(g) & = - \underbrace{ \lim_{\delta \to 0} \frac{ \log( \Hc(f, P) + \delta ) - \log( \Hc(f, P) ) }{ \delta } }_{(*_a)} \cdot \underbrace{ \lim_{\epsilon \to 0^+} \frac{ \Hc(f + \epsilon g, P) - \Hc(f, P) }{\epsilon} }_{ = D_1 \Hc_{f, P}(g) } .
	\end{align*}
	The limit $(*_a)$ is nothing but the derivative of $\log(x)$ at $x = \Hc(f, P)$, meaning that we have $(*_a) = 1 / \Hc(f, P)$.
	It then follows from \Cref{lem:Hc_derivative} that we have
	\begin{align*}
		D_1 \Hb_{f, \P}(g) & = - \frac{1}{\Hc(f, P)} \cdot D_1 \Hc_{f, P}(g) = \int_{\T} g(\t) \frac{ \exp(- f(\t)) }{ \int_{\T} \exp(- f(\t)) \d P(\t) } \d P(\t) .
	\end{align*}
	By definition of the duality $\langle \cdot, \cdot \rangle$, this establishes that $D_1 \Hb_{f, \P}(g) = \langle g, P_* \rangle$.
	Here it is clear that $P_* \in E_2$ since $\exp(- f(\t))$ is bounded as $\exp(- f(\t)) \le 1$ by the condition of $M$.
	
	Second, we derive $D_2 \Hb_{f, \P}(Q)$.
	By definition of the variational derivative, we have
	\begin{align*}
		D_2 \Hb_{f, \P}(Q) & = - \lim_{\epsilon \to 0^+} \frac{ \log( \Hc(f, P + \epsilon Q) ) - \log( \Hc(f, P) ) }{ \epsilon } \\
		& = - \lim_{\epsilon \to 0^+} \frac{ \log( \Hc(f, P + \epsilon Q) ) - \log( \Hc(f, P) ) }{ \Hc(f, P + \epsilon Q) - \Hc(f, P) } \cdot \frac{ \Hc(f, P + \epsilon Q) - \Hc(f, P) }{ \epsilon } .
	\end{align*}
	By abuse of notation, set $\delta := \Hc(f, P + \epsilon Q) - \Hc(f, P)$.
	Notice again that $\lim_{\epsilon \to 0^+} ( \delta / \epsilon )$ corresponds to of the variational derivative $D_2 \Hc_{f, P}(Q)$.
	We have $\delta \to 0$ as $\epsilon \to 0^+$ because $\lim_{\epsilon \to 0^+} ( \delta / \epsilon )$ exists from \Cref{lem:Hc_derivative}.
	Therefore, the above limit can be written as
	\begin{align*}
		D_2 \Hb_{f, \P}(Q) & = - \underbrace{ \lim_{\delta \to 0} \frac{ \log( \Hc(f, P) + \delta ) - \log( \Hc(f, P) ) }{ \delta } }_{ =:(*_b) } \cdot \underbrace{ \lim_{\epsilon \to 0^+} \frac{ \Hc(f, P + \epsilon Q) - \Hc(f, P) }{ \epsilon } }_{ = D_2 \Hc_{f, P}(Q) } .
	\end{align*}
	The term $(*_b)$ equals to the term $(*_a)$ above, meaning that we have $(*_b) = 1 / \Hc(f, P)$.
	Since $D_2 \Hc_{f, P}(Q) = \int_{\T} \exp(- f(\t)) \d Q(\t)$ by \Cref{lem:Hc_derivative}, we have
	\begin{align*}
		D_2 \Hb_{f, \P}(Q) & = \frac{- 1}{\Hc(f, P)} \cdot \int_{\T} \exp(- f(\t)) \d Q(\t) = \int_{\T} \frac{ - \exp(- f(\t)) }{ \int_{\T} \exp(- f(\t)) \d P(\t) } \d Q(\t) .
	\end{align*}
	By definition of the duality $\langle \cdot, \cdot \rangle$, this establishes that $D_2 \Hb_{f, \P}(Q) = \langle - f_*, Q \rangle$.
	Here it is clear that $f_* \in E_1$ since $\exp(- f(\t))$ is bounded as $\exp(- f(\t)) \le 1$ as aforementioned.
\end{proof}

\subsection{The Main Proof of \Cref{thm:Hamiltonian_Bayes_variation}} \label{sec:appendix_c_3}

Finally, we provide the main proof of \Cref{thm:Hamiltonian_Bayes_variation}.
The proof uses the second lemma in \Cref{sec:appendix_c_2}, where the first lemma in \Cref{sec:appendix_c_1} was implicitly used as an intermediate result for the second lemma in \Cref{sec:appendix_c_2}.

\begin{proof}[Proof of \Cref{thm:Hamiltonian_Bayes_variation}]
	As shown in \eqref{eq:H_Hs_equality}, we have $H(f, P) = \Hb(f, P) - \langle f, P \rangle$ for the function $\Hb$ in \eqref{eq:Hb_function}.
	First, we derive the variational derivative $D_1 H_{f, \P}(g)$.
	We have
	\begin{align*}
		D_1 H_{f, \P}(g) & = \lim_{\epsilon \to 0^+} \frac{ G( f + \epsilon g, \P ) - G(f, \P) }{ \epsilon } - \lim_{\epsilon \to 0^+} \frac{ \langle f + \epsilon g, \P \rangle - \langle f, \P \rangle }{ \epsilon } .
	\end{align*}
	Notice that the second limit immediately equals to $\langle g, P \rangle$ because the function $f \mapsto \langle f, P \rangle$ is linear due to the bilinearlity of $\langle \cdot, \cdot \rangle$.
	The first limit is nothing but the variational derivative $D_1 \Hb_{f, \P}(g)$ of the function $\Hb$ in \eqref{eq:Hb_function}.
	It hence follows from \Cref{lem:Hb_derivative} that
	\begin{align*}
		D_1 H_{f, \P}(g) & = \langle g, P_* \rangle - \langle g, P \rangle = \langle g, P_* - P \rangle .
	\end{align*}
	This establishes the first variation $\partial_1 H(f, \P) = P_* - P$ by definition of the first variation.
	
	Second, we derive the variational derivative $D_2 H_{f, \P}(g)$.
	We have
	\begin{align*}
		D_2 H_{f, \P}(Q) & = \lim_{\epsilon \to 0^+} \frac{ G( f, \P + \epsilon Q ) - G(f, \P) }{ \epsilon } - \lim_{\epsilon \to 0^+} \frac{ \langle f, \P + \epsilon Q \rangle - \langle f, \P \rangle }{ \epsilon } .
	\end{align*}
	Again, notice that the second limit immediately equals to $\langle f, Q \rangle$ because the function $P \mapsto \langle f, P \rangle$ is linear due to the bilinearlity of $\langle \cdot, \cdot \rangle$.
	The first limit is nothing but the variational derivative $D_2 \Hb_{f, \P}(g)$ of the function $\Hb$ in \eqref{eq:Hb_function}.
	It hence follows from \Cref{lem:Hb_derivative} that
	\begin{align*}
		D_2 H_{f, \P}(Q) & = \langle - f_*, Q \rangle - \langle f, Q \rangle = \langle - f_* - f, Q \rangle .
	\end{align*}
	This establishes the first variation $\partial_2 H(f, \P) = - f_* - f$ by definition of the first variation.
	It is equivalent to that $- \partial_2 H(f, \P) = f_* + f$, which completes the proof.
\end{proof}


\section{Intermediate Lemmas for Theorem 4} \label{sec:appendix_d}

This section contains proofs of the three intermediate lemmas, Lemmas \ref{lem:P_bound}--\ref{lem:psi_lipschitz}, used in \Cref{proof:Hamiltonian_Dyamics_Bayes}.
Note that  $Z(f, \P) := \int_{\T} \exp(- f(\t)) \d P(\t)$ throughout \Cref{sec:appendix_d}.

\subsection{The Proof of \Cref{lem:P_bound}} \label{sec:appendix_d_1}

\begin{proof}[Proof of \Cref{lem:P_bound}]
	Define a norm $\| P \|_{E_2} := \int_\Theta w(\t) \d | \P |(\t)$ on $E_2$.
	By the reverse triangle inequality, we have $\| P \|_{E_2} - \| P_0 \|_{E_2} \le \| P - P_0 \|_{E_2} = \gamma(P, P_0)$.
	We apply a trivial inequality $\gamma(P, P_0) \le d((f, P), (f_0, P_0))$, which follows from definition of $d$, to see that
	\begin{align*}
		\int_\Theta w(\t) \d | \P |(\t) - \int_\Theta w(\t) \d | \P_0 |(\t) = \| P \|_{E_2} - \| P_0 \|_{E_2} \le d((f, P), (f_0, P_0)) < r
	\end{align*}
	where the last inequality follows from $(f, P) \in B_r(f_0, P_0)$.
	This implies $\int_\Theta w(\t) \d | \P |(\t) \le \int_\Theta w(\t) \d | \P_0 |(\t) + r$.
	Since $r \le 1$, setting $V_0 = \int_\Theta w(\t) \d | \P_0 |(\t) + 1$ completes the proof.
\end{proof}

\subsection{The Proof of \Cref{lem:Z_bound}} \label{sec:appendix_d_2}

\begin{proof}[Proof of \Cref{lem:Z_bound}]
	Since $x \mapsto \exp(- x)$ is convex, Jensen's inequality implies that
	\begin{align*}
		Z(f, P) = \int_{\T} \exp( - f(\t) ) \d \P(\t) \ge \exp\left( - \int_{\T} f(\t) \d \P(\t) \right) 
	\end{align*}
	This lower bound is equivalent to an upper bound $1 / Z(f, P) \le \exp( \int_{\T} f(\t) \d \P(\t) )$. 
	Define a norm $\| f \|_{E_1} := \sup_{ \t \in \Theta } | f(\t) | / w(\t)$ and a norm $\| P \|_{E_2} := \int_{\T} w(\t) \d | P |(\t)$.
	We upper bound the term $(*) := - \int_{\T} f(\t) \d \P(\t)$ inside the exponential function as follows:
	\begin{align*}
		| (*) | & \le \sup_{ \t \in \Theta } \frac{ | f(\t) | }{ w(\t) } \int_{\T} w(\t) \d | P |(\t) = \| f \|_{E_1} \| P \|_{E_2} .
	\end{align*}
	It follows from the triangle inequality that
	\begin{align*}
		| (*) | & \le ( \| f - f_0 \|_{E_1} + \| f_0 \|_{E_1} ) ( \| P - P_0 \|_{E_2} + \| P_0 \|_{E_2}) .
	\end{align*}
	We have trivial bounds $\| f - f_0 \|_{E_1} \le d((f, P), (f_0, P_0))$ and $\| P - P_0 \|_{E_2} \le d((f, P), (f_0, P_0))$ by definition of $d$.
	Since $(f, P) \in B_r(f_0, P_0)$ and $r \le 1$, we further have $\| f - f_0 \|_{E_1} \le 1$ and $\| P - P_0 \|_{E_2} \le 1$.
	Plugging this in the upper bound of $1 / Z(f, P)$, we establish that
	\begin{align*}
		\frac{1}{Z(f, P)} \le \exp\left( (1 + \| f_0 \|_{E_1}) (1 + \| P_0 \|_{E_2}) \right) .
	\end{align*}
    Setting the right-hand side to $W_0$, which depends only on $(f_0, P_0)$, completes the proof.
\end{proof}

\subsection{The Proof of \Cref{lem:psi_lipschitz}} \label{sec:appendix_d_3}

\begin{proof}[Proof of \Cref{lem:psi_lipschitz}]
	Let $(f, P), (g, Q)$ be arbitrary points in $B_r(f_0, P_0)$.
	Denote $(*) := \sup_{\t \in \Theta} | \exp(- f(\t)) / Z(f, \P) - \exp(- g(\t)) / Z(g, \Q) |$.
	By the triangle inequality,
	\begin{align*}
		(*) & = \sup_{\t \in \Theta} \left| \frac{ \exp(- f(\t)) }{ Z(f, \P) } - \frac{ \exp(- g(\t)) }{ Z(f, \P) } + \frac{ \exp(- g(\t)) }{ Z(f, \P) } - \frac{ \exp(- g(\t)) }{ Z(g, \Q) } \right| \\
		& \le \frac{ 1 }{ Z(f, \P) } \sup_{\t \in \Theta} | \exp(- f(\t)) - \exp(- g(\t)) | + \sup_{\t \in \Theta} | \exp(- g(\t)) | \left| \frac{ 1 }{ Z(f, \P) } - \frac{ 1 }{ Z(g, \Q) } \right| \\
		& = \frac{ 1 }{ Z(f, \P) } \underbrace{ \sup_{\t \in \Theta} | \exp(- f(\t)) - \exp(- g(\t)) | }_{=:(*_1)} + \frac{ \sup_{\t \in \Theta} | \exp(g(\t)) | }{ Z(f, P) Z(g, Q) } \underbrace{ \vphantom{\sup_{\t \in \Theta}} \left| Z(f, \P) - Z(g, \Q) \right| }_{=:(*_2)} .
	\end{align*}
	Denote $(*_a) := \sup_{\t \in \Theta} | \exp(- g(\t)) |$.
	We have $(*_a) \le 1$ since $g(\t) \ge \log w(\t) \ge 0$ by the condition of $M$.
	\Cref{lem:Z_bound} shows that $1 / Z(f, P) \le W_0$ for all $(f, P) \in B_r(f_0, P_0)$.
	Therefore
	\begin{align}
		(*) & \le W_0 \, (*_1) + W_0^2 \, (*_2) . \label{eq:proof_psi_bound}
	\end{align}
	We upper bound each term $(*_1)$ and $(*_2)$ in the remainder.
	
	\vspace{5pt}
	\noindent
	\underline{\textit{Term $(*_1)$}}:
	The mean value theorem implies $| \exp(a) - \exp(b) | \le \max( \exp(a), \exp(b) ) | a - b |$ for any scalars $a, b \in \R$.
	At each $\t \in \T$, we apply it for $a = - f(\t)$ and $b = - g(\t)$ to see
	\begin{align*}
		| \exp(- f(\t)) - \exp(- g(\t)) | \le \max( \exp(- f(\t)), \exp(- g(\t)) ) | f(\t) - g(\t) | .
	\end{align*}
	We have $\exp( - f(\t) ) \le \exp( - \log w(\t) ) = 1 / w(\t)$ by the condition of $M$, and the same holds for $\exp( - g(\t) )$.
	We apply $\max( \exp(- f(\t)), \exp(- g(\t)) ) \le 1 / w(\t)$ to see that
	\begin{align*}
		| \exp(- f(\t)) - \exp(- g(\t)) | \le \frac{ | f(\t) - g(\t) | }{ w(\t) } .
	\end{align*}
	It then follows from taking the supremum of both the sides that
	\begin{align*}
		(*_1) = \sup_{ \t \in \Theta } | \exp(- f(\t)) - \exp(- g(\t)) | \le \sup_{ \t \in \Theta } \frac{ | f(\t) - g(\t) | }{ w(\t) } \le d((f, P), (g, Q)) 
	\end{align*}
	where the last inequality is trivial from definition of $d$.
	
	\vspace{5pt}
	\noindent
	\underline{\textit{Term $(*_2)$}}:
	By the triangle inequality, we decompose the term $(*_2)$ as follows:
	\begin{align*}
		(*_2) & = \left| Z(f, P) - \int_\Theta \exp(- g(\t)) \d P (\t) + \int_\Theta \exp(- g(\t)) \d P (\t) - Z(g, Q) \right| \\
		& \le \underbrace{ \sup_{ \t \in \Theta } | \exp( - f(\t) ) - \exp( - g(\t) ) | }_{= (*_1)} \underbrace{ \int_\Theta \d | \P |(\t) }_{ =: (*_b) } + \underbrace{ \sup_{ \t \in \Theta } | \exp( - g(\t) ) | }_{ =: (*_a) } \underbrace{ \int_\Theta \d | \P - \Q |(\t) }_{= (*_c)} 
	\end{align*}
	We have $(*_1) \le d((f, P), (g, Q))$ by the above argument.
	We have $(*_a) \le 1$ as aforementioned.
	Since $w(\t) \ge 1$, we have $(*_b) \le \int_{\T} w(\t) d | P |(\t)$ and $(*_c) \le \int_{\T} w(\t) d | P - Q |(\t)$.
	We further apply \Cref{lem:P_bound} to see that $(*_b) \le V_0$.
	By these inequalities, we have
	\begin{align*}
		(*_2) & \le V_0 \, d((f, P), (g, Q)) + \int_{\T} w(\t) d | P - Q |(\t) \le (V_0 + 1) \, d((f, P), (g, Q)) 
	\end{align*}
	where the last inequality is trivial from definition of $d$.
	
	\vspace{5pt}
	By plugging the upper bounds of each term $(*_1)$ and $(*_2)$ in \eqref{eq:proof_psi_bound}, we have
	\begin{align*}
		(*) & \le W_0 \, d((f, P), (g, Q)) + W_0^2 \, (V_0 + 1) \, d((f, P), (g, Q)) .
	\end{align*}
	Setting $L_0 = W_0 + W_0^2 (V_0 + 1)$ completes the proof.
\end{proof}

\end{document}